\newcommand{\wt}[1]{\widetilde{#1}}
\newcommand{\wh}[1]{\widehat{#1}}
\newcommand{\F}{\mathbb{F}}
\renewcommand{\hat}{\widehat}
\newcommand{\pk}{k^\prime}
\newcommand{\pt}{t^\prime}
\theoremstyle{plain}
\newtheorem{theorem}{Theorem}[section]
\newtheorem*{theorem*}{Theorem}
\newtheorem{definition}[theorem]{Definition}
\newtheorem{proposition}[theorem]{Proposition}
\newtheorem{lemma}[theorem]{Lemma}
\newtheorem{fact}[theorem]{Fact}
\newtheorem*{remark*}{Remark}
\newtheorem*{question*}{Question}
\newcommand{\eps}{\varepsilon}
\newcommand{\labs}{\left\lvert}
\newcommand{\rabs}{\right\rvert}
\newcommand{\abs}[1]{\labs #1 \rabs}
\renewcommand{\R}{\mathbb{R}}
\newcommand{\majorana}{\gamma}
\newcommand{\newmajorana}{\eta}
\newcommand{\newnewmajorana}{\xi}
\newcommand{\frobeniusdistance}[1]{\func{dist}_{\text{F}}\left(#1\right)}
\newcommand{\diamonddistance}[2]{\func{dist}_{\diamond}\left(#1,\hspace{2pt}#2\right)}
\renewcommand{\Pr}{\mathop{\bf Pr\/}}
\newcommand{\Var}{\mathop{\bf Var\/}}
\newcommand{\tr}{\mathrm{tr}}  
\newcommand{\trace}[1]{\tr\left(#1\right)}
\newcommand{\transpose}{\intercal}
\newcommand{\identity}{\mathbbm{1}}
\newcommand{\calC}{\mathcal{C}}
\newcommand{\calE}{\mathcal{E}}
\newcommand{\calF}{\mathcal{F}}
\newcommand{\calU}{\mathcal{U}}
\newcommand{\calV}{\mathcal{V}}
\newcommand{\calW}{\mathcal{W}}
\newcommand{\calX}{\mathcal{X}}
\renewcommand{\C}{\mathbb{C}}
\renewcommand{\R}{\mathbb{R}}
\newcommand{\norm}[1]{\lVert #1 \rVert}
\newcommand{\frobeniusnorm}[1]{\norm{#1}_{\text{F}}}
\DeclarePairedDelimiter{\ceil}{\lceil}{\rceil}
\DeclarePairedDelimiter{\floor}{\lfloor}{\rfloor}
\DeclareMathOperator{\orthgrp}{O}
\newcommand{\mest}{m_{\text{est}}}
\newcommand{\mtest}{m_{\text{test}}}
\newcommand{\inputu}{U}
\newcommand{\roundedu}{U^\prime}
\newcommand{\inneru}{u}
\newcommand{\innerroundedu}{u^\prime}
\newcommand{\innertomographyu}{\upsilon}
\newcommand{\innerpolaru}{\wh{u}}
\newcommand{\innercircuitu}{\wt{u}}
\newcommand{\outputu}{\wt{U}}
\newcommand{\leftgaussian}{G_A}
\newcommand{\rightgaussian}{G_B}
\newcommand{\estleftgaussian}{\wh{\leftgaussian}}
\newcommand{\estrightgaussian}{\wh{\rightgaussian}}
\definecolor{figRed}{rgb}{0.835,0.188,0.2745}
\definecolor{figYellow}{rgb}{1.0,0.843,0.118}
\definecolor{figYellowText}{rgb}{0.92,0.763,0.038}
\definecolor{figBlue}{rgb}{0.259,0.3647,0.894}
\newcommand{\esterror}{\alpha}
\newcommand{\eigengap}{\zeta}
\title{Mildly-Interacting Fermionic Unitaries are Efficiently Learnable}
\author{Vishnu Iyer\thanks{University of Texas at Austin, \texttt{vishnu.iyer@utexas.edu}}}
\date{\today}
\begin{document}

\maketitle

\begin{abstract}
    Recent work has shown that one can efficiently learn fermionic Gaussian unitaries, also commonly known as nearest-neighbor matchcircuits or non-interacting fermionic unitaries.
    However, one could ask a similar question about unitaries that are near Gaussian: for example, unitaries prepared with a small number of non-Gaussian circuit elements. These operators find significance in quantum chemistry and many-body physics, yet no algorithm exists to learn them.
    
    We give the first such result by devising an algorithm which makes queries to an $n$-mode fermionic unitary $\inputu$ prepared by at most $O(t)$ non-Gaussian gates and returns a circuit approximating $\inputu$ to diamond distance $\eps$ in time $\poly(n,2^t,1/\eps)$. This resolves a central open question of Mele and Herasymenko under the strongest distance metric \cite{mele2024efficient}. In fact, our algorithm is much more general: we define a property of unitary Gaussianity known as unitary Gaussian dimension and show that our algorithm can learn $n$-mode unitaries of Gaussian dimension at least $2n - O(t)$ in time $\poly(n,2^t,1/\eps)$. Indeed, this class subsumes unitaries prepared by at most $O(t)$ non-Gaussian gates but also includes several unitaries that require up to $2^{O(t)}$ non-Gaussian gates to construct.
    
    In addition, we give a $\poly(n,1/\eps)$-time algorithm to distinguish
    whether an $n$-mode unitary is of Gaussian dimension at least $k$ or $\eps$-far from all such unitaries in Frobenius distance, promised that one is the case.
    Along the way, we prove structural results about near-Gaussian fermionic unitaries  that are likely to be of independent interest.
\end{abstract}

\clearpage

\section{Introduction} \label{sec:intro}

The study of quantum learning theory has seen rapid advancements in recent years, driven by the growing intersection of quantum information science, machine learning, and computational complexity. 
One particularly important task in this domain, known as quantum process tomography, concerns learning unitary transformations of quantum states efficiently \cite{mohseni2008quantum}. It's not difficult to imagine why efficient process tomography could be immensely useful. For example, querying and learning an unknown process observed in nature could lead to an efficient implementation and subsequent simulation of this process on a quantum computer.

Unfortunately, learning unitary operators is prohibitively expensive in general. There exist several sample and time complexity lower bounds that render learning arbitrary unitary processes intractable \cite{baldwin2014quantum,haah2023query,zhao2024learning}. For this reason, when computational efficiency is desired, attention is restricted to simpler, more structured classes of operators. 

One such class is fermionic Gaussian unitaries, also referred to collectively as fermionic linear optics (FLO). Gaussian unitaries are generated by quadratic Hamiltonians in fermionic creation and annihilation operators and play a critical role in condensed matter physics, quantum chemistry, many-body physics, and quantum simulation \cite{echenique2007mathematical, boutin2021quantum, carollo2018symmetric}. In particular, Gaussian states and operators can be used to model systems where fermions (most commonly electrons) are non-interacting. Such objects correspond to fermionic Hamiltonians that are quadratic in the Majorana basis.
This approximation simplifies simulation and optimization tasks while in many cases maintaining a reasonable level of accuracy \cite{bravyi2019approximation}. FLO is also equivalent to another well-studied class of quantum circuits known as matchgates, and both FLO and matchgates admit efficient classical simulation algorithms, expanding their utility further \cite{valiant2001quantum,terhal2002classical,knill2001fermionic}. Furthermore, several computationally efficient learning algorithms are known for Gaussian states and unitaries \cite{oszmaniec2020fermion,ogorman2022fermionic,aaronson2023efficient,mele2024efficient}.

Crucially, however, FLO is not universal for quantum computation \cite{divincenzo2005fermionic}. Of course, this should come as no surprise, given the efficiency of classical simulations. The addition of non-Gaussian circuit elements, often termed doping, boosts this class to universality \cite{jozsa2008matchgates,brod2011matchgates}.
At a high level, states and unitaries with roughly $t$ non-Gaussian elements are referred to as \say{$t$-doped.} 
Just like Gaussian objects correspond to non-interacting fermionic systems, $t$-doped Gaussian objects correspond to what we call \emph{mildly-interacting} fermionic systems. They are associated with fermionic Hamiltonians that include a small number of non-quadratic Majorana terms, each of which induces an interaction between multiple fermions.

Many physically and computationally relevant unitary operators can be expressed as $t$-doped Gaussian objects. For instance, approximations of quantum impurity models often rely on states and operations that are non-Gaussian and require additional elements to describe \cite{bravyi2017impurity,boutin2021quantum}.
As such, unitaries induced by mildly-interacting fermionic Hamiltonians are highly relevant in many-body physics.
Furthermore, quantum chemistry computations beyond the Hartree-Fock method that require even minor interactions between electrons include doped Gaussian processes \cite{jones2021hartree}. Such processes could be critical in developing simulations of more general quantum chemical phenomena.
Understanding how to efficiently learn mildly-interacting fermionic operations is crucial in advancing our capabilities in these applications.

The problem of learning $t$-doped Gaussian states has been resolved. Mele and Herasymenko give an algorithm for learning a classical description of a $t$-doped Gaussian state which runs in time $\poly(n, 2^t)$ \cite{mele2024efficient}. Similar problems have also been studied in the somewhat related context of $t$-doped stabilizer states \cite{grewal2024efficient,leone2024learning,hangleiter2024bell} and $t$-doped bosonic Gaussian states \cite{mele2024continuous}. However, we underscore that the problems of learning states and learning unitaries are incomparable. While the access model can be considered stronger in the unitary learning task, so are the output requirements. Indeed, descriptions of unitary operators encode actions on exponentially many states, and if the goal is to learn the input unitary to low diamond distance error, the algorithm must output a description that is close even on a \emph{worst case} state.

Indeed, the problem of learning doped unitary operators is not nearly as well-studied. Leone, Oliviero, Lloyd, and Hamma give an algorithm to learn Clifford circuits doped specifically with $T$ gates \cite{leone2024decoders}, but this algorithm does not generalize to other non-Clifford gates.
Furthermore, there are unique challenges in the doped Gaussian case: most notably the non-discrete nature of Gaussian operations. Indeed, the algorithm to learn Clifford+$T$ circuits makes specific use of both the discrete nature of Clifford circuits and the specific structure of $T$-gates, which are essentially maximally far from Clifford among single-qubit unitaries.
In their work on learning $t$-doped Gaussian states, Mele and Herasymenko thus posed the problem of learning $t$-doped Gaussian unitaries as a major open question \cite{mele2024efficient}. We resolve this question by giving an efficient learning algorithm for a class of unitary operators that subsumes $t$-doped Gaussian unitaries.

\subsection{Main Results}\label{sec:results}
We characterize the learnability of a fermionic unitary by defining the unitary Gaussian dimension, which quantifies the Gaussianity of a unitary operator. Recall that a Gaussian unitary on $n$ modes conjugates the Majorana operators $\majorana_1,...\majorana_{2n}$ each into linear combinations of Majorana operators. The latter are commonly known as \say{rotated} Majorana operators. Generalizing this notion, we say a $n$-mode fermionic unitary $U$ has Gaussian dimension at least $k$ if there exist $k$ mutually orthogonal rotated Majorana operators that $U$ conjugates into rotated Majorana operators. 

The Gaussian dimension is an integer between $0$ and $2n$, inclusive, and, naturally, is maximized for Gaussian unitaries. The class of $n$-mode Gaussian unitaries of dimension at least $2n-2\kappa t$ subsumes the class of unitaries prepared by at most $t$ $\kappa$-local (arbitrary) non-Gaussian gates. By the Jordan-Wigner transform, the latter is equivalent to the class of matchgate circuits with at most $t$ $\kappa$-local non-matchgate elements.

\subsubsection*{Property Testing}
As a first step towards a learning algorithm, we show that the Gaussian dimension can be efficiently property tested.

\begin{theorem}
    There exists a $\poly(n, 1/\eps, \log(1/\delta))$-time algorithm which makes queries to a fermionic unitary $U$ on $n$ modes and, promised that either:
    \begin{enumerate}[label=(\arabic*)]
        \item $U$ has Gaussian dimension at least $k$
        \item $U$ is $\eps$-far in Frobenius distance from all such unitaries
    \end{enumerate}
    determines which is the case with probability at least $1 - \delta$.
\end{theorem}

The Frobenius distance condition above is crucial for computational efficiency. Indeed, distinguishing unitaries of Gaussian dimension $k$ from unitaries that are $\eps$-far in diamond distance from all such unitaries would imply an algorithm for diamond-distance identity testing. This is a notoriously difficult property testing problem, requiring $2^{\Omega(n)}$ sample complexity and rendering it intractable in most cases \cite{rosgen2004hardness}.

In the case of $k=2n$, our algorithm gives a computationally efficient property tester for Gaussian unitaries, resolving an open question of Bittel, Mele, Eisert, and Leone \cite{bittel2025fermion}. We clarify, however, that the sample complexity of this algorithm is also $\poly(n)$ (as opposed to constant). It is unclear whether a sample complexity growing with $n$ is necessary to test for unitary Gaussianity, though we conjecture that it is not.

\subsubsection*{Efficient Learning}
Our central result is a learning algorithm for $n$-mode unitaries of Gaussian dimension at least $2n-t$ that scales polynomially in $n$ and exponentially in $t$.

\begin{theorem} \label{thm:main-learning-intro}
    There exists a $\poly(n, 2^t, 1/\eps, \log(1/\delta))$-time algorithm that makes queries to an $n$-mode fermionic unitary $U$ with Gaussian dimension at least $2n-t$ and produces a circuit that implements a unitary $\wt{U}$ such that $\diamonddistance{U}{\wt{U}} \leq \eps$, succeeding with probability at least $1-\delta$.
\end{theorem}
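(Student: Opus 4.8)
The plan is to bootstrap from the property-testing machinery and the structural results on near-Gaussian unitaries. First I would use the structural characterization: a unitary $U$ of Gaussian dimension at least $2n-t$ admits a decomposition into a ``Gaussian part'' acting on a $(2n-t)$-dimensional subspace of the Majorana span, and a ``genuinely non-Gaussian part'' acting on the complementary $t$-dimensional piece. Concretely, I expect one can write $U = \leftgaussian \cdot V \cdot \rightgaussian$ where $\leftgaussian, \rightgaussian$ are Gaussian unitaries and $V$ is a unitary that acts as the identity on all but $O(t)$ modes (after an appropriate rotation of the Majorana basis), i.e. $V$ is supported on a subsystem of $O(t)$ fermionic modes tensored with identity. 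The first major step is therefore to \emph{find} the relevant subspaces: identify the $(2n-t)$-dimensional subspace $W \subseteq \spn\{\majorana_1,\dots,\majorana_{2n}\}$ that $U$ conjugates into rotated Majoranas, and extract the Gaussian rotation $U$ implements on $W$. This is where I would invoke the property tester and its underlying primitive — presumably the tester works by estimating a correlation-type matrix (an analogue of the ``covariance matrix'' of the channel, something like $M_{ij} = \frac{1}{2^n}\tr(\majorana_i U \majorana_j U^\dagger)$ or a second-moment version), whose spectrum reveals the Gaussian dimension. The eigenvectors with ``Gaussian'' eigenvalues span $W$, and reading off the action gives a classical description of a Gaussian unitary $G$ such that $G^\dagger U$ fixes $W$ pointwise.

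The second step is to reduce to learning the residual non-Gaussian unitary on $O(t)$ modes. Having computed $G$ classically, I would consider $U' = G^\dagger U$ (queryable by composing queries to $U$ with the efficiently-implementable Gaussian circuit $G$), which by construction acts trivially on the Majorana subspace $W$ and hence — after a fixed Gaussian change of basis that I can compute — factors as $(\text{identity on } n - O(t) \text{ modes}) \otimes U''$ for some unitary $U''$ on $O(t)$ modes. This factorization should follow from a structural lemma analogous to the ``if a unitary conjugates $\majorana_i \mapsto \majorana_i$ then it commutes with $\majorana_i$, hence decomposes across the corresponding mode'' principle, applied to the even subalgebra. Then $U''$ lives on a space of dimension $2^{O(t)}$, so I can learn it to diamond distance $\eps$ by brute-force process tomography (e.g. via $\poly(2^t, 1/\eps)$ many queries and a semidefinite/least-squares reconstruction, or Choi-state tomography on the $O(t)$-mode system), all within the claimed $\poly(n, 2^t, 1/\eps)$ budget. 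Finally I would reassemble: output the circuit $\wt U = G \cdot (\identity \otimes \wt{U''})$, where $\wt{U''}$ is the learned $O(t)$-mode circuit, and argue the diamond distances compose additively so the total error is $\eps$ after setting the tomography accuracy appropriately; the Gaussian circuit $G$ has a $\poly(n)$-size description and implementation, so the output circuit is efficient.

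The main obstacle I anticipate is \textbf{robustness of the subspace extraction}: the clean structural decomposition holds exactly only for unitaries of Gaussian dimension \emph{exactly} $\geq 2n-t$, but the algorithm only sees the channel through finitely many noisy queries, so it recovers the correlation matrix up to sampling error. I need the ``Gaussian'' eigenvalues of that matrix to be separated from the non-Gaussian ones by a gap (the $\eigengap$ appearing in the notation), and I need a perturbation argument (Davis–Kahan on the eigenspaces) showing that an $\esterror$-accurate estimate of the matrix yields a subspace $\wh W$ close to the true $W$, hence a Gaussian unitary $\estleftgaussian$ close to $G$, so that $\estleftgaussian^\dagger U$ is \emph{approximately} (not exactly) a product with identity on $n - O(t)$ modes. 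I then have to show this approximate factorization is still good enough: the ``leakage'' off the $O(t)$-mode subsystem must be bounded in diamond norm by a polynomial function of the matrix-estimation error, which then gets driven below $\eps$ by taking $\poly(1/\eps)$ samples. Managing this error propagation — and in particular ensuring the eigengap is at least inverse-polynomial (or handling eigenvalues that cluster near the gap threshold, perhaps by a ``round to the nearest safe dimension'' argument) — is the technical heart of the proof. A secondary subtlety is that the non-Gaussian part $U''$ need not itself be ``close to Gaussian'' on its $O(t)$ modes, so the tomography step must be fully general (worst-case diamond distance) on that subsystem, which is fine since the dimension is only $2^{O(t)}$, but one should double-check that the reduction does not require $U''$ to preserve fermionic parity in a way that constrains the reconstruction.
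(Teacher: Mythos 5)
Your proposal is correct and follows essentially the same route as the paper: estimate the correlation matrix $M_{ab} = 2^{-n}\tr(\majorana_a U \majorana_b U^\dagger)$ from fermionic Choi-state measurements, take its SVD to extract Gaussian unitaries $\estleftgaussian,\estrightgaussian$ that approximately diagonalize $U$ into $\identity \otimes u$ on $O(t)$ modes, handle the eigengap failure via Davis--Kahan together with a ``round to the nearest safe dimension'' step (the paper calls this singular-value partitioning, using a cascade of decreasing thresholds to force a gap), brute-force tomograph the compressed $O(t)$-mode unitary, round it to a unitary via polar decomposition, and reassemble with a hybrid argument. You correctly pinpointed the technical heart of the matter — robustness of subspace extraction and the clustered-singular-value problem — which is exactly where the paper's main new idea (the threshold cascade of \cref{line:sv-partitioning}) lives.
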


By the Jordan-Wigner transformation, our algorithm is also capable of learning matchgate circuits augmented with a few non-matchgate circuit elements.

A priori, we believe this result is rather surprising. Gaussian unitaries themselves form a continuous class of objects, and the result of \cite{oszmaniec2020fermion} that learns them to small diamond distance relies on a concise projective representation in the form of group $\orthgrp(2n)$. Such a representation is not known to exist for unitaries of high Gaussian dimension. It's not clear at all that the continuous variables of their concise representation could be efficiently approximated to a level that allows for learning to low error in diamond distance, which quantifies the \emph{worst case} distance between two unitaries. Indeed, no such algorithm has been developed for, say, doped bosonic Gaussian or Clifford unitaries with arbitrary non-Clifford gates, and we believe our result comprises new techniques that can be applied to give novel learning algorithms for these classes.

This algorithm is effectively a \emph{proper} learning algorithm: the unitary $\wt{U}$ implemented is itself of Gaussian dimension at least $2 \floor{k/2}$.
Thus the Gaussian dimension of the input unitary is preserved if $k$ is even but if $k$ is odd then the output unitary could have Gaussian dimension $k-1$. 
It is worth clarifying, however, that the algorithm is not proper for the class of unitaries with at most $t$ $\kappa$-local non-Gaussian gates. While the algorithm remains efficient when $\kappa t$ is small, the circuit produced could contain a number of non-Gaussian gates that is exponential in $\kappa t$.

While the fine-grained sample and time complexities of the algorithm could certainly be improved, a qualitative $\poly(n, 2^t, 1/\eps)$ runtime is the best one can do. To see this, consider the class of $n$-mode unitaries that consist of an arbitrary Gaussian unitary on the first $2n-t$ modes and an arbitrary unitary on the last $t$ modes.
Clearly, every unitary in this class has Gaussian dimension at least $2n-t$.
Learning the Gaussian unitary over the first register requires $\poly(2n-t)$ sample complexity by the Holevo bound and learning the arbitrary unitary over the second register requires $\exp(t)$ sample complexity \cite{baldwin2014quantum,haah2023query}. In fact, this scaling is required even in the average case:
a standard counting argument shows that an overwhelming fraction of unitaries in this class have gate complexity at least $\poly(n, 2^t/t)$. Applying results such as the lower bounds in \cite{zhao2024learning}, one can argue that a $\poly(n, 2^t, 1/\eps)$ scaling is essentially necessary to learn this class of operators to Frobenius distance within $\eps$.

\subsection{Technical Overview} \label{sec:technical-overview}
The technical machinery behind our results 
involves the representation of fermionic systems via Majorana operators. 
A good summary can be found in \cite{bravyi2004lagrangian}, though we review the key concepts.

For any fermionic system on $n$ modes, we can define the \emph{Majorana operators} $\majorana_1,\majorana_2,...,\majorana_{2n}$. These operators can be thought of as $2^n \times 2^n$ matrices analogous to Pauli operators. In fact, Majorana operators and their products are in correspondence with Pauli operators via the Jordan-Wigner transformation.
The Majorana operators satisfy the so-called Fermi-Dirac commutation relations: 
\[
\{\majorana_a, \majorana_b\} = 2\delta_{ab}.
\] 
This condition implies an orthogonality relation under the Hilbert-Schmidt inner product: $\trace{\majorana_a \majorana_b} = 2^n \delta_{ab}$. Just like any set of generators for the Pauli group, the Majorana operators generate an algebra which spans all $2^n \times 2^n$ matrices. That is, any matrix in $\C^{2^n \times 2^n}$ can be written as a linear combination of products of the $\majorana_a$.

For any fermionic unitary $U$ on $n$ modes, we can define a
\emph{correlation matrix} $M$, where $M_{ab} = 2^{-n}\trace{\majorana_a\cdot U \majorana_b U^\dagger}$.
Observe that any two distinct rows of the correlation matrix are orthogonal, since 
\[
2^{-n}\trace{U \majorana_a U^\dagger \cdot U \majorana_b U^\dagger} = 2^{-n}\trace{\majorana_a \majorana_b} = \delta_{ab}.
\]
If $U$ is Gaussian, the correlation matrix $M$ lies in $\orthgrp(2n)$ and forms a complete description of the action of $U$. That is, we can write
\[
U\majorana_a U^\dagger = \sum_{b=1}^{2n} M_{ba} \majorana_b = \newmajorana_a.
\]
Thus, the $\eta_a$ formed by conjugation of the Majorana operators by Gaussian unitaries are linear combinations of Majorana operators, and we will refer to these as \emph{rotated} Majorana operators. Oszmaniec, Dangniam, Morales, and Zimbor\'as showed that two Gaussian operators whose correlation matrices are close also have small diamond distance \cite{oszmaniec2020fermion}. Thus, approximating fermionic Gaussian unitaries reduces to approximating their correlation matrices sufficiently well.
The powerful structure of the correlation matrix for Gaussian operators begs the question: what happens for arbitrary unitaries?

A central observation in this work is that if the unitary is \say{somewhat Gaussian,} then the correlation matrix (and particularly its singular value decomposition) should still be amply informative. To formalize this notion of near-Gaussianity, we introduce the unitary \emph{Gaussian dimension}, which quantifies, as an integer between $0$ and $2n$, the number of rotated Majorana operators that the unitary conjugates into other rotated Majorana operators. The Gaussian dimension was defined for states in \cite{mele2024efficient}, who showed how to learn fermionic states of small Gaussian dimension efficiently. In this work, the term Gaussian dimension will specifically refer to unitary Gaussian dimension. A formal definition of unitary Gaussian dimension is given in \cref{def:gaussian-dimension}. 

Leveraging this definition, we show that any unitary on $n$ modes with Gaussian dimension at least $2n-t$ can be expressed in the following form:
\[
U = \leftgaussian (\identity \otimes u) \rightgaussian^\dagger,
\]
where $\leftgaussian$ and $\rightgaussian$ are Gaussian unitaries and $u$ acts only on the last $\ceil{t/2}$ modes (see \cref{lem:compression}).
This generalizes the compression lemma showed by Mele and Herasymenko in \cite{mele2024efficient}, since the class of high Gaussian dimension unitaries subsumes the class of unitaries prepared with a small number of non-Gaussian gates.  When referring to $\leftgaussian$ and $\rightgaussian$, we will often say they \emph{diagonalize} $U$.

The astute reader may notice that the form of the so-called \say{compression lemma} above resembles that of the singular value decomposition. This intuition turns out to be critical in exploiting the structure of the correlation matrix.
Indeed, unitaries whose correlation matrices have exactly $k$ singular values equal to $1$ are exactly those unitaries of Gaussian dimension $k$. Essentially, the unitary conjugates the rotated Majorana operators represented by the first $k$ right-singular vectors of $M$ into those represented by the first $k$ left-singular vectors. We illustrate this in \Cref{fig:circuit-vs-correlation}.

\begin{figure}[!ht]
    \centering
    \includegraphics[width=3in]{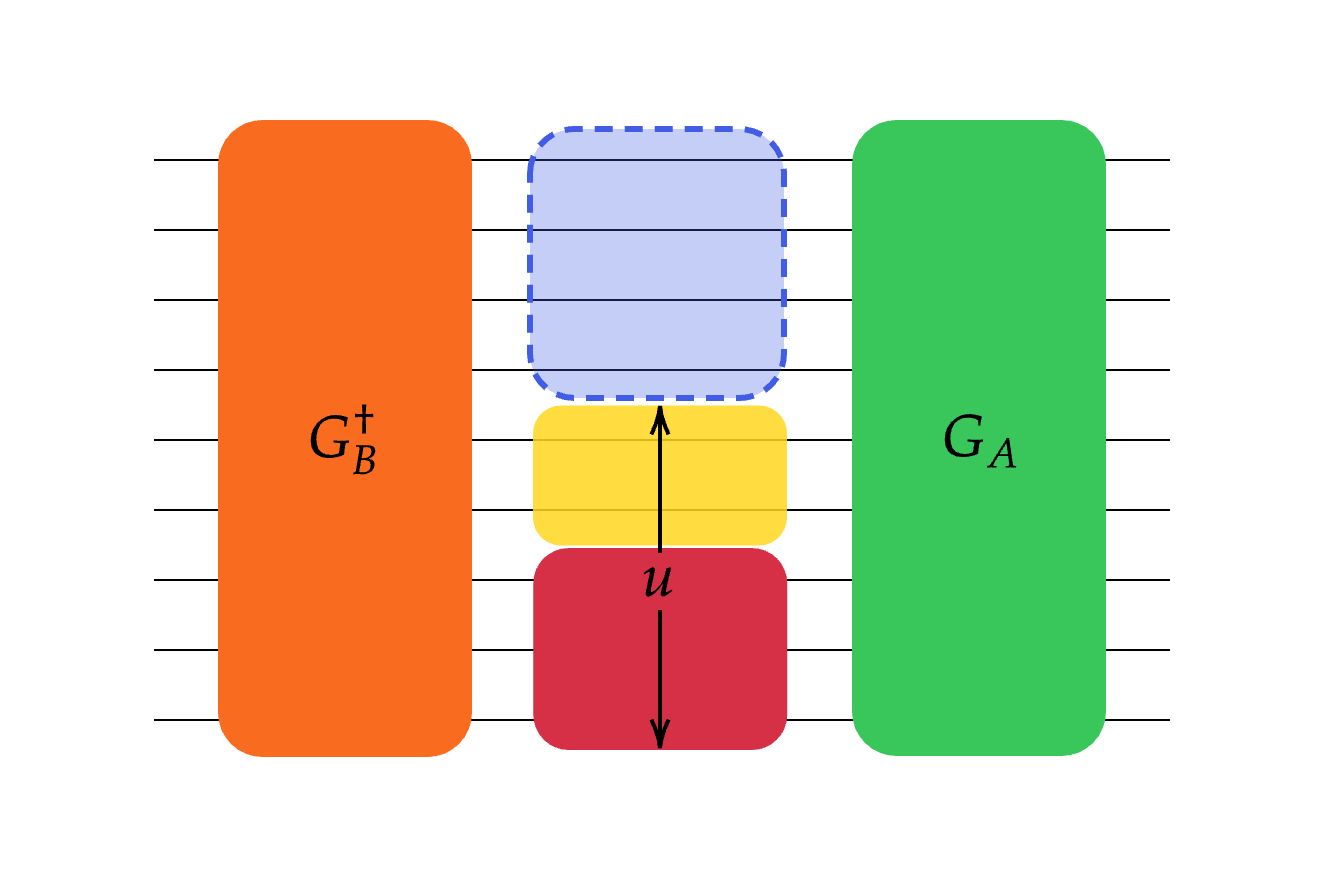} \\
    \includegraphics[width=3in]{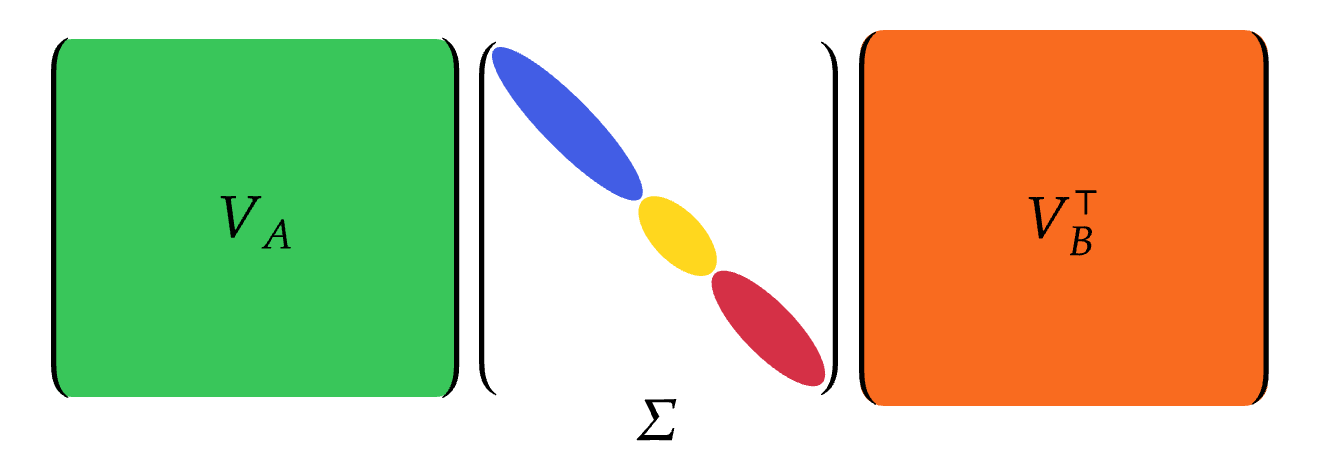}
    \caption{The correspondence between the circuit of a $9$-mode unitary with Gaussian dimension $8$ and the singular value decomposition of its correlation matrix (which is $18 \times 18$). Related components (such as $V_A$ and $\leftgaussian$) have the same color. The first eight singular values (marked \textbf{\color{figBlue}blue}) are equal to $1$, so the middle layer of the circuit acts as the identity on the first four modes. On the other hand, the last six singular values (marked \textbf{\color{figRed}red}) are small, so they correspond to a non-Gaussian component of the middle layer. The middle four eigenvalues (marked \textbf{\color{figYellowText}yellow}) are close to $1$, and care must be taken to determine whether the corresponding modes in the middle layer are treated as Gaussian or non-Gaussian.} 
    \label{fig:circuit-vs-correlation}
\end{figure}

With this structure in mind, the high level approach to our learning algorithm is:
\begin{enumerate}
    \item Construct a sufficiently accurate estimate of the correlation matrix $M$ and compute its singular value decomposition $V_A \Sigma V_B^\transpose$.
    \item We will determine $\pk$, the index of the lowest singular value that is sufficiently close to $1$. This alone will suffice to solve solve the problem of testing Gaussian dimension at least $k$, as we can simply check that $\pk \geq k$.
    
    \item Using the $V_A$ and $V_B$ (which are in $\orthgrp(2n)$) we define the Gaussian operators $\estleftgaussian$ and $\estrightgaussian$, respectively.
    
    \item To learn the non-Gaussian component $\inneru$, we will perform brute-force tomography on the final $n - \floor{\pk/2}$ modes of $\estleftgaussian U \estrightgaussian^\dagger $, approximating this channel in diamond distance. 
    
    \item We can then round this estimate to a unitary channel $\innerpolaru$ by taking, for example, its polar decomposition. A circuit for a unitary $\innercircuitu$ approximating $\innerpolaru$ can be constructed by standard techniques, such as \cite{dawson2005solovay}.
    \item The final output of the algorithm will be a circuit for $\estrightgaussian^\dagger$, followed by the circuit for $\innercircuitu$ acting on the final $n - \floor{\pk/2}$ modes, then finally by a circuit for $\estleftgaussian$. This implements the unitary $\estleftgaussian (\identity \otimes \innercircuitu) \estrightgaussian^\dagger$.
\end{enumerate}

For the property testing algorithm, it will suffice to analyze singular values of the correlation matrix.
We show that unitaries whose correlation matrices have many large singular values can be rounded to unitaries of high Gaussian dimension.
On the contrapositive, this tells us that unitaries that are far from having Gaussian dimension $k$ have correlation matrices whose $k$-th singular value is bounded away from $1$. On the other hand, unitaries of high Gaussian dimension have correlation matrices with several singular values equal to $1$. As such, the proof of correctness hinges on the accuracy of our estimate of $M$, which can be argued using perturbation bounds such as Weyl's theorem (\cref{fact:weyl}).

The learning algorithm requires a bit more work. One key challenge is that, just like the singular value decomposition isn't unique, neither is the form $\leftgaussian (\identity \otimes u) \rightgaussian^\dagger$. Does it matter which decomposition we find?
We will argue that the decomposition will not actually matter, so long as the \emph{subspaces} spanned by the top $k$ singular vectors of $M$ and $\wh{M}$ are close. This also implies that the subspaces spanned by the bottom $2n-k$ singular vectors are close. Under this condition, for any pair $\estleftgaussian, \estrightgaussian$ produced from $V_A,V_B$, there exists a pair $\leftgaussian,\rightgaussian$ such that $\diamonddistance{\leftgaussian}{\estleftgaussian}$ and $\diamonddistance{\rightgaussian}{\estrightgaussian}$ are small and $\leftgaussian^\dagger U \rightgaussian = \identity \otimes u$ for some $u$ on $n - \floor{k/2}$ modes.
As such, $\estleftgaussian^\dagger U \estrightgaussian$ approximately acts as the identity on the first $\floor{k/2}$ modes.

However, just because two matrices are close pointwise or in operator norm does not imply that the subspaces spanned by the singular vectors are close.
To this end, we will need to use the powerful Davis-Kahan theorem (\cref{fact:davis-kahan}), which concerns the stability of the singular value decomposition. This theorem states that the subspace spanned by the top $k$ singular vectors of two matrices $M$ and $\wh{M}$ are close if the matrices are close in operator distance and there is an \say{eigengap} between the $k$-th singular value of $M$ and the $k+1$-th singular value of $\wh{M}$. 

 Even if $M$ has Gaussian dimension exactly $k$ (and not $k+1$) and $\wh{M}$ is very close to $M$ in operator distance, this eigengap condition might not be satisfied. For example, the $k+1$-th singular value of $M$ could be very close to $1$, causing the Davis-Kahan theorem to fail. 
 To avoid this, the algorithm employs a singular value \say{partitioning} step. For every singular value after the $k$-th one, the algorithm applies a series of cutoffs, each lower than the last. Singular values that pass the check will be considered \say{good} and the first singular value that fails, along with all that succeed it, are considered \say{bad.} The final threshold is chosen such that, if all singular values are marked as good, $U$ is close to a fully Gaussian unitary in diamond distance. Decreasing the thresholds at each step ensures that the last good singular value and the first bad singular value will differ by at least the gap between the thresholds, inducing an eigengap. 
 
 We can then argue that, rounding all good eigenvalues to $1$, $U$ is close in diamond distance to a unitary $U^\prime$ of higher Gaussian dimension $\pk \geq k$ that is diagonalized by the same $\leftgaussian, \rightgaussian$. That is, there exists some $u^\prime$ acting on $n - \floor{\pk/2}$ modes such that $U^\prime = G_A (\identity \otimes u) G_B^\dagger$. This \say{rounding} is depicted in \Cref{fig:circuit-rounding}. The eigengap condition allows us to effectively approximate $\leftgaussian$ and $\rightgaussian$ with $\estleftgaussian$ and $\estrightgaussian$. By considering $\estleftgaussian^\dagger U \estrightgaussian$ and tracing out the first $\pk$ modes, we construct an approximation $\innercircuitu$ of $\innerroundedu$. 
The correctness of the algorithm ultimately follows from showing that $\outputu = \estleftgaussian (\identity \otimes \innercircuitu) \estrightgaussian^\dagger$ is close to $\inputu = \leftgaussian(\identity \otimes u) \rightgaussian^\dagger$ via an intuitive, albeit involved, hybrid argument.

\begin{figure}
    \centering
    \includegraphics[width=3in]{diagrams/u-circuit.png}
    \includegraphics[width=3in]{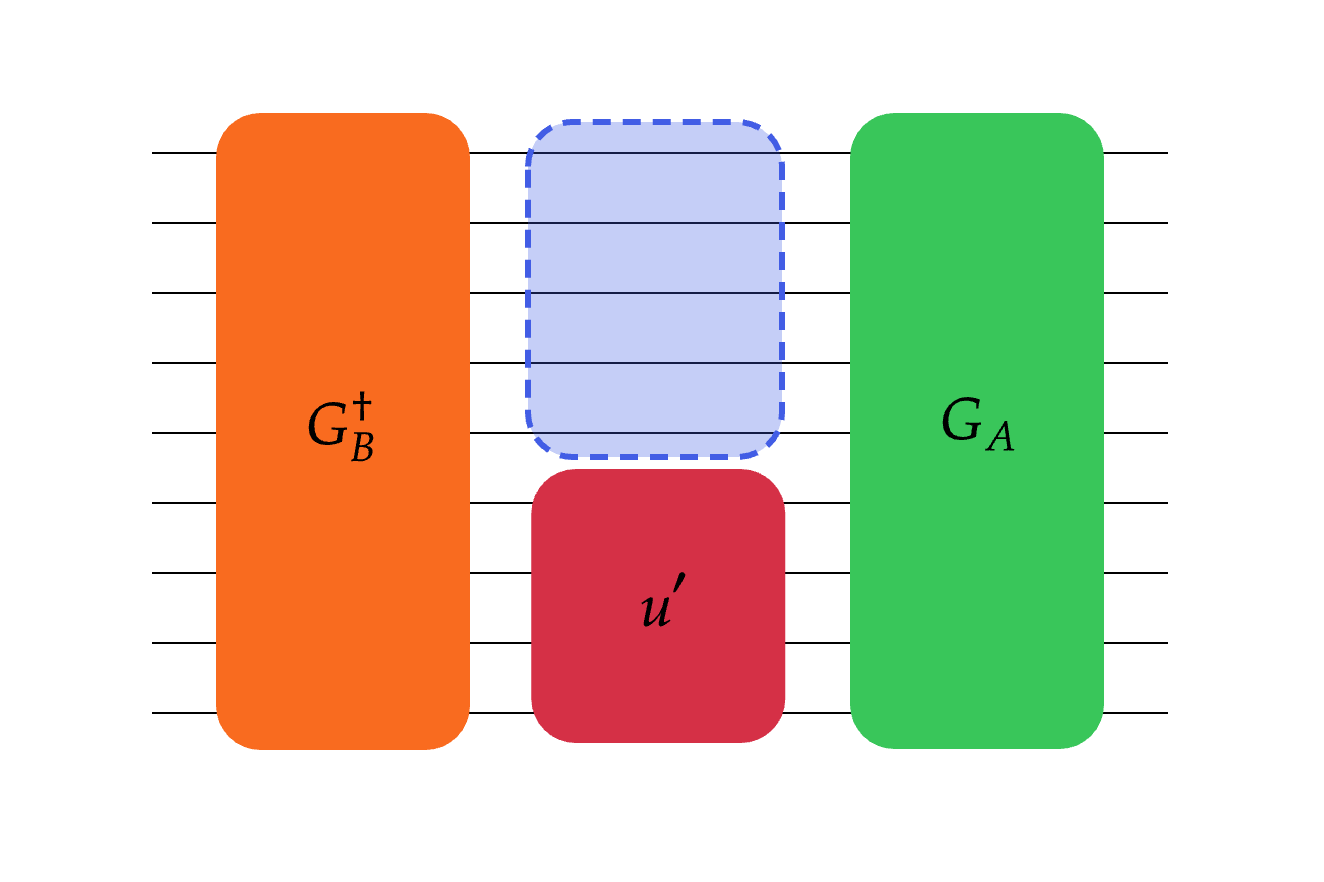}
    \caption{A depiction of the ``rounding'' that implicitly takes place in the algorithm. On the left, we have the same circuit as in \Cref{fig:circuit-vs-correlation}. This circuit could be close to having higher Gaussian dimension, since the $9$th to $12$th singular values (corresponding to the \textbf{\color{figYellowText}yellow} region) are close to $1$. Based on a sequence of thresholds, we determine that the $9$th and $10$th singular values pass the checks whereas the singular values from $11$ onward fail. This will allow us to argue that the circuit is close in diamond distance to the one on the right, which has Gaussian dimension $10$ and is diagonalized by the same Gaussian operators. The learning algorithm then proceeds by approximating $\leftgaussian,\rightgaussian,$ and $\innerroundedu$.} 
    \label{fig:circuit-rounding}
\end{figure}

Overall, we remark that the techniques employed in this result bear some resemblance to those in algorithms for learning objects such as $t$-doped stabilizer states and unitaries \cite{leone2024decoders,grewal2024improved,grewal2024efficient,leone2024learning,hangleiter2024bell}, as well as those for learning $t$-doped Gaussian states \cite{mele2024efficient}. At a high level, we prove a compression lemma, \say{unravel} the structure of the compression, and perform brute-force tomography on the compressed unitary. However, the setting of fermionic unitaries and the goal of small diamond error approximation induced major hurdles and thus required new techniques.
We assert that many of these techniques should be capable of strengthening the learning results of other \say{compressible} classes of unitary operators.

\subsection{Open Questions}

As mentioned previously, our learning algorithm is, in some sense, the first of its kind to learn doped unitaries to small diamond distance. We see no barrier to applying these techniques to the settings of doped Clifford and bosonic Gaussian circuits, even if each case presents unique challenges.
\begin{question*}
    Do there exist algorithms to learn $t$-doped $n$-qubit Clifford circuits and $t$-doped $n$-mode bosonic Gaussian circuits\footnote{Just as in \cite{mele2024continuous}, there is likely to be an energy constraint on how much the unitary can amplify the energy of an input state.}
    in $\poly(n, 2^t)$ time?
\end{question*}
Both of these classes exhibit a compressive structure like the one we've exploited to learn doped fermionic Gaussian unitaries \cite{leone2024decoders,bittel2025boson}, so it's our strong belief that similar techniques will yield promising results.

At a high level, our work explores the complexity of learning fermionic operators as a function of their \emph{non-Gaussianity}. In a similar vein, one can pose the problem of \emph{agnostic} learning of Gaussian unitaries. Specifically, given a unitary $U$ that is $\eps$-close to some Gaussian unitary in Frobenius distance, can one find a Gaussian unitary that is $(\eps+\eps^\prime)$-close to $U$? Similar notions have been explored in the learning of states \cite{grewal2024agnostic,chen2024stabilizer,bakshi2024learning}, but not much is known about agnostic tomography of unitaries. Some progress was made in \cite{wadhwa2024agnostic}, but not in the proper case, where a circuit in the class needs to be returned by the algorithm. 

\begin{question*}
    Does there exist a proper agnostic tomography algorithm for Gaussian unitaries?
\end{question*}
Agnostic process tomography is very well motivated by considerations of gate and measurement errors, and such an algorithm for Gaussian unitaries would certainly be valuable. 

Similarly, one could ask about the analogous question of tolerant testing: given oracle access to a Gaussian unitary or state, is there an efficient procedure to test whether it is $\eps_1$-close to the class of Gaussian objects or $\eps_2$-far from all such objects? Similar questions have been studied in the world of stabilizer states, so perhaps those techniques could be fruitful in the Gaussian setting \cite{grewal2024improved,arunachalam2024tolerant,bao2024tolerant,iyer2024tolerant}.
\begin{question*}
    Do there exist computationally efficient tolerant property testers for Gaussian states and unitaries?
\end{question*}
The standard (non-tolerant) property testing question has been studied in \cite{mele2024efficient,lyu2024fermionic,bittel2025fermion}, but it's not clear whether the techniques in these works can be used to produce tolerant testers. Furthermore, all of these tests have sample complexities that depend on $n$. We conjecture that there exists a property testing algorithm that can property test Gaussian unitaries with only constantly many queries.

\begin{question*}
    Does there exist an algorithm to test whether a $n$-mode unitary $U$ is either a Gaussian unitary or $\eps$-far in Frobenius distance from all such unitaries
    given $\poly(1/\eps)$ queries to $U$ (rather than $\poly(n, 1/\eps)$) and $\poly(n, 1/\eps)$ runtime?
\end{question*}

Of course, we can also ask if the polynomial factors in $n, \eps,$ and $2^t$ in the algorithm of \cref{thm:main-learning-intro} can be improved. For example, the runtime scales as $\poly(n) \cdot \poly(2^t)$, and one could ask if the dependence on $n$ and $t$ could be decoupled into something like $\poly(n) + \poly(2^t)$.

\section{Preliminaries} \label{sec:prelims}

$\norm{M}$ denotes the operator norm of matrix $M$ and $\frobeniusnorm{M}$ denotes the Frobenius norm. The set $\{1,...,n\}$ is denoted by $[n]$. $\identity_{k}$ represents the $2^k \times 2^k$ identity operator. The group of orthogonal $n \times n$ matrices is denoted $\orthgrp(n)$.
We denote the $k$-th singular value of an $n\times n$ matrix $M$ by $\sigma_k(M)$. Unless stated otherwise, we assume that the singular values are in sorted order. That is, $\sigma_1(M) \geq ... \geq \sigma_n(M)$. Similarly, we will denote the eigenvalues of $M$ as $\lambda_1(M) \geq ... \geq \lambda_{n}(M)$.
We will often take $\sigma_{a}(M) = \lambda_{a}(M) = -\infty$ for $a > n$.

\subsection{Quantum Information}

We review a few basic facts about quantum information.
The (phase-invariant) Frobenius distance between two $d \times d$ operators is given by
\[
\frobeniusdistance{U,V} = \frac{1}{\sqrt{2d}}\min_{\theta \in [0,2\pi)}\frobeniusnorm{U - e^{i\theta}V} =  \sqrt{1 - \frac{1}{d} \abs{\trace{U^\dagger V}}}.
\]
The diamond distance between two $d \times d$ operators is given by
\[
\diamonddistance{U}{V} = \max_{\rho}\norm{(U \otimes \identity_d) \rho - (V \otimes \identity_d) \rho}_1,
\]
where $\rho$ is any density matrix of dimension $2d$. 
The diamond distance can be thought of as a worst-case distance, quantifying the difference in action by two quantum channels on a worst-case input state. On the other hand, the Frobenius distance can be thought of as an average-case distance metric.
As such, the Frobenius distance is always upper bounded by the diamond distance. Conversely, if $U$ and $V$ are again of dimension $d$, we have:
\[
\diamonddistance{U}{V} \leq \sqrt{2d} \cdot \frobeniusdistance{U,V}.
\]

We recall some basic facts about how these distance measures behave upon applying arbitrary quantum channels:

\begin{fact} \label{fact:diamond-distance-contraction}
    Let $\calV, \calW,$ and $\calX$ be quantum channels and let $U$ be a unitary channel. Then, whenever the relevant quantities are well-defined,
    \begin{enumerate}[label=(\alph*)]
        \item $\diamonddistance{U\calV}{U\calW} = \diamonddistance{\calV U}{\calW W} = \diamonddistance{\calV}{\calW}$.
        \item $\frobeniusdistance{U\calV,U\calW} = \frobeniusdistance{\calV U,\calW U} = \frobeniusdistance{\calV,\calW}$.
        \item \(
        \diamonddistance{\calX \calV}{\calX \calW} \leq \diamonddistance{\calV}{\calW}.\)
    \end{enumerate}
    
\end{fact}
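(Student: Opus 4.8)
The statement to prove is \cref{fact:diamond-distance-contraction}, which collects standard properties of diamond and (phase-invariant) Frobenius distance: unitary invariance on both sides for each metric, and contractivity of diamond distance under post-composition with an arbitrary channel. My plan is to treat parts (a), (b), and (c) separately, each via the natural definition-chasing argument.

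For part (a), I would unfold the definition $\diamonddistance{U\calV}{U\calW} = \max_\rho \norm{((U\calV)\otimes\identity)\rho - ((U\calW)\otimes\identity)\rho}_1$. Since $U$ is a unitary channel, $(U\otimes\identity)$ acts as conjugation by the unitary $U\otimes\identity$, and trace distance is invariant under conjugation by a unitary: $\norm{(U\otimes\identity)\sigma(U\otimes\identity)^\dagger}_1 = \norm{\sigma}_1$ for any Hermitian $\sigma$. Applying this with $\sigma = (\calV\otimes\identity)\rho - (\calW\otimes\identity)\rho$ gives $\diamonddistance{U\calV}{U\calW} = \diamonddistance{\calV}{\calW}$. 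For the right-composition identity $\diamonddistance{\calV U}{\calW U} = \diamonddistance{\calV}{\calW}$, I would note that as $\rho$ ranges over all density matrices on the doubled space, so does $(U\otimes\identity)\rho(U\otimes\identity)^\dagger$ (since $U\otimes\identity$ is an invertible unitary), hence the maximum is unchanged; concretely, $((\calV U)\otimes\identity)\rho = (\calV\otimes\identity)\big((U\otimes\identity)\rho(U\otimes\identity)^\dagger\big)$ so the optimization over $\rho$ just gets reparametrized. (Note the paper's statement has a typo: $\diamonddistance{\calV U}{\calW W}$ should read $\diamonddistance{\calV U}{\calW U}$; I would silently use the correct version.)

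For part (b), I would use the closed form $\frobeniusdistance{U,V} = \sqrt{1 - \tfrac{1}{d}\abs{\trace{U^\dagger V}}}$ given just above in the excerpt. Left-multiplying both arguments by a unitary $W$: $\trace{(WU)^\dagger (WV)} = \trace{U^\dagger W^\dagger W V} = \trace{U^\dagger V}$, so the expression is unchanged; right-multiplication is the same after cycling the trace, $\trace{(UW)^\dagger(VW)} = \trace{W^\dagger U^\dagger V W} = \trace{U^\dagger V}$. (Here I am implicitly using that composing channels corresponds to multiplying the underlying operators, and that the $d$ in the normalization is the same dimension throughout.) For part (c), contractivity of diamond distance under $\calX$: I would use that the trace norm is contractive under any channel (the standard data-processing/contractivity property of $\norm{\cdot}_1$ under CPTP maps), so for each fixed $\rho$, $\norm{((\calX\calV)\otimes\identity)\rho - ((\calX\calW)\otimes\identity)\rho}_1 = \norm{(\calX\otimes\identity)\big((\calV\otimes\identity)\rho - (\calW\otimes\identity)\rho\big)}_1 \le \norm{(\calV\otimes\identity)\rho - (\calW\otimes\identity)\rho}_1$, using that $\calX\otimes\identity$ is also a channel. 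Taking the max over $\rho$ on both sides yields $\diamonddistance{\calX\calV}{\calX\calW} \le \diamonddistance{\calV}{\calW}$.

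I do not anticipate a genuine obstacle here — this is a collection of textbook facts. The only points requiring a modicum of care are: (i) being explicit that for a \emph{unitary} channel $U$, the induced superoperator is conjugation by $U$, which is trace-norm preserving (whereas a general channel only contracts it), so (a) is an equality while (c) is merely an inequality; (ii) for the phase-invariant Frobenius distance, checking invariance through the $\abs{\trace{U^\dagger V}}$ form rather than fumbling with the $\min_\theta$ form; and (iii) the reparametrization-of-$\rho$ argument for right-composition, which relies on $U\otimes\identity$ being a bijection on density matrices. I would present the whole thing in a few lines, citing the contractivity of trace norm under CPTP maps as the one "external" ingredient.
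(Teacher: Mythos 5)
The paper states \cref{fact:diamond-distance-contraction} as a standard fact without proof, so there is no in-paper argument to compare against. Your proof is correct and is the expected textbook argument: unitary invariance of the trace norm and reparametrization of the maximizing $\rho$ for (a), cyclicity of trace applied to the closed form $\sqrt{1-\tfrac1d\abs{\trace{U^\dagger V}}}$ for (b), and contractivity of trace norm under CPTP maps (applied to $\calX\otimes\identity$) for (c). You also correctly spotted and repaired the typo $\diamonddistance{\calV U}{\calW W}$ in part (a). No gaps.
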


The quantum union bound provides a lower bound on the success probability of binary projective measurements. We recall it below.

\begin{fact}[Quantum Union Bound] \label{fact:quantum-union-bound}
    Let $\rho$ be a quantum state and let $P_1,...,P_k$ be $0-1$ projective measurements such that $\trace{P_t \rho} \geq 1 - \eps_t$ for all $1 \leq t \leq k$.
    Suppose we perform all $k$ measurements sequentially.
    Then the probability that all measurements return $1$ simultaneously is at least
    \[
    1 - 4 \sum_{t=1}^k \eps_t
    \]
\end{fact}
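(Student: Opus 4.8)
The statement to prove is the Quantum Union Bound (\cref{fact:quantum-union-bound}): given a state $\rho$ and $0$--$1$ projective measurements $P_1,\dots,P_k$ with $\trace{P_t\rho}\ge 1-\eps_t$, performing all $k$ in sequence yields all-$1$ outcomes with probability at least $1-4\sum_t\eps_t$. I would reduce to the pure-state case first: writing $\rho=\sum_j p_j\ketbra{\psi_j}{\psi_j}$, the all-accept probability is $\sum_j p_j\cdot\trace{P_k\cdots P_1\ketbra{\psi_j}{\psi_j}P_1\cdots P_k}$, and each $\trace{P_t\ketbra{\psi_j}{\psi_j}}$ need not individually be close to $1$, so a naive convexity split is too lossy. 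The cleaner route is to observe that the all-accept probability on $\rho$ is $\trace{P_k\cdots P_1\,\rho\,P_1\cdots P_k}=\norm{P_k\cdots P_1\sqrt{\rho}}_F^2$ after purifying, so it suffices to bound $\norm{P_k\cdots P_1\ket{\psi}}^2$ from below for an arbitrary unit vector $\ket{\psi}$ with $\trace{P_t\ketbra{\psi}{\psi}}\ge 1-\eps_t$ — wait, that last hypothesis isn't given per-$\ket{\psi}$, so I actually want to carry $\rho$ through directly rather than purify carelessly.

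Here is the plan I would follow. Work directly with $\rho$. Define $Q_t = \identity - P_t$, so $\trace{Q_t\rho}\le\eps_t$. The probability that \emph{some} measurement rejects is, by a telescoping/hybrid identity, at most $\sum_{t=1}^k$ of the probability that measurement $t$ is the \emph{first} to reject; the event "$t$ is first to reject" is the projector $Q_t P_{t-1}\cdots P_1$ applied to $\rho$, with probability $\trace{Q_t P_{t-1}\cdots P_1\,\rho\,P_1\cdots P_{t-1}Q_t}$. So it suffices to show $\sum_{t=1}^k\trace{Q_t P_{t-1}\cdots P_1\,\rho\,P_1\cdots P_{t-1}Q_t}\le 4\sum_t\eps_t$. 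Now I would bound each term: let $\rho_{t-1} = P_{t-1}\cdots P_1\,\rho\,P_1\cdots P_{t-1}$ (the unnormalized post-state after $t-1$ accepts). The $t$-th first-rejection probability is $\trace{Q_t\rho_{t-1}}$. The key point is $\rho_{t-1}$ is close to $\rho$ in the sense that $\trace{(\rho-\rho_{t-1})}=$ the rejection probability accumulated so far, and more usefully one controls $\norm{\sqrt{\rho}-\sqrt{\rho}\,P_1\cdots P_{t-1}}$ or works with the Gentle Measurement Lemma: applying a sequence of projectors each of which accepts with high probability perturbs $\sqrt{\rho}$ by a controlled amount.

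The cleanest self-contained argument is Aaronson's "pretty good measurement"-style telescoping with the triangle inequality on the vector $\sqrt{\rho}$ viewed in the Hilbert--Schmidt space. Concretely, set $v_0=\sqrt{\rho}$ and $v_t = P_t v_{t-1}$ (operators, with $\norm{\cdot}_F$ the HS norm). Then $\norm{v_0}_F^2=1$ and $\norm{v_{t-1}}_F^2-\norm{v_t}_F^2 = \trace{Q_t\, v_{t-1}v_{t-1}^\dagger}\cdot(\text{stuff})$ — I need to be careful since $v_{t-1}v_{t-1}^\dagger\ne\rho_{t-1}$ in general, but the telescoping still goes through because $\trace{Q_t v_{t-1}v_{t-1}^\dagger}\le\trace{Q_t v_0 v_0^\dagger}+|\text{correction}|$, and the correction is bounded by $\norm{v_0-v_{t-1}}_F\cdot(\norm{v_0}_F+\norm{v_{t-1}}_F)\le 2\sqrt{\sum_{s<t}\eps_s}$ via Cauchy--Schwarz, since $\norm{v_0-v_{t-1}}_F^2\le\sum_{s<t}\trace{Q_s v_{s-1}v_{s-1}^\dagger}$. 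Summing the first-rejection probabilities and plugging in yields a bound of the form $\sum_t\eps_t + 2\sqrt{(\sum_t\eps_t)^2}\le$ something — the constant $4$ comes out of bounding $\sqrt{a}\sqrt{b}$ terms by AM-GM. The main obstacle is exactly this bookkeeping: tracking how much the post-measurement operator $v_{t-1}$ has drifted from $\sqrt{\rho}$ and showing the accumulated drift is controlled by $\sqrt{\sum_{s}\eps_s}$ so that the cross-terms don't blow up the constant past $4$. Since this is a standard cited lemma (Gao's or Aaronson's form of the quantum union bound), in the actual write-up I would most likely just cite it; if a proof is wanted, the telescoping-in-HS-norm argument above is the one I would expand, being careful that the hypothesis $\trace{P_t\rho}\ge1-\eps_t$ is about the \emph{original} $\rho$, which is precisely what forces the drift-tracking rather than a one-line per-term bound.
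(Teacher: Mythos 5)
The paper does not prove \cref{fact:quantum-union-bound}; it is stated as a known result, due essentially to Gao (``Quantum Union Bounds for Sequential Projective Measurements,'' 2015), improving on an earlier bound of Aaronson. Your closing instinct --- that in the write-up you would simply cite it --- is exactly what the paper does, so at that level the ``approaches'' agree. But the sketch you give in the middle does not, as written, prove the stated bound, and the gap is not just bookkeeping.

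The specific step that fails is the claim that $\norm{v_0 - v_{t-1}}_F^2 \le \sum_{s<t}\trace{Q_s v_{s-1}v_{s-1}^\dagger}$. The vectors $Q_s v_{s-1}$ whose sum telescopes to $v_0 - v_{t-1}$ are not mutually orthogonal, so Pythagoras does not apply; and in fact the inequality is simply false. To see this cleanly, note that the telescoping identity $\norm{v_{s-1}}_F^2 = \norm{v_s}_F^2 + \norm{Q_s v_{s-1}}_F^2$ gives $\sum_{s<t}\trace{Q_s v_{s-1}v_{s-1}^\dagger} = 1 - \norm{v_{t-1}}_F^2$, while $\norm{v_0-v_{t-1}}_F^2 = 1 + \norm{v_{t-1}}_F^2 - 2\Re\langle v_0, v_{t-1}\rangle_F$. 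Your claimed inequality is therefore equivalent to $\norm{v_{t-1}}_F^2 \le \Re\langle v_0, v_{t-1}\rangle_F$, which can fail already for two non-commuting rank-one projectors (e.g.\ one can arrange $\langle v_0, v_{2}\rangle_F$ to be purely imaginary while $\norm{v_2}_F > 0$). If you instead control the cross-terms by Cauchy--Schwarz you get $\norm{v_0 - v_{t-1}}_F^2 \le (t-1)\sum_{s<t}\norm{Q_s v_{s-1}}_F^2$, and the resulting bound on the failure probability grows with $k$ rather than giving a universal constant $4$. Obtaining the constant $4$ requires a genuinely different argument --- Gao's proof proceeds by induction via an operator inequality for a pair of projectors rather than by accumulating a drift bound on $\sqrt{\rho}$ --- and this is precisely why the lemma is cited rather than re-derived.
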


A useful primitive in quantum learning theory is full process tomography \cite{baldwin2014quantum,haah2023query}. 
\begin{fact}
    Let $\calU$ be a $d$-dimensional quantum channel. There exists an algorithm which has sample and time complexity $\poly(d,1/\eps,\log(1/\delta))$ and returns a classical description of a channel $\wh{\calU}$ that is $\eps$-close to $\calU$ in diamond norm, succeeding with probability at least $1-\delta$.
\end{fact}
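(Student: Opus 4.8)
The plan is to reduce channel tomography to state tomography via the Choi--Jamiolkowski isomorphism. Let $\ket{\Omega} = \frac{1}{\sqrt{d}}\sum_{i=1}^{d}\ket{i}\ket{i}$ be the maximally entangled state on two $d$-dimensional registers and let $\rho_{\calU} = (\calU \ot \identity)(\ketbra{\Omega}{\Omega})$ be the Choi state of $\calU$, a density matrix of dimension $d^2$. Each query to $\calU$ — apply $\calU$ to the first half of a freshly prepared $\ket{\Omega}$ — produces exactly one copy of $\rho_{\calU}$, so it suffices to: (i) learn $\rho_{\calU}$ to small trace distance by full state tomography; (ii) post-process the estimate into a valid Choi matrix; and (iii) translate trace-distance closeness of Choi states into diamond-distance closeness of the underlying channels. (An alternative route is direct process tomography using a tomographically complete family of input states and output measurements; the Choi-state formulation is cleaner for controlling the diamond norm.)

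For step (i), I would invoke a standard full state tomography subroutine: with $m = \poly(d, 1/\eta, \log(1/\delta))$ copies of a $d^2$-dimensional state one obtains, with probability at least $1-\delta$, a classical estimate $\sigma$ satisfying $\norm{\sigma - \rho_{\calU}}_1 \le \eta$, in time $\poly(d, 1/\eta, \log(1/\delta))$. For step (ii), the set $\calC$ of valid Choi matrices — positive semidefinite operators $X$ on $\C^d \ot \C^d$ with $d \cdot \tr_1(X)$ equal to the identity on $\C^d$ — is convex and semidefinite-representable, so one can compute in $\poly(d)$ time a minimizer $\wh{\rho} \in \calC$ of $\norm{X - \sigma}_1$. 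Since $\rho_{\calU} \in \calC$ is feasible, $\norm{\wh{\rho} - \sigma}_1 \le \norm{\rho_{\calU} - \sigma}_1 \le \eta$, and hence $\norm{\wh{\rho} - \rho_{\calU}}_1 \le 2\eta$ by the triangle inequality; let $\wh{\calU}$ be the channel whose Choi state is $\wh{\rho}$, with a Kraus or Stinespring description extracted from $\wh{\rho}$ in $\poly(d)$ time.

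For step (iii), I would use the standard bound $\diamonddistance{\calU}{\wh{\calU}} \le O(d) \cdot \norm{\rho_{\calU} - \rho_{\wh{\calU}}}_1$, which follows by expressing an arbitrary (possibly ancilla-entangled) input state in the Choi picture and bounding the resulting output difference. Setting $\eta = \Theta(\eps/d)$ then gives $\diamonddistance{\calU}{\wh{\calU}} \le \eps$, and the overall sample and time complexity is $\poly(d, 1/\eta, \log(1/\delta)) = \poly(d, 1/\eps, \log(1/\delta))$, as claimed. The main obstacle is precisely step (iii): the conversion from Choi trace distance to diamond distance is lossy by a factor of $d$, so the tomography in step (i) must be run at the finer precision $\eta \asymp \eps/d$ — harmless for a $\poly(d)$ target, but a genuine obstruction were one to seek $d$-independent complexity. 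A secondary subtlety, handled by the projection in step (ii), is that a raw tomographic estimate need not be the Choi matrix of any completely positive trace-preserving map, so it cannot be returned as-is.
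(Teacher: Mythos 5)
The paper does not prove this fact; it is stated as a black-box primitive with citations to \cite{baldwin2014quantum,haah2023query}, so there is no ``paper proof'' to compare against. Your sketch is a correct and essentially standard derivation: the reduction to state tomography of the $d^2$-dimensional Choi state, the SDP projection onto valid (CPTP) Choi matrices to ensure the output is a genuine channel, and the conversion bound $\diamonddistance{\calU}{\wh{\calU}} \le d \cdot \norm{\rho_{\calU}-\rho_{\wh{\calU}}}_1$ (obtained by writing any purified input as $(I\otimes A)\ket{\Omega'}$ with $\norm{A}_F=1$ and using $\norm{A}_\infty \le \norm{A}_F$) are all correct, and the resulting $\poly(d,1/\eps,\log(1/\delta))$ complexity matches the claim. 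You also rightly flag the two delicate points — that the raw tomographic estimate need not be a Choi matrix of any CPTP map, and that the Choi-to-diamond conversion loses a factor of $d$ — both of which are harmless here because the target complexity is already $\poly(d)$.
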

We will refer to this procedure as $(\eps,\delta)$-tomography and apply it as a black-box routine in our learning algorithm.

\subsection{Linear Algebra}
Many of our proofs involve robustness analysis of linear algebraic procedures such as the singular value decomposition. In this section, we review several fundamental facts that will be useful in those computations.

First, we recall Weyl's theorem, which relates the singular values of a matrix to those of a matrix that is perturbed.

\begin{fact}[Weyl's Theorem]\label{fact:weyl}
Let $A, B$ be $m \times m$ matrices. Then for all $k, \ell \geq 1$,
\[
\sigma_{k + \ell - 1}{(A + B)} \leq \sigma_{k}{(A)} + \sigma_{\ell}{(B)}.
\]
\end{fact}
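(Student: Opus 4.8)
The plan is to derive Weyl's inequality for singular values from the Eckart--Young characterization of singular values as distances to low-rank matrices. First I would establish (or, if preferred, cite) the fact that for any $m \times m$ matrix $M$ and any $j \geq 1$,
\[
\sigma_j(M) = \min\{\,\norm{M - R} : \operatorname{rank}(R) \leq j-1\,\},
\]
with the minimum attained by the truncated singular value decomposition of $M$ keeping its top $j-1$ singular components. The ``$\leq$'' direction is immediate from exhibiting that truncation. For the ``$\geq$'' direction, given any $R$ with $\operatorname{rank}(R) \leq j-1$, the kernel of $R$ has dimension at least $m - j + 1$, so it intersects the span of the top $j$ right singular vectors of $M$ (a $j$-dimensional space) in a nonzero vector; picking a unit $x$ in that intersection gives $\norm{(M-R)x} = \norm{Mx} \geq \sigma_j(M)$, hence $\norm{M-R} \geq \sigma_j(M)$.

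With this in hand the theorem is a short argument. Let $R_A$ be a best rank-$(k-1)$ approximation to $A$, so $\norm{A - R_A} = \sigma_k(A)$, and let $R_B$ be a best rank-$(\ell-1)$ approximation to $B$, so $\norm{B - R_B} = \sigma_\ell(B)$. Then $R_A + R_B$ has rank at most $(k-1) + (\ell-1) = k + \ell - 2$ by subadditivity of rank, so it is a valid competitor in the minimization defining $\sigma_{k+\ell-1}(A+B)$. Therefore, using the triangle inequality for the operator norm,
\[
\sigma_{k+\ell-1}(A+B) \leq \norm{(A+B) - (R_A+R_B)} \leq \norm{A - R_A} + \norm{B - R_B} = \sigma_k(A) + \sigma_\ell(B).
\]

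There is no serious obstacle here, as this is a classical fact; the one step meriting care is the dimension-counting in the ``$\geq$'' direction of the Eckart--Young characterization (two subspaces of an $m$-dimensional space whose dimensions sum to more than $m$ must intersect nontrivially), together with the bound $\norm{Mx} \geq \sigma_j(M)$ for $x$ in the span of the top $j$ right singular vectors. An alternative route I would keep in reserve is the Courant--Fischer min--max formula $\sigma_j(M) = \min_{\codim(S) = j-1} \max_{x \in S,\ \norm{x}=1} \norm{Mx}$: intersecting an optimal subspace for $A$ (codimension $k-1$) with an optimal subspace for $B$ (codimension $\ell-1$) yields a subspace of codimension at most $k+\ell-2$ on which $\norm{(A+B)x} \leq \norm{Ax} + \norm{Bx} \leq \sigma_k(A) + \sigma_\ell(B)$, giving the same conclusion. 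Either way the proof is brief; I would present the low-rank-approximation version as the cleanest.
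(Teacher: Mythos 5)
Your proof is correct. The paper states Weyl's inequality as a \emph{Fact} without supplying a proof, deferring to standard matrix-analysis references (\cite{bhatia1997matrix,stewart1990matrix}), so there is no in-paper argument to compare against. Your route through the Eckart--Young characterization $\sigma_j(M) = \min\{\norm{M-R} : \operatorname{rank}(R)\le j-1\}$ is one of the two classical proofs (the other being the Courant--Fischer min--max argument you mention in reserve), and all the steps check out: the dimension count $\dim\ker R + j \ge (m-j+1)+j > m$ forces a nontrivial intersection, the rank subadditivity $\operatorname{rank}(R_A+R_B)\le (k-1)+(\ell-1)$ makes $R_A+R_B$ a legal competitor for $\sigma_{k+\ell-1}(A+B)$, and the triangle inequality for the operator norm closes the argument. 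Either of your two variants would serve as a self-contained proof of the cited fact.
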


Keeping the same notation as above, we show that matrices that are close in operator norm also have close singular values.

\begin{proposition}\label{prop:singular-value-closeness}
    Let $M, \wh{M} \in \R^{m \times m}$ Then for all $k \in [m]$,
    \[
    \abs{\sigma_k{(M)} - \sigma_k{(\wh{M})}} \leq \norm{M - \wh{M}}.
    \]
\end{proposition}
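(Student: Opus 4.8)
The plan is to derive \Cref{prop:singular-value-closeness} directly from Weyl's theorem (\cref{fact:weyl}) by writing $M = \wh{M} + (M - \wh{M})$ and applying the stated inequality with a suitable choice of indices. Concretely, set $A = \wh{M}$ and $B = M - \wh{M}$, and take $\ell = 1$; then \cref{fact:weyl} gives
\[
\sigma_k(M) \;=\; \sigma_{k+1-1}(A+B) \;\leq\; \sigma_k(\wh{M}) + \sigma_1(M - \wh{M}) \;=\; \sigma_k(\wh{M}) + \norm{M - \wh{M}},
\]
where in the last step I use that the top singular value equals the operator norm. This yields $\sigma_k(M) - \sigma_k(\wh{M}) \leq \norm{M - \wh{M}}$.

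Next I would obtain the reverse inequality by symmetry: the same argument with the roles of $M$ and $\wh{M}$ exchanged (i.e. $A = M$, $B = \wh{M} - M$, $\ell = 1$) gives $\sigma_k(\wh{M}) - \sigma_k(M) \leq \norm{\wh{M} - M} = \norm{M - \wh{M}}$. Combining the two bounds yields $\abs{\sigma_k(M) - \sigma_k(\wh{M})} \leq \norm{M - \wh{M}}$, as claimed. One minor point to state carefully is that $\sigma_1$ of a matrix is its operator norm, which is standard; and that the hypothesis $k \in [m]$ ensures all indices appearing are in range, so no $-\infty$ conventions are needed.

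I don't anticipate any real obstacle here — this is a two-line consequence of Weyl's theorem, and the only thing to be careful about is the bookkeeping of indices (using $\ell = 1$ so that $k + \ell - 1 = k$) and invoking both directions to get the absolute value. If one preferred not to cite Weyl's theorem, an alternative is to use the variational (min-max) characterization $\sigma_k(M) = \min_{\dim S = m-k+1} \max_{x \in S, \norm{x}=1} \norm{Mx}$ together with the triangle inequality $\abs{\norm{Mx} - \norm{\wh{M}x}} \leq \norm{(M-\wh{M})x} \leq \norm{M - \wh{M}}$, but the Weyl's theorem route is shorter given it is already available in the excerpt.
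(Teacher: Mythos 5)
Your proof is correct and follows essentially the same route as the paper's: both apply Weyl's theorem (\cref{fact:weyl}) with $\ell = 1$ twice — once in each direction — and then use $\sigma_1(\cdot) = \norm{\cdot}$ to conclude. The only cosmetic difference is the labeling of $A$ and $B$ in the two invocations, which does not affect the argument.
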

\begin{proof}
    Let $\Delta = \wh{M} - M$. Invoking \cref{fact:weyl} with $A = M$, $B = \Delta$, and $\ell = 1$, we can see that
    \[
    \sigma_k{(\wh{M})} - \sigma_k{(M)} \leq \sigma_1{(\Delta)}.
    \]
    On the other hand, invoking \cref{fact:weyl} with $A = \wh{M}$, $B = -\Delta$, and $\ell = 1$, we have that
    \[
     \sigma_k{(M)} - \sigma_k{(\wh{M})} \leq \sigma_1{(-\Delta)} = \sigma_1{(\Delta)}.
    \]
    The result is immediate from the fact that $\norm{\Delta} = \sigma_1(\Delta)$.
\end{proof}

A common notion of distance between inner product subspaces is the set of \emph{canonical angles}.
\begin{definition}
    Let $\calE$ and $\calF$ be two dimension $k$ subspaces of an inner product space. The first canonical angle between $\calE$ and $\calF$ is
    \[
    \theta_1 = \cos^{-1}\left(\max_{x \in \calE} \max_{y \in \calF} \abs{\langle x, y \rangle}\right)
    \]
    Take $x_1$ and $y_1$ to be the maximizers above.
    For any $1 \leq \ell \leq k$, the $\ell$-th canonical angle is
    \[
    \theta_{\ell} = \cos^{-1}\left(\max_{\substack{x \in \calE \\  \forall r \in [\ell - 1], \langle x, x_r \rangle = 0}}\ \  \max_{\substack{x \in \calF \\ \forall r \in [\ell - 1], \langle y, y_r \rangle = 0 }} \abs{\langle x, y \rangle}\right)
    \]
\end{definition}
A consequence of the above definition is that, taking $\{v_1,...,v_k\}$ to be an orthonormal basis for $\calE$, there exists an orthonormal basis $\{w_1,...,w_k\}$ for $\calF$ such that
\[
\min_{a \in [k]} \abs{\langle v_a, w_a\rangle} \geq \min_{a \in [k]} \cos \theta_a = \cos \theta_k.
\]

The Davis-Kahan Theorem quantifies the angle between subspaces spanned by the first $k$ eigenvectors of two matrices, one which is slightly perturbed from the other:

\begin{fact}[Davis-Kahan Theorem \cite{davis1970rotation}] \label{fact:davis-kahan}
    Let $A$ and $B$ be two symmetric matrices. Define the subspaces $\calE$ and $\calF$ to be the span of the first $k$ eigenvectors (corresponding to the largest $k$ eigenvalues) of $A$ and $B$, respectively.
    Furthermore, let $\eigengap = \lambda_k(A) - \lambda_{k+1}(B)$ and let $\Theta$ be the diagonal matrix of the $k$ canonical angles between $\calE$ and $\calF$. Then
    \[
    \norm{\sin \Theta} \leq \frac{\norm{A- B}}{\eigengap}
    \]
    In other words, if $\theta$ is the largest of the $k$ canonical angles between $\calE$ and $\calF$,
    \[
    \sin \theta \leq \frac{\norm{A - B}}{\eigengap}.
    \]
\end{fact}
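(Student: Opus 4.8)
The plan is to reduce the claimed bound to a single operator-norm estimate and then close it with a Sylvester-equation argument. Write $P$ for the orthogonal projector onto $\calE$ (the span of the top $k$ eigenvectors of $A$) and $Q$ for the orthogonal projector onto $\calF^\perp$ (equivalently, the span of the eigenvectors of $B$ associated with its smallest eigenvalues). First I would recall the classical identification of canonical angles with projector norms: the nonzero singular values of $PQ$ are exactly the sines of the canonical angles $\theta_1,\dots,\theta_k$ between $\calE$ and $\calF$, so that $\norm{\sin\Theta}=\norm{PQ}$, and since $\sin\Theta$ is diagonal with nonnegative entries this common value equals $\sin\theta$ for the largest angle $\theta$. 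Hence both displayed inequalities follow once we show $\norm{PQ}\le\norm{A-B}/\eigengap$.

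For the main estimate, set $Y=PQ$ and use that $P$ commutes with $A$ (its range is $A$-invariant) and $Q$ commutes with $B$. Then
\[
P(A-B)Q \;=\; APQ - PQB \;=\; AY - YB,
\]
so $\norm{AY-YB}=\norm{P(A-B)Q}\le\norm{A-B}$. For the matching lower bound, note $Y=PY=YQ$, so a top right-singular vector $v$ of $Y$ lies in $\mathrm{range}(Q)$ and the paired left-singular vector $u$ lies in $\mathrm{range}(P)$, with $Yv=\norm{Y}\,u$ and $Y^\transpose u=\norm{Y}\,v$. Since $A$ restricted to $\mathrm{range}(P)$ has all eigenvalues $\ge\lambda_k(A)$ while $B$ restricted to $\mathrm{range}(Q)$ has all eigenvalues $\le\lambda_{k+1}(B)$, the Rayleigh quotients satisfy $\langle u,Au\rangle\ge\lambda_k(A)$ and $\langle v,Bv\rangle\le\lambda_{k+1}(B)$, and a direct computation gives
\[
\langle u,(AY-YB)v\rangle \;=\; \norm{Y}\bigl(\langle u,Au\rangle-\langle v,Bv\rangle\bigr)\;\ge\;\norm{Y}\cdot\eigengap .
\]
Combining this with $\langle u,(AY-YB)v\rangle\le\norm{AY-YB}\le\norm{A-B}$ yields $\norm{Y}\,\eigengap\le\norm{A-B}$, which is the desired bound (vacuous unless $\eigengap>0$).

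I do not expect a serious obstacle here; the points requiring care are (i) the projector/canonical-angle identification, which is standard but needs the correct auxiliary lemma about singular values of a product of two projectors, and (ii) justifying that the extremal singular pair of $Y$ can be taken inside the invariant subspaces $\mathrm{range}(P)$ and $\mathrm{range}(Q)$ so that the one-sided Rayleigh bounds genuinely apply — this follows from the idempotence relations $Y=PY=YQ$. One should also note that the argument is insensitive to eigenvalue multiplicities of $A$ at the $k$-th level, since $A|_{\mathrm{range}(P)}\succeq\lambda_k(A)\,\identity$ holds for any valid choice of top-$k$ eigenvectors of $A$.
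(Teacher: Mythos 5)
The paper does not actually prove this statement: Davis--Kahan is stated as a Fact cited to \cite{davis1970rotation}, and the preliminaries explicitly defer the proof to matrix analysis texts such as \cite{bhatia1997matrix} and \cite{stewart1990matrix}. Your argument is a correct, self-contained proof; it is, in essence, the standard textbook derivation of the $\sin\Theta$ theorem, so there is no contrast with the paper to draw. I checked the details and they hold up. The identification $\norm{\sin\Theta}=\norm{PQ}$ (with $P$ the projector onto $\calE$ and $Q$ the projector onto $\calF^\perp$) is the standard characterization of canonical angles via singular values of a projector product. The pinching $P(A-B)Q = APQ - PQB$ is valid because $PA=AP$ and $QB=BQ$, which follow from the $A$- and $B$-invariance of $\mathrm{range}(P)$ and $\mathrm{range}(Q)$. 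The top singular pair $(u,v)$ of $Y=PQ$ really does lie in $\mathrm{range}(P)\times\mathrm{range}(Q)$: from $Y=PY$ one gets $u=\norm{Y}^{-1}PYv\in\mathrm{range}(P)$, and from $Y^\transpose = QY^\transpose$ one gets $v\in\mathrm{range}(Q)$, assuming $\norm{Y}>0$ (the case $\norm{Y}=0$ being trivial). The Rayleigh-quotient estimate then cleanly yields $\norm{Y}\,\eigengap\le\norm{A-B}$. Your remarks about vacuity when $\eigengap\le 0$ and insensitivity to degeneracy at $\lambda_k(A)$ are also correct. In short, you have supplied a valid proof of a result the paper takes as given.
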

Going forward, we will denote $\theta(\calE, \calF)$ to be the largest canonical angle between subspaces $\calE$ and $\calF$ of equal dimension. Similarly, for two orthogonal matrices $V,W$, we denote by $\theta(V,W)$ the largest angle between two corresponding columns of $V$ and $W$.

Finally, we will need the following concentration inequality for approximating matrices using unbiased estimators.
\begin{fact}[Matrix Bernstein Inequality] \label{fact:matrix-bernstein}
Let $X^{(1)},...,X^{(m)} \in \R^{n \times n}$ be i.i.d. zero-mean matrix variables such that $\norm{X^{(k)}} \leq M$ for all $1 \leq k \leq m$ and for all $a,b \in [n]$, $\Var[X_{ab}^{(1)}] \leq \sigma^2$. Then
\[
\Pr\left[\frac{1}{m}\norm{X^{(1)} + ... + X^{(m)}} \geq \eps\right] \leq 2n \exp\left(-\frac{m\eps^2}{2\left(\sigma^2 n + M \eps\right)}\right).
\]
    
\end{fact}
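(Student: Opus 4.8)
The plan is to reduce the statement to the self-adjoint matrix Laplace transform method (the standard route to matrix Bernstein, following Ahlswede--Winter and Tropp) applied to a Hermitian dilation. First I would symmetrize: for each $k$ set
\[
Y^{(k)} = \begin{pmatrix} 0 & X^{(k)} \\ (X^{(k)})^{\transpose} & 0 \end{pmatrix} \in \R^{2n \times 2n},
\]
which is symmetric and zero-mean, satisfies $\norm{Y^{(k)}} = \norm{X^{(k)}} \le M$, and has spectrum symmetric about $0$, so that $\norm{X^{(1)} + \dots + X^{(m)}} = \lambda_{\max}\big(Y^{(1)} + \dots + Y^{(m)}\big)$. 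It therefore suffices to bound the single one-sided tail $\Pr[\lambda_{\max}(\sum_k Y^{(k)}) \ge m\eps]$; this one event captures both signs, and the matrix trace-exponential bound on a $2n$-dimensional space is exactly what produces the prefactor $2n$ in the claim.

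Next I would run the matrix Laplace transform argument. For any $\theta > 0$, Markov applied to $\tr\exp$ gives $\Pr[\lambda_{\max}(\textstyle\sum_k Y^{(k)}) \ge m\eps] \le e^{-\theta m \eps}\,\mathbb{E}\,\tr\exp(\theta\textstyle\sum_k Y^{(k)})$, and by Lieb's concavity theorem (equivalently, subadditivity of matrix cumulant generating functions) together with the i.i.d.\ hypothesis,
\[
\mathbb{E}\,\tr\exp\big(\theta\textstyle\sum_k Y^{(k)}\big) \le 2n\cdot \exp\big(m\,\lambda_{\max}(\log\mathbb{E}\,e^{\theta Y^{(1)}})\big).
\]
The per-term cumulant is controlled by the usual Bernstein mgf estimate: since $\norm{Y^{(1)}}\le M$ and $\mathbb{E}[Y^{(1)}] = 0$, one has $\log\mathbb{E}\,e^{\theta Y^{(1)}} \preceq g(\theta)\,\mathbb{E}[(Y^{(1)})^2]$ with $g(\theta) = \tfrac{\theta^2/2}{1 - M\theta/3}$ for $0 < \theta < 3/M$. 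Because $(Y^{(1)})^2$ is block-diagonal, the only matrix statistic that enters is $v := \norm{\mathbb{E}[(Y^{(1)})^2]} = \max\big(\norm{\mathbb{E}[X^{(1)}(X^{(1)})^{\transpose}]},\ \norm{\mathbb{E}[(X^{(1)})^{\transpose}X^{(1)}]}\big)$, so the bound becomes $2n\exp(-\theta m\eps + m\,g(\theta)\,v)$.

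It then remains to (i) bound $v$ from the per-entry hypothesis and (ii) optimize $\theta$. For (i), writing $A = \mathbb{E}[X^{(1)}(X^{(1)})^{\transpose}]$ we have $A_{ab} = \sum_c \mathbb{E}[X^{(1)}_{ac}X^{(1)}_{bc}]$, and Cauchy--Schwarz with zero-mean gives $\abs{\mathbb{E}[X^{(1)}_{ac}X^{(1)}_{bc}]} \le \sqrt{\Var[X^{(1)}_{ac}]\Var[X^{(1)}_{bc}]} \le \sigma^2$; plugging this into the Schur-test bound $\norm{A} \le \max_a\sum_b\abs{A_{ab}}$ (valid since $A$ is symmetric) bounds $v$ in terms of $\sigma$ and $n$, and this estimate is precisely the source of the $\sigma^2 n$ term. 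For (ii), the standard Bernstein choice of $\theta$ (after dividing through by $m$, roughly $\theta = \eps/(v + M\eps/3)$ up to constants) turns the bound into $2n\exp\!\big(-\tfrac{m\eps^2/2}{v + M\eps/3}\big)$; substituting the bound on $v$ and absorbing the harmless slack $M\eps/3 \le M\eps$ yields the stated inequality.

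The main obstacle is the matrix-analytic core: subadditivity of matrix cumulant generating functions rests on Lieb's concavity theorem (a genuinely nontrivial ingredient), and the operator-inequality comparison $\log\mathbb{E}\,e^{\theta Y}\preceq g(\theta)\mathbb{E}[Y^2]$ needs care with operator monotonicity. If one is content to invoke the self-adjoint matrix Bernstein inequality as a black box, the remaining work collapses to the dilation step plus the variance bookkeeping in (i) --- and there the one place demanding care is tracking the precise dependence on $n$ when passing from the per-entry variance bound to the matrix variance parameter $v$, since naive estimates can lose a factor of $n$.
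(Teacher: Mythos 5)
The paper records this as an unproven black-box fact cited to textbooks, so there is no in-paper proof to compare against. Your outline --- Hermitian dilation, matrix Laplace transform, Lieb's concavity, the Bernstein mgf comparison, and optimization over $\theta$ --- is the standard route and is fine at that level. The step you flag as delicate, however, is not merely delicate: as you have set it up, it fails. Your Schur-test bookkeeping yields $v = \norm{\Ex[X^{(1)}(X^{(1)})^\transpose]} \leq n^2\sigma^2$, not $n\sigma^2$: each $\abs{A_{ab}} = \abs{\sum_c \Ex[X^{(1)}_{ac}X^{(1)}_{bc}]} \leq n\sigma^2$ by Cauchy--Schwarz, so $\max_a\sum_b\abs{A_{ab}} \leq n\cdot n\sigma^2$. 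That factor of $n$ cannot be recovered under the stated hypotheses alone, and in fact the inequality as written is false in full generality. Take $X^{(k)} = \epsilon_k\,\mathbf{1}\mathbf{1}^\transpose/\sqrt{n}$ with i.i.d.\ Rademacher $\epsilon_k$: then $\Var[X^{(1)}_{ab}] = 1/n =: \sigma^2$ and $\norm{X^{(1)}} = \sqrt{n} =: M$, yet $\Ex[X^{(1)}(X^{(1)})^\transpose] = \mathbf{1}\mathbf{1}^\transpose$ has operator norm $n = n^2\sigma^2$; with $m \gg n^2$ and $\eps = \sqrt{n}/(2\sqrt{m})$, the left side is $\Theta(1)$ by the CLT while the claimed right side is about $2n\,e^{-n/8}$.

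What rescues the fact in the regime where the paper uses it is an unstated hypothesis you should surface: the entries of each $X^{(k)}$ are mutually independent. This holds in the algorithms here because each entry of $\wh{M}$ is estimated from its own separate batch of measurements. Under entrywise independence, the off-diagonal entries of $\Ex[X^{(1)}(X^{(1)})^\transpose]$ vanish (for $a \neq b$, $\Ex[X_{ac}X_{bc}] = \Ex[X_{ac}]\,\Ex[X_{bc}] = 0$), so the matrix is diagonal with $(a,a)$ entry $\sum_c \Var[X_{ac}] \leq n\sigma^2$, giving $v \leq n\sigma^2$ as desired; the same applies to $\Ex[(X^{(1)})^\transpose X^{(1)}]$. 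Adding that hypothesis, the rest of your argument closes cleanly (and the harmless relaxation $M\eps/3 \leq M\eps$ gives the stated constant). Without it, your method only proves the weaker bound with $\sigma^2 n^2$ in place of $\sigma^2 n$, which is genuinely the best possible.
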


For our purposes, $M$ and $\sigma$ can be upper bounded by $1$. As such, we will replace the bound above with the term
\[
2n \exp\left(-\frac{m\eps^2}{3n}\right).
\]

Proofs of many of these theorems can be found in matrix analysis texts such as \cite{bhatia1997matrix} or \cite{stewart1990matrix}.

\subsection{Fermions}
Fermions are particles that obey the Pauli exclusion principle: no two particles can simultaneously occupy the same \emph{mode}, or state. Fermionic systems are most commonly represented in terms of \emph{Majorana operators}. We will consider systems with $n$ modes: in this setting, the Majorana operators are $2^n \times 2^n$ Hermitian matrices $\majorana_1,...,\majorana_{2n}$ that obey the so-called Fermi-Dirac commutation relations:
\[
\{\majorana_a, \majorana_b\} = 2\delta_{ab}
\]
Simply put, $\majorana_a^2 = \identity$ and $\majorana_a \majorana_b = -\majorana_b \majorana_a$ if $a \neq b$. This commutation relation also implies that $\trace{\majorana_a \majorana_b} = 2^n \delta_{ab}$.

The Jordan-Wigner transformation relates $n$-qubit systems to fermionic systems on $n$ modes. Under this transformation, the majorana operators can be thought of as analogues of Pauli operators:
\[
\majorana_{2a-1} = Z^{\otimes a-1}\otimes X \otimes I^{\otimes n-a},\quad \majorana_{2a} = Z^{\otimes a-1}\otimes Y \otimes I^{\otimes n-a}.
\]
This transformation can also be performed in the other direction, converting arbitrary Pauli operators to products of Majorana operators.

Recall that from a \say{basis} of $2n$ independent Pauli operators, one can generate the entire Pauli group, which themselves form a basis for all $2^n \times 2^n$ matrices. This idea extends to the Majorana operators and their products. For this reason, we will refer to the set $\{\majorana_1,...,\majorana_{2n}\}$ as a \say{Majorana basis}. Given a matrix $M \in \orthgrp(2n)$, we can define a new set of operators
\[
\newmajorana_a = \sum_{b=1}^{2n}M_{ba} \majorana_b
\]
that can easily be seen to obey the Fermi-Dirac commutation relations $\{\newmajorana_a,\newmajorana_b\} = 2\delta_{ab}$. The $\newmajorana_a$ are often referred to as \emph{rotated} Majorana operators. We will say a set of Majorana operators is \emph{independent} if no operator in the set can be formed from a linear combination of the other operators.
Any size $2n$ set of independent rotated Majorana operators forms a Majorana basis. 

A \emph{fermionic Gaussian unitary} (hereafter simply \say{Gaussian unitary}) on $n$ modes is a unitary $U$ such that for every $a \in [2n]$, we have
\[
U \majorana_a U^\dagger = \sum_{b=1}^{2n} M_{ba} \majorana_b
\]
where $M \in \orthgrp(2n)$. In other words, a Gaussian unitary acts by conjugation on a basis of Majorana operators by rotating them into a different basis of Majorana operators.

Critically, the $2^n \times 2^n$ matrix $U$ can be uniquely associated with a $2n \times 2n$ matrix representation $M \in \orthgrp(2n)$, known as its \emph{correlation matrix}. The conciseness of this representation means the action of Gaussian unitaries on the standard vacuum state are efficient to simulate classically \cite{valiant2001quantum,knill2001fermionic}. This representation is also central in developing efficient learning algorithms for Gaussian unitaries \cite{oszmaniec2020fermion}. In fact, the diamond distance between two Gaussian operators can be related to the operator distance betweeen their corresponding correlation matrices.
\begin{fact}[\cite{oszmaniec2020fermion}, Lemma 6]\label{fact:gaussian-diamond-distance}
    Let $G$ and $\wh{G}$ be two $n$-mode Gaussian operators corresponding to correlation matrices $M, \wh{M} \in \orthgrp(2n)$, respectively. Then
    \[
    \diamonddistance{G}{\wh{G}} \leq 2n \norm{M - \wh{M}}.
    \]
\end{fact}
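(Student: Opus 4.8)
The plan is to reduce the two‑unitary statement to a one‑unitary statement — how far a single Gaussian unitary can be from the identity — and then bound that quantity using the exponential parametrization of $\splorthgrp(2n)$.

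First I would exploit the group structure of Gaussian unitaries. The composition $\wh G G^\dagger$ is again Gaussian, and a one‑line computation with the defining relation $G\majorana_a G^\dagger = \sum_b M_{ba}\majorana_b$ shows that its correlation matrix is $R := \wh M M^\transpose$. Since a Gaussian unitary is determined by its correlation matrix up to a global phase, there is a Gaussian unitary $W$ with correlation matrix $R$ and a phase $e^{i\varphi}$ with $\wh G = e^{i\varphi} W G$; as channels $\wh{\mathcal G} = \mathcal W \mathcal G$. By the right‑invariance of the diamond distance (\cref{fact:diamond-distance-contraction}(a)), $\diamonddistance{G}{\wh G} = \diamonddistance{\mathcal W}{\mathcal I} = \diamonddistance{W}{\identity}$, and since $M$ is orthogonal $\norm{R-\identity} = \norm{(\wh M - M)M^\transpose} = \norm{\wh M - M}$. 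So it suffices to prove $\diamonddistance{W}{\identity} \le 2n\,\norm{R-\identity}$.

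If $\norm{R-\identity}\ge 1$ this is immediate, since the diamond distance between unitary channels is at most $2 \le 2n$. Otherwise, write $R = e^{A}$ for a real antisymmetric $2n\times 2n$ matrix $A$ (the exponential map from antisymmetric matrices onto $\splorthgrp(2n)$ is surjective, and $R$ lies in that group because $\norm{R-\identity}<2$ excludes the eigenvalue $-1$). Orthogonally conjugating $A$ into the block form $\bigoplus_{k=1}^n \left(\begin{smallmatrix} 0 & \theta_k \\ -\theta_k & 0\end{smallmatrix}\right)$, the eigenvalues of $R$ are $e^{\pm i\theta_k}$, so $\norm{R-\identity} = \max_k 2\sin(\abs{\theta_k}/2)$; in this regime all $\abs{\theta_k}\le \pi/2$, whence $\max_k\abs{\theta_k}\le \tfrac{\pi}{2}\norm{R-\identity}$. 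Now invoke the Gaussian‑unitary/$\splorthgrp(2n)$ dictionary (Thouless' theorem): $W$ is generated by a quadratic Majorana Hamiltonian $H$ which, in the rotated Majorana frame diagonalizing $A$, is (up to the usual normalization conventions) $\tfrac12\sum_k \theta_k\,(i\tmajorana_{2k-1}\tmajorana_{2k})$. The operators $i\tmajorana_{2k-1}\tmajorana_{2k}$ are mutually commuting Hermitian involutions, so $\norm{H}\le \tfrac12\sum_k\abs{\theta_k}$.

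Finally, $\norm{W-\identity} = \max_j\abs{e^{-i\mu_j}-1}\le \norm{H}$ for $\mu_j$ the eigenvalues of $H$, and for any unitary $\diamonddistance{W}{\identity}\le 2\norm{W-\identity}$ (write $W\rho W^\dagger-\rho = W\rho(W^\dagger-\identity)+(W-\identity)\rho$, apply Hölder for Schatten norms, and note that tensoring with $\identity_d$ does not change the operator norm). Chaining everything,
\[
\diamonddistance{W}{\identity}\ \le\ 2\norm{H}\ \le\ \sum_{k=1}^n\abs{\theta_k}\ \le\ n\max_k\abs{\theta_k}\ \le\ \tfrac{\pi n}{2}\,\norm{R-\identity}\ \le\ 2n\,\norm{\wh M - M}.
\]
The one step that needs care is invoking the Gaussian‑unitary/$\splorthgrp(2n)$ correspondence correctly, so that the spectral norm of the generating Hamiltonian is controlled by the rotation angles $\theta_k$ (equivalently, by $\norm{R-\identity}$); the rest is routine perturbation theory, and indeed the argument has constant to spare relative to the claimed $2n$.
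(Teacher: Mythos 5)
The statement you are proving is cited in this paper as Fact~\ref{fact:gaussian-diamond-distance}, taken from Oszmaniec et al.\ (Lemma~6); the paper itself gives no proof, so there is no in-paper argument to compare against. Your self-contained proof is correct, and the reduction-plus-Lie-algebra route is the natural one. A few points worth confirming: (i) the correlation matrix of $\wh{G}G^\dagger$ is $\wh{M}M^\transpose$ by \cref{prop:conj-mat-mult}, and right-invariance of the diamond distance (\cref{fact:diamond-distance-contraction}(a)) does collapse the problem to $\diamonddistance{W}{\identity}$; (ii) the observation that $\norm{R-\identity}<2$ forbids a $-1$ eigenvalue, hence forces $\det R=1$ (complex eigenvalues come in conjugate pairs, so the $-1$-eigenspace has odd dimension iff $\det R=-1$), is exactly what is needed to land in $\splorthgrp(2n)$ and take a real antisymmetric logarithm; (iii) the block normalization $H=\tfrac12\sum_k\theta_k\,i\tmajorana_{2k-1}\tmajorana_{2k}$ matches the convention $U\majorana_a U^\dagger=\sum_b M_{ba}\majorana_b$ used here (a direct check on a single Majorana pair gives the rotation block $\bigl(\begin{smallmatrix}\cos\theta & \sin\theta\\ -\sin\theta & \cos\theta\end{smallmatrix}\bigr)=e^{A}$); and (iv) the chain $\diamonddistance{W}{\identity}\le 2\norm{W-\identity}\le 2\norm{H}\le\sum_k\abs{\theta_k}\le n\max_k\abs{\theta_k}\le\tfrac{\pi n}{2}\norm{R-\identity}$, with $\tfrac{\pi}{2}<2$, indeed closes the argument with slack to spare, and the Jordan-inequality step $\max_k\abs{\theta_k}\le\tfrac{\pi}{2}\norm{R-\identity}$ is valid on the full principal branch $\abs{\theta_k}\le\pi$, so you do not even need the sharper $\abs{\theta_k}<\pi/3$ that $\norm{R-\identity}<1$ actually gives. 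The handling of the $\norm{R-\identity}\ge 1$ regime via the trivial bound $\diamonddistance{G}{\wh{G}}\le 2\le 2n\norm{M-\wh{M}}$ is also correct under the unnormalized diamond-distance convention used in \cref{sec:prelims}.
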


Of course, we can define a correlation matrix for an arbitrary unitary $U$ on $n$ modes. However, if $U$ is not Gaussian, the correlation matrix will not be in $\orthgrp(2n)$. In particular, at least one of the rows will have norm less than one. This is because any unitary can be written as a linear combination of \say{Majorana monomials} $\prod_{a \in S} \majorana_a$ for $S \subseteq [2n]$, and by the orthonormality of the Majorana operators under the Hilbert-Schmidt inner product, the sum of the squares of these coefficients is equal to $1$. The correlation matrix only includes the degree-$1$ coefficients, so it's clear that all of the rows of the correlation matrix will have norm at most $1$. A consequence of this fact is that the singular values of the correlation of any $n$-mode fermionic unitary are between $0$ and $1$, with only Gaussian unitaries having $n$ singular values equal to $1$. In \cref{sec:analysis}, we will explore the structure of the correlation matrix for non-Gaussian unitary operators.

Under the Jordan-Wigner transformation, Gaussian unitaries correspond to so-called \emph{matchgate circuits}, which were defined by Valiant as an example of quantum circuits that admitted efficient classical simulations \cite{valiant2001quantum}.
While matchgates circuits Gaussian unitaries are non-universal for quantum computation (and in fact efficiently simulable on classical computers), the addition of any non-Gaussian gate (most commonly the SWAP gate) forms a universal gate set that is capable of approximating arbitrary fermionic unitary operators. In general, we will say a fermionic unitary or gate is $\kappa$-local if it acts on at most $\kappa$ modes.

A particularly important state is the so-called \emph{fermionic EPR state} on $2n$ modes:
\[
\sigma = 2^{-2n}\prod_{a=1}^{2n} (\identity + i\majorana_{a} \majorana_{2n+a}).
\]

This state can be generated using Gaussian unitaries, making it a Gaussian state. Consider a unitary $U$ that acts only on the first $n$ modes (in particular, $U$ and $U^\dagger$ commutes with $\majorana_{2n+1},...,\majorana_{4n}$. Analogously to the qubit setting, we can define the fermionic Choi state $\sigma_U$ for $U$:
\[
\sigma_U = U\sigma U^\dagger = 2^{-2n} \prod_{a=1}^{2n} U(\identity + i\majorana_{a} \majorana_{2n+a})U^\dagger = 2^{-2n} \prod_{a=1}^{2n} (\identity + iU \majorana_{a} U^\dagger \majorana_{2n+a}).
\]
Thus, by the orthogonality of the $\majorana_a$, we have:
\[
\trace{-i\majorana_{a}\majorana_{2n + b} \cdot \sigma_U} = 2^{-n} \trace{\majorana_a U \majorana_b U^\dagger} = M_{ab}.
\]

Fermionic Choi states of $k$-mode unitaries are in exact correspondence with $2k$-mode fermionic states that are maximally entangled across the partition $([k], [2k]\setminus[k])$. 

Analogously to the qubit case, the fidelity between fermionic Choi states of two unitaries can be related to their Frobenius distance.

\begin{fact} \label{fact:choi-frobenius}
    Take $\sigma$ to be the fermionic EPR state:
    \[
    \sigma = 2^{-2n} \prod_{a=1}^{2n} \left(I + i \majorana_{a}\majorana_{2n+a}\right).
    \]
    Let $\sigma_U = (U \otimes I)\sigma(U \otimes I)^\dagger$ and let $\sigma_V$ be defined accordingly. Then $\frobeniusdistance{U,V} = \sqrt{1 - \trace{\sigma_U \sigma_V}}$.
\end{fact}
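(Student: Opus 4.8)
The plan is to reduce the identity to a short Majorana computation, the point being that $\sigma$ (and hence $\sigma_U,\sigma_V$) is a \emph{pure} state. First I would check purity: the $2n$ operators $P_a:=i\majorana_a\majorana_{2n+a}$ are Hermitian (using $\majorana_a^\dagger=\majorana_a$ and $\majorana_{2n+a}\majorana_a=-\majorana_a\majorana_{2n+a}$), square to $\identity$ (as $\majorana_a^2=\majorana_{2n+a}^2=\identity$), and pairwise commute (for $a\neq b$, sliding $\majorana_b\majorana_{2n+b}$ past $\majorana_a\majorana_{2n+a}$ costs four anticommutations). Hence $\sigma=\prod_{a=1}^{2n}\tfrac12(\identity+P_a)$ is the orthogonal projector onto the common $(+1)$-eigenspace of the $P_a$, and since the $P_a$ are multiplicatively independent (a subset product $\prod_{a\in T}P_a$ is, up to a scalar, a nontrivial Majorana monomial for $T\neq\emptyset$, hence $\neq\pm\identity$), this eigenspace is $1$-dimensional. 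So $\sigma=\ketbra{\sigma}{\sigma}$, $\sigma_U=\ketbra{\sigma_U}{\sigma_U}$ with $\ket{\sigma_U}=(U\otimes\identity)\ket{\sigma}$, and similarly for $V$.

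Being pure, $\trace{\sigma_U\sigma_V}=\abs{\braket{\sigma_U|\sigma_V}}^2=\abs{\trace{(W\otimes\identity)\,\sigma}}^2$ with $W:=U^\dagger V$. The main step is to evaluate $\trace{(W\otimes\identity)\,\sigma}$. Expanding $\sigma=2^{-2n}\sum_{S\subseteq[2n]}i^{\abs{S}}\prod_{a\in S}(\majorana_a\majorana_{2n+a})$, I claim each $S\neq\emptyset$ term vanishes. Fix $a_0\in S$ and set $X_S:=(W\otimes\identity)\prod_{a\in S}(\majorana_a\majorana_{2n+a})$. Now $\majorana_{2n+a_0}$ commutes with $W\otimes\identity$ (a unitary on the first $n$ modes commutes with $\majorana_{2n+1},\dots,\majorana_{4n}$, as recorded in the excerpt), and it anticommutes with exactly $2\abs{S}-1$ of the Majorana factors in the product — the $\abs{S}$ factors $\majorana_a$ ($a\in S$) and the $\abs{S}-1$ factors $\majorana_{2n+a}$ ($a\in S\setminus\{a_0\}$) — so $X_S$ anticommutes with $\majorana_{2n+a_0}$ and $\trace{X_S}=\trace{\majorana_{2n+a_0}X_S\majorana_{2n+a_0}}=-\trace{X_S}=0$. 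Only $S=\emptyset$ remains, giving $\trace{(W\otimes\identity)\,\sigma}=2^{-2n}\trace{W\otimes\identity}=2^{-n}\trace{W}$, where $\trace{W\otimes\identity}=2^n\trace{W}$ because $([n],[2n]\setminus[n])$ is a genuine tensor factorization (a parity-preserving operator on the first $n$ modes needs no Jordan--Wigner string beyond mode $n$). Therefore $\trace{\sigma_U\sigma_V}=2^{-2n}\abs{\trace{U^\dagger V}}^2$, equivalently $\sqrt{\trace{\sigma_U\sigma_V}}=2^{-n}\abs{\trace{U^\dagger V}}$, and plugging this into the closed form $\frobeniusdistance{U,V}=\sqrt{1-2^{-n}\abs{\trace{U^\dagger V}}}$ from the preliminaries finishes the argument.

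I expect the main things to get right to be (i) the purity argument, specifically the independence/rank count forcing the common eigenspace to be one-dimensional, and (ii) the normalization constant $2^{-n}$, which hinges both on parity-preservation of $W$ (so that $\majorana_{2n+a}$ genuinely commutes with $W\otimes\identity$ in spite of the Jordan--Wigner string carried by the second register's Majoranas) and on $\trace{W\otimes\identity}=2^n\trace{W}$. Everything else is routine bookkeeping with the Fermi--Dirac relations. An alternative is to run the overlap computation through the ``ricochet'' identity $(W\otimes\identity)\ket{\sigma}=(\identity\otimes\tau(W))\ket{\sigma}$ for a maximally entangled $\ket{\sigma}$, but the Majorana expansion is the most self-contained route and fits the machinery already set up in the excerpt.
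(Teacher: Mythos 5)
The paper states this as a bare Fact with no proof, so there is nothing to compare routes against; your proposal has to stand on its own. The substantive Majorana work does stand: the purity argument (commuting Hermitian involutions $P_a$, multiplicative independence via tracelessness of nontrivial Majorana monomials, hence a rank-one joint $+1$ eigenprojector), the vanishing of every $S\neq\emptyset$ term by conjugating with $\majorana_{2n+a_0}$, and the normalization $\trace{(W\otimes\identity)\sigma}=2^{-n}\trace{W}$ are all correct, including the two subtleties you flagged (parity-preservation of $W$ and the genuine tensor factorization). What you have genuinely proven is
\[
\trace{\sigma_U\sigma_V}=\ABS{\braket{\sigma_U|\sigma_V}}^2=2^{-2n}\abs{\trace{U^\dagger V}}^2 .
\]

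The gap is the last sentence. Writing $x=2^{-n}\abs{\trace{U^\dagger V}}$, your computation gives $\sqrt{1-\trace{\sigma_U\sigma_V}}=\sqrt{1-x^2}$, while the closed form from the preliminaries gives $\frobeniusdistance{U,V}=\sqrt{1-x}$. These are not equal; they differ by the factor $\sqrt{1+x}\in[1,\sqrt{2}]$ and coincide only at $x\in\{0,1\}$. Indeed you yourself record $\sqrt{\trace{\sigma_U\sigma_V}}=x$, which is exactly the wrong power for the substitution you then perform, so ``plugging this in'' does not finish the argument --- it is a non sequitur. The honest conclusion of your computation is that the Fact as literally stated is consistent with the ``root infidelity'' convention $\frobeniusdistance{U,V}=\sqrt{1-\abs{\trace{U^\dagger V}}^2/2^{2n}}$ but not with the closed form $\sqrt{1-2^{-n}\abs{\trace{U^\dagger V}}}$ given in the preliminaries; the two conventions agree only up to a universal factor of at most $\sqrt{2}$ (harmless for the paper's applications, but not an identity). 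You should have flagged this mismatch rather than asserted the proof closes; as written, the final step would fail.
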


For a more thorough introduction to the Majorana representation of fermionic systems, we refer the reader to \cite{bravyi2004lagrangian}.

\section{The Structure of Gaussian Dimension} \label{sec:analysis}

We know that the correlation matrix of a Gaussian unitary is in $\orthgrp(2n)$ and provides a complete description of its action. What can the correlation matrix tell us about non-Gaussian unitaries? To answer this question, we formally define a notion of unitary \emph{Gaussian dimension}, which quantifies the degree of similarity of a given operator to a Gaussian one. Formally, the Gaussian dimension of a fermionic unitary $U$ on $n$ modes is defined as follows:
\begin{definition}\label{def:gaussian-dimension}
    We say a unitary $U$ on $n$ modes has Gaussian dimension $k$ if there exist operators $\newmajorana_1,...,\newmajorana_k$ such that:
    \begin{enumerate}[label=(\arabic*)]
        \item For each $a \in [k]$, $\newmajorana_a = \sum_{b=1}^{2n} M_{ba} \majorana_{b}$, where $\sum_{b=1}^{2n} M_{ba}^2 = 1$.
        \item For each $a,b \in [k]$ such that $a \neq b$, $\trace{\newmajorana_a \newmajorana_b} = 0$.
        \item For all $a \in [2n]$, $U \newmajorana_a U^\dagger = \sum_{b =1}^{2n} M_{ba}^\prime \majorana_b$, where $\sum_{b=1}^{2n}{M_{ba}^\prime}^2 = 1$.
    \end{enumerate} 
\end{definition}
Note that the Gaussian dimension is an integer between $0$ and $2n$ and is maximized for Gaussian unitaries, as one might expect.
At a high level, $U$ acts as a Gaussian operator on $k$ independent rotated Majorana operators, again conjugating them into a possibly different set of $k$ independent rotated Majorana operators. 
In this section, we prove a number of structural results about the Gaussian dimension that will be critical in analyzing our property testing algorithm in \cref{alg:property-testing} and our learning algorithm in \cref{sec:efficient-learning}.

\subsection{Gaussian Dimension and Correlation Matrices}

In this section, we prove a collection of basic facts about the Gaussian dimension and its connection to correlation matrices, motivating it as a measure of fermionic Gaussianity.
First, we show that the Gaussian dimension is invariant under multiplication by Gaussian unitaries.
\begin{proposition} \label{prop:gaussian-dim-invariance}
    Let $U$ be an $n$-mode fermionic unitary operator of Gaussian dimension at least $k$ and let $G_A$ and $G_B$ be arbitrary Gaussian operators. Then $G_A U$ and $U G_B$ both have Gaussian dimension at least $k$.
\end{proposition}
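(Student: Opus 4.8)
The plan is to prove the two claims separately, using the definition of Gaussian dimension directly. Let $\newmajorana_1,\dots,\newmajorana_k$ be rotated Majorana operators witnessing that $U$ has Gaussian dimension at least $k$, i.e.\ they satisfy conditions (1)--(3) of \cref{def:gaussian-dimension} with respect to $U$. The key observation is that conjugation by a Gaussian unitary maps rotated Majorana operators to rotated Majorana operators, since if $\newmajorana = \sum_b M_{b} \majorana_b$ with $\sum_b M_b^2 = 1$ and $G$ is Gaussian with correlation matrix $N \in \orthgrp(2n)$, then $G \newmajorana G^\dagger = \sum_{b} M_b (G \majorana_b G^\dagger) = \sum_{b,c} M_b N_{cb} \majorana_c$, and the coefficient vector $N M$ still has unit norm because $N$ is orthogonal. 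Moreover, orthogonality of trace inner products is preserved: $\trace{(G \newmajorana_a G^\dagger)(G \newmajorana_b G^\dagger)} = \trace{\newmajorana_a \newmajorana_b}$ by cyclicity of the trace and unitarity of $G$.

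For $G_A U$: I would use the \emph{same} witnesses $\newmajorana_1,\dots,\newmajorana_k$. Conditions (1) and (2) are about the $\newmajorana_a$ alone and do not involve the unitary, so they still hold. For condition (3), observe that $(G_A U) \newmajorana_a (G_A U)^\dagger = G_A (U \newmajorana_a U^\dagger) G_A^\dagger$. By hypothesis $U \newmajorana_a U^\dagger$ is a rotated Majorana operator, and by the observation above conjugating it by the Gaussian unitary $G_A$ yields another rotated Majorana operator (unit-norm coefficient vector). Hence $G_A U$ has Gaussian dimension at least $k$.

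For $U G_B$: here the witnesses must change. Let $N_B \in \orthgrp(2n)$ be the correlation matrix of $G_B$, so $G_B^\dagger \majorana_a G_B = \sum_c (N_B)_{ca}^{} \majorana_c$ up to the appropriate transpose convention --- the point being that $G_B^\dagger$ also acts by an orthogonal conjugation. Define $\newmajorana_a' = G_B \newmajorana_a G_B^\dagger$ for $a \in [k]$. By the observation above these are rotated Majorana operators satisfying (1), and (2) is preserved since trace inner products are invariant under conjugation by the unitary $G_B$. For (3): $(U G_B) \newmajorana_a' (U G_B)^\dagger = U G_B (G_B \newmajorana_a G_B^\dagger) G_B^\dagger U^\dagger$. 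This is not quite $U \newmajorana_a U^\dagger$, so instead I would choose the witnesses more carefully: set $\newmajorana_a' = G_B \newmajorana_a G_B^\dagger$ is wrong; rather take any rotated Majorana $\newmajorana_a'$ such that $G_B^\dagger \newmajorana_a' G_B = \newmajorana_a$, i.e.\ $\newmajorana_a' = G_B \newmajorana_a G_B^\dagger$ --- wait, that gives $G_B^\dagger \newmajorana_a' G_B = \newmajorana_a$, which is exactly what we want. Then $(U G_B)\newmajorana_a' (U G_B)^\dagger = U (G_B^\dagger \newmajorana_a' G_B)^{} \cdot$ --- no: $(UG_B) \newmajorana_a' (UG_B)^\dagger = U (G_B \newmajorana_a' G_B^\dagger) U^\dagger$. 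So I actually want $G_B \newmajorana_a' G_B^\dagger = \newmajorana_a$, i.e.\ $\newmajorana_a' = G_B^\dagger \newmajorana_a G_B$. With that choice, (1) and (2) hold as before (orthogonal conjugation, unitary invariance of trace), and (3) becomes $(UG_B)\newmajorana_a'(UG_B)^\dagger = U \newmajorana_a U^\dagger$, which is a rotated Majorana operator by hypothesis. Hence $U G_B$ has Gaussian dimension at least $k$.

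The only subtlety --- and the step I would be most careful about --- is bookkeeping the transpose/adjoint conventions so that ``conjugation by a Gaussian unitary acts on Majorana-coefficient vectors by an orthogonal matrix'' is applied in the correct direction for $G_B^\dagger$ versus $G_B$, and confirming that conditions (1) and (2) are genuinely intrinsic to the operators $\newmajorana_a'$ (independent of $U$). Once the correct witnesses are identified as above, everything else is a one-line verification. I would state the ``conjugation preserves rotated-Majorana-ness and orthogonality'' observation as a small internal lemma first, then apply it twice.
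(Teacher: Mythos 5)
Your proof is correct. For $G_AU$ it is identical to the paper's argument: keep the same witnesses and use the fact that conjugation by a Gaussian unitary sends rotated Majorana operators to rotated Majorana operators. For $UG_B$ you differ slightly in packaging: you change the witnesses directly to $\newmajorana_a' = G_B^\dagger \newmajorana_a G_B$, so that $(UG_B)\newmajorana_a'(UG_B)^\dagger = U\newmajorana_a U^\dagger$ is a rotated Majorana by hypothesis, whereas the paper instead notes that $U^\dagger$ has the same Gaussian dimension as $U$ and reduces $UG_B$ to the already-proved case via $(UG_B)^\dagger = G_B^\dagger U^\dagger$. Both are one-line reductions resting on the same observation, and your direct witness construction is, if anything, marginally more self-contained since it avoids having to verify that Gaussian dimension is preserved under taking adjoints. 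The conventions you were worried about work out: conditions (1) and (2) of \cref{def:gaussian-dimension} are intrinsic to the operators and are preserved under orthogonal conjugation and unitary invariance of the trace, exactly as you say.
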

\begin{proof}
    By assumption, there exist Majorana bases $\{\newmajorana_1,...,\newmajorana_{2n}\}$ and $\{\newnewmajorana_1,...,\newnewmajorana_{2n}\}$ such that $U \newmajorana_a U^\dagger = \newnewmajorana_a$ for $a \in [k]$. It can easily be seen that every Gaussian unitary operator takes a rotated Majorana operator to some other rotated Majorana operator. Thus $G_A\newnewmajorana_a G_A^\dagger = \newnewmajorana_a^\prime$, where $\newnewmajorana_a^\prime$ is a rotated Majorana operator, for all $a \in [2n]$. As such, 
    \[
    G_A U \newmajorana_a U^\dagger G_A^\dagger = G_A \newnewmajorana_a G_A^\dagger = \newnewmajorana_a^\prime
    \]
    for all $a \in [k]$. This implies that $G_A U$ has Gaussian dimension at least $k$.

    Furthermore, we observe that for any unitary $U$ with Gaussian dimension at least $k$, its inverse $U^\dagger$ also has Gaussian dimension at least $k$ . Indeed, for $a \in [k]$ we have $U^\dagger \newnewmajorana_a U = \newmajorana_a$. Thus,  $U G_B$ has the same Gaussian dimension as $G_B^\dagger U^\dagger$. Since the inverse of any Gaussian operator is itself Gaussian, $G_B^\dagger U^\dagger$ has Gaussian dimension at least $k$ by the argument in the previous paragraph.
\end{proof}

Next, we connect the Gaussian dimension of a unitary to its circuit complexity by showing that the class of unitaries with high Gaussian dimension subsumes the class of unitaries containing a small number of local non-Gaussian gates. To do this, we will use a compression lemma of Mele and Herasymenko.

\begin{fact}[\cite{mele2024efficient}, Theorem 26] \label{fact:mele-herasymenko-compression}
    Let $U$ be fermionic unitary containing $t$ non-Gaussian gates that each act on at most $\kappa$ modes. Then there exist Gaussian unitaries $\leftgaussian, \rightgaussian$ and a unitary $u$ acting on at most $\kappa t$ modes such that
    \[
    U = G_A(\identity_{n - \kappa t} \otimes u) G_B^\dagger
    \]
\end{fact}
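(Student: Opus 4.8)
The plan is to prove \cref{fact:mele-herasymenko-compression} by induction on the number $t$ of non-Gaussian gates, peeling off one non-Gaussian gate at a time and absorbing it into a growing ``non-Gaussian block'' while pushing every Gaussian layer to the outside. First I would write the circuit for $U$ as an alternating product $U = G_{t+1}\, g_t\, G_t \cdots g_1\, G_1$, where each $G_i$ is the (Gaussian) product of all matchgate layers lying between consecutive non-Gaussian gates and each $g_i$ acts on at most $\kappa$ modes. The base case $t=0$ is immediate: a Gaussian $U$ already has the claimed form with $u=\identity$ on $0$ modes.

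For the inductive step, the prefix $g_{t-1}G_{t-1}\cdots g_1 G_1$ has $t-1$ non-Gaussian gates, so by the inductive hypothesis it equals $P\,(\identity_{n-m}\otimes v)\,Q^\dagger$ with $P,Q$ Gaussian, $m = \kappa(t-1)$, and $v$ supported on the last $m$ modes; hence $U = G_{t+1}\, g_t\, (G_t P)\,(\identity_{n-m}\otimes v)\,Q^\dagger$. Setting $R := G_t P$ (Gaussian), the crux is to bring $g_t\, R\,(\identity_{n-m}\otimes v)$ into compressed form. The key bookkeeping device is the \emph{Majorana support} of a unitary $W$: the smallest subspace $\calK_W \subseteq \spn_\R\{\majorana_1,\dots,\majorana_{2n}\}$ whose generated $*$-algebra contains $W$, with dimension $d(W)$. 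Three facts drive the argument: (i) a $\kappa$-local gate has $d(\cdot)\le 2\kappa$ — after conjugating by a Gaussian fermionic-SWAP network (which merely permutes the $\majorana_a$, hence is Gaussian, even though the qubit SWAP is not) to move its modes to an initial segment, at the cost of conjugating $R$ by that Gaussian; (ii) Gaussian conjugation acts as an orthogonal rotation on the Majorana span, so it preserves $d(\cdot)$, giving $d(R^\dagger g_t R)\le 2\kappa$; and (iii) the Majorana support of a product lies in the sum of the Majorana supports of the factors. Combining (ii) and (iii), the unitary $(R^\dagger g_t R)(\identity_{n-m}\otimes v)$ has Majorana support of dimension at most $2\kappa + 2m = 2\kappa t$.

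To finish, I would use the structural fact that any unitary $W$ with $d(W)=d$ can be written as $S\,(\identity_{n-\lceil d/2\rceil}\otimes w)\,S^\dagger$ for some Gaussian $S$: an orthonormal basis of $\calK_W$ automatically satisfies the Fermi–Dirac relations $\{b_i,b_j\}=2\delta_{ij}$ (orthonormality of coefficient vectors is exactly this relation), so it extends to a full rotated Majorana basis; the corresponding element of $\orthgrp(2n)$ defines a Gaussian $S$ that maps the standard Majoranas of the last $\lceil d/2\rceil$ modes onto this basis, and the $*$-algebra generated by the Majoranas of a consecutive block of modes is exactly the algebra of operators on those modes, so $S^\dagger W S = \identity_{n-\lceil d/2\rceil}\otimes w$. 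Applying this with $d\le 2\kappa t$ gives $g_t R(\identity_{n-m}\otimes v) = R\cdot W = (RS)(\identity_{n-\kappa t}\otimes u)S^\dagger$ with $u$ on at most $\kappa t$ modes, and substituting back yields $U = (G_{t+1}RS)(\identity_{n-\kappa t}\otimes u)(QS)^\dagger$, closing the induction with $\leftgaussian = G_{t+1}RS$ and $\rightgaussian = QS$.

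The main obstacle I anticipate is making the notion of Majorana support and its subadditivity under products fully rigorous — in particular, checking that the $*$-algebra generated by a subspace of the Majorana span depends only on the subspace (not on a chosen spanning set), and handling the fact that a $\kappa$-mode gate need not preserve fermion parity, so $d(\cdot)$ can genuinely be odd; this is exactly why one rounds to $\lceil d/2\rceil$ modes, which is harmless here since $2\kappa t$ is even. A secondary subtlety is the fermionic-SWAP reduction in fact (i): one must verify that relabeling modes into an initial segment is implementable by Gaussians, which holds precisely because fermionic SWAP permutes Majorana operators.
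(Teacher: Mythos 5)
The paper does not prove this statement---it imports it from \cite{mele2024efficient}---but your inductive architecture (peel off one non-Gaussian gate at a time, conjugate it by the accumulated Gaussians so it lives in the algebra generated by a low-dimensional rotated Majorana subspace, add supports under products, and rotate the accumulated support back onto a contiguous block of modes) is exactly the standard route to this compression lemma. Your first worry is also a non-issue: you never need a canonical \emph{minimal} Majorana support, only upper bounds of the form \say{$W$ lies in the algebra generated by \emph{some} $d$-dimensional subspace,} and subadditivity of such upper bounds under products is immediate.

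The genuine gap is in the structural fact used to close the induction, and it is a parity issue rather than the odd-$d$ issue you flag. The $*$-algebra generated by a \emph{trailing} block $\majorana_{2j+1},\dots,\majorana_{2n}$ is \emph{not} the algebra of operators of the form $\identity_j \otimes w$: its odd monomials carry Jordan--Wigner $Z$-strings on the first $j$ qubits (e.g.\ $\majorana_{2j+1} = Z^{\otimes j}\otimes X \otimes \identity$); only the parity-\emph{even} part of that algebra has the tensor-product form. So $S^\dagger W S$ lying in the algebra generated by the last $2\lceil d/2\rceil$ Majoranas does not yield $S^\dagger W S = \identity \otimes w$ unless $W$ is even. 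The same issue falsifies your bound $d(\identity_{n-m}\otimes v)\le 2m$ for parity-odd $v$ (for $n=2$, $\identity\otimes X = -i\majorana_1\majorana_2\majorana_3$ has Majorana support of dimension $3$, not $2$), and it breaks the fermionic-SWAP relocation in your fact (i) for odd gates, since conjugation then drags a $Z$ onto the vacated qubit. All of this vanishes if every gate is assumed to preserve fermion parity---the standard assumption for fermionic circuits and the setting of the cited theorem---because then every operator in sight is even and the conditions \say{$W$ commutes with $\majorana_1,\dots,\majorana_{2j}$} and \say{$W$ lies in the algebra generated by $\majorana_{2j+1},\dots,\majorana_{2n}$} coincide; your argument then goes through. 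Either state that hypothesis explicitly, or prove the structural step via the commutation characterization ($W = \identity_j\otimes w$ iff $W$ commutes with $\majorana_1,\dots,\majorana_{2j}$, which is what \cref{lem:compression} of this paper correctly uses), absorbing any residual $Z$-strings---themselves Gaussian, being products of evenly many Majoranas---into $\leftgaussian$.
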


\begin{lemma}\label{lem:gates-gaussian-dim}
    Let $U$ consist of at most $t$ $\kappa$-local non-Gaussian gates. Then $U$ has Gaussian dimension at least $2n - 2\kappa t$.
\end{lemma}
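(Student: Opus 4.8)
The plan is to combine the compression lemma of Mele and Herasymenko (\cref{fact:mele-herasymenko-compression}) with the Gaussian-invariance of the Gaussian dimension (\cref{prop:gaussian-dim-invariance}). First I would apply \cref{fact:mele-herasymenko-compression} to write $U = \leftgaussian(\identity_{n-\kappa t} \otimes u)\rightgaussian^\dagger$ for some Gaussian unitaries $\leftgaussian,\rightgaussian$ and some unitary $u$ acting on (at most) the last $\kappa t$ modes. By \cref{prop:gaussian-dim-invariance}, left- and right-multiplication by Gaussian unitaries does not decrease the Gaussian dimension, so it suffices to show that $V := \identity_{n-\kappa t} \otimes u$ has Gaussian dimension at least $2n - 2\kappa t$.

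The key observation is that $V$ acts as the identity on the first $n - \kappa t$ modes, hence (via the Jordan--Wigner transformation, using that the operators $\majorana_1,\dots,\majorana_{2(n-\kappa t)}$ are supported on qubits $1,\dots,n-\kappa t$ while $u$ is supported on qubits $n-\kappa t+1,\dots,n$) it commutes with $\majorana_1,\dots,\majorana_{2(n-\kappa t)}$. Therefore $V \majorana_a V^\dagger = \majorana_a$ for every $a \in [2n - 2\kappa t]$. I would then verify that the family $\newmajorana_a := \majorana_a$, $a \in [2n-2\kappa t]$, witnesses the three conditions of \cref{def:gaussian-dimension}: condition (1) holds since each $\majorana_a$ is trivially a rotated Majorana operator (coefficient vector a standard basis vector); condition (2) holds because $\trace{\majorana_a\majorana_b} = 2^n\delta_{ab}$; and condition (3) holds because $V\majorana_a V^\dagger = \majorana_a$ is itself a rotated Majorana operator. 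This gives Gaussian dimension of $V$ at least $2n - 2\kappa t$, and applying \cref{prop:gaussian-dim-invariance} twice (once for $\leftgaussian$ on the left, once for $\rightgaussian^\dagger$ on the right) transfers the bound to $U$.

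There is no substantial obstacle here; the statement is essentially a bookkeeping consequence of the two cited results. The only point requiring mild care is the commutation claim: one must confirm that, after the compression lemma, the non-Gaussian block $u$ can indeed be taken to be supported on a fixed set of $\kappa t$ modes disjoint from the modes carrying $\majorana_1,\dots,\majorana_{2(n-\kappa t)}$ — which is exactly the form in which \cref{fact:mele-herasymenko-compression} is stated — and that Jordan--Wigner locality then yields the commutation. I would also remark that the bound is tight up to the constant in front of $\kappa t$ in the sense that a single $\kappa$-local non-Gaussian gate can already drop the Gaussian dimension by $2\kappa$.
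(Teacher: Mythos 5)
Your proof is correct and follows essentially the same approach as the paper's: both start from the Mele--Herasymenko compression lemma and then observe that the first $2(n - \kappa t)$ (rotated) Majorana operators are conjugated into rotated Majorana operators. The only cosmetic difference is that you route the reduction through \cref{prop:gaussian-dim-invariance} and check the commutation for $\identity_{n-\kappa t} \otimes u$ directly, whereas the paper traces the full conjugation chain $U \newmajorana_a U^\dagger = \newnewmajorana_a$ for $a \le 2(n-\kappa t)$ in one computation; the underlying argument is identical.
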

\begin{proof}
    If $U$ is such an operator, then by \cref{fact:mele-herasymenko-compression}, there exist Gaussian unitaries $\leftgaussian$, $\rightgaussian$ and a unitary $u$ on $\kappa t$ fermionic modes such that
    \[
    U = \leftgaussian(\identity_{n - \kappa t} \otimes u) \rightgaussian^\dagger.
    \] 
    Suppose for each $a \in [2n]$, $\leftgaussian \majorana_a \leftgaussian^\dagger = \newnewmajorana_a$ and $\rightgaussian \majorana_a \rightgaussian^\dagger = \newmajorana_a$. The $\newmajorana_a$ and $\newnewmajorana_a$ are rotated majorana operators and the pairwise trace inner product of distinct $\newmajorana_a$ (and therefore of distinct $\newnewmajorana_a$) is easily seen to be $0$. For $a \leq 2(n - \kappa t)$, we have
    \begin{align*}
        U\majorana_aU^\dagger &= \leftgaussian(\identity_{n - \kappa t} \otimes u) \rightgaussian^\dagger \newmajorana_a \rightgaussian(\identity_{n - \kappa t} \otimes u)^\dagger \leftgaussian^\dagger \\
        &= \leftgaussian(\identity_{n - \kappa t} \otimes u) \majorana_a (\identity_{n - \kappa t} \otimes u)^\dagger \leftgaussian^\dagger \\
        &= \leftgaussian \majorana_a \leftgaussian^\dagger = \newnewmajorana_a.
    \end{align*}
   Thus $U$ has Gaussian dimension $2(n-\kappa t) = 2n - 2\kappa t$.
\end{proof}

From the proof above, it's clear that any unitary that can be \say{compressed} into the form $G_A(\identity_k \otimes u)G_B^\dagger$ has Gaussian dimension at least $2k$. As it turns out, this compressibility \emph{precisely} captures what it means to have high Gaussian dimension. Indeed, we show that all unitaries with high Gaussian dimension can be expressed in this form, generalizing \cref{fact:mele-herasymenko-compression}.

\begin{lemma}[Compression Lemma] \label{lem:compression}
    Let $U$ be a $n$-mode unitary of Gaussian dimension at least $k$. Then we can write $U = \leftgaussian (\identity_{\floor{k/2}} \otimes u)\rightgaussian^\dagger$, where $\leftgaussian$ and $\rightgaussian$ are Gaussian unitaries and $u$ acts nontrivially on at most $n - \floor{k/2}$ modes. 
\end{lemma}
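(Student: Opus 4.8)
The plan is to realize the decomposition as a change of Majorana basis applied on each side of $U$. Write $m = \floor{k/2}$, so that $2m \le k$. By the hypothesis and \cref{def:gaussian-dimension} there are rotated Majorana operators $\newmajorana_1,\dots,\newmajorana_{2m}$ (take the first $2m$ of the $k$ guaranteed ones) whose real coefficient vectors are orthonormal in $\R^{2n}$, and $U$ conjugates each into a rotated Majorana operator $\newnewmajorana_a := U\newmajorana_a U^\dagger$. The first thing I would record is that the family $\{\newnewmajorana_1,\dots,\newnewmajorana_{2m}\}$ is itself orthonormal: conjugation by the unitary $U$ preserves the Hilbert-Schmidt inner product, so $\trace{\newnewmajorana_a\newnewmajorana_b} = \trace{U\newmajorana_a\newmajorana_b U^\dagger} = \trace{\newmajorana_a\newmajorana_b} = 2^n\delta_{ab}$, while \cref{def:gaussian-dimension}(3) guarantees each $\newnewmajorana_a$ is a real combination of $\majorana_1,\dots,\majorana_{2n}$ with unit coefficient vector. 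Hence both $\{\newmajorana_a\}$ and $\{\newnewmajorana_a\}$, $a\in[2m]$, correspond to orthonormal systems of vectors in $\R^{2n}$.

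Next I would complete each of these systems to a full Majorana basis: by Gram-Schmidt, extend the $2m$ orthonormal coefficient vectors of the $\newmajorana_a$ to a matrix in $\orthgrp(2n)$, and do the same for the $\newnewmajorana_a$. Using the standard correspondence between $\orthgrp(2n)$ and Gaussian unitaries, let $\rightgaussian$ and $\leftgaussian$ be Gaussian unitaries whose correlation matrices are these two orthogonal matrices, chosen so that $\rightgaussian\majorana_a\rightgaussian^\dagger = \newmajorana_a$ and $\leftgaussian\majorana_a\leftgaussian^\dagger = \newnewmajorana_a$ for every $a\in[2m]$.

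Then, for each $a\in[2m]$, I would compute
\[
\big(\leftgaussian^\dagger U\rightgaussian\big)\,\majorana_a\,\big(\leftgaussian^\dagger U\rightgaussian\big)^\dagger
= \leftgaussian^\dagger U\,\newmajorana_a\,U^\dagger\leftgaussian
= \leftgaussian^\dagger\,\newnewmajorana_a\,\leftgaussian
= \majorana_a,
\]
so $W := \leftgaussian^\dagger U\rightgaussian$ commutes with $\majorana_1,\dots,\majorana_{2m}$. These operators generate, as an algebra, the full operator algebra on the first $m$ modes (their Jordan-Wigner images are exactly the Paulis supported on the first $m$ qubits), whose commutant inside $\C^{2^n\times 2^n}$ is $\identity_m\otimes\C^{2^{n-m}\times 2^{n-m}}$. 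Therefore $W = \identity_m\otimes u$ for a unitary $u$ acting on modes $m+1,\dots,n$, i.e.\ nontrivially on at most $n-m = n-\floor{k/2}$ modes, and rearranging gives $U = \leftgaussian(\identity_{\floor{k/2}}\otimes u)\rightgaussian^\dagger$.

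I expect the only genuinely delicate point to be the parity of $k$: both Gaussian unitaries and the tensor factorization operate on \emph{pairs} of Majorana modes, so even though $U$ acts ``Gaussianly'' on $k$ rotated Majoranas one can only peel off $\floor{k/2}$ honest modes; when $k$ is odd the leftover single rotated Majorana cannot be absorbed (the algebra generated by an odd number of Majoranas is not a full matrix algebra, so its commutant is not a clean tensor factor), and it is simply discarded in the argument above. The two structural facts being used --- that every element of $\orthgrp(2n)$ is the correlation matrix of some Gaussian unitary, and that the commutant of the algebra generated by $\majorana_1,\dots,\majorana_{2m}$ is precisely $\identity_m\otimes(\text{operators on the remaining }n-m\text{ modes})$ --- are standard but worth stating explicitly.
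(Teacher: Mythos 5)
Your proof is correct and takes essentially the same route as the paper's: build Gaussian unitaries $\leftgaussian,\rightgaussian$ from the two rotated Majorana bases, check that $W=\leftgaussian^\dagger U\rightgaussian$ commutes with $\majorana_1,\dots,\majorana_{2\floor{k/2}}$, and conclude that $W$ acts trivially on the first $\floor{k/2}$ modes. You simply make explicit a few steps the paper leaves implicit (orthonormality of the image set $\{\newnewmajorana_a\}$, the Gram--Schmidt completion to a full Majorana basis, and the commutant argument giving the tensor factorization).
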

\begin{proof}
    If $U$ has Gaussian dimension at least $k$ then there exist Majorana bases $\{\newmajorana_{1},...,\newmajorana_{2n}\}$ and $\{\newnewmajorana_{1},...,\newnewmajorana_{2n}\}$ such that $U \newmajorana_a U^\dagger = \newnewmajorana_a$ for $1 \leq a \leq k$. 
    We can define Gaussian operators $\leftgaussian$ and $\rightgaussian$ such that $\leftgaussian \majorana_a \leftgaussian^\dagger = \newnewmajorana_a$ and $\rightgaussian \majorana_a \rightgaussian^\dagger = \newmajorana_a$ for all $a \in [2n]$.
    
    Consider the unitary $W = \leftgaussian^\dagger U \rightgaussian$. It suffices to show that $W$ acts trivially on the first $\floor{k/2}$ modes. To do this, it suffices to show that $W$ commutes with $\majorana_a$ whenever $a \leq 2 \floor{k/2} \leq k$. In other words, demonstrating $W\majorana_a W^\dagger = \majorana_a$ for all such $a$ will complete the argument. Indeed, we have
    \begin{align*}
        W\majorana_a W^\dagger &= \leftgaussian^\dagger U \rightgaussian \majorana_a \rightgaussian^\dagger U^\dagger \leftgaussian \\
        &= \leftgaussian^\dagger U \newmajorana_a  U^\dagger \leftgaussian \\
        &= \leftgaussian^\dagger \newnewmajorana_a  \leftgaussian = \majorana_a. \qedhere
    \end{align*}
\end{proof}
Now, Gaussian unitaries are associated with orthogonal matrices, so the form of the compressed unitary seems conceptually similar to the singular value decomposition. Indeed, in the remainder of this section, we show that this intuition is spot on, and that the SVD of a correlation matrix reveals much about it's Gaussian structure.

As a first step, we show that the correlation matrix of the product of two unitaries is the product of their individual correlation matrices with the order reversed.

\begin{proposition}\label{prop:conj-mat-mult}
    Let $U_1$ and $U_2$ be unitaries with correlation matrices $M^{(1)}$ and $M^{(2)}$, respectively. Then the correlation matrix $M$ of $U = U_1U_2$ is given by $M^{(1)}M^{(2)}$.
\end{proposition}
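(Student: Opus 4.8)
The plan is a direct calculation from the definition of the correlation matrix, $M_{ab} = 2^{-n}\trace{\majorana_a\, U\majorana_b U^\dagger}$, using the fact that the Majorana monomials $\Gamma_S := \prod_{c \in S}\majorana_c$ (over $S \subseteq [2n]$) form an orthogonal basis of $\C^{2^n \times 2^n}$ under the Hilbert--Schmidt inner product, with $2^{-n}\trace{\Gamma_S \Gamma_T} = 0$ for $S \neq T$. Writing $U = U_1 U_2$ and regrouping, $M_{ab} = 2^{-n}\trace{\majorana_a\, U_1\,(U_2\majorana_b U_2^\dagger)\, U_1^\dagger}$. I would expand the middle operator as $U_2\majorana_b U_2^\dagger = \sum_{S \subseteq [2n]} \beta_S\, \Gamma_S$ (coefficients depending on $b$); orthogonality identifies the singleton coefficients as $\beta_{\{c\}} = 2^{-n}\trace{\majorana_c\, U_2\majorana_b U_2^\dagger} = M^{(2)}_{cb}$ and the empty-set coefficient as $\beta_\emptyset = 2^{-n}\trace{U_2\majorana_b U_2^\dagger} = 2^{-n}\trace{\majorana_b} = 0$.

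Substituting and using linearity of the trace, $M_{ab} = \sum_S \beta_S\cdot 2^{-n}\trace{\majorana_a\, U_1 \Gamma_S U_1^\dagger}$. The singleton terms already give the desired answer: $\sum_{c=1}^{2n} M^{(2)}_{cb}\cdot 2^{-n}\trace{\majorana_a\, U_1\majorana_c U_1^\dagger} = \sum_{c} M^{(1)}_{ac} M^{(2)}_{cb} = (M^{(1)}M^{(2)})_{ab}$ by the definition of $M^{(1)}$, and the $S = \emptyset$ term drops because $\beta_\emptyset = 0$ (equivalently $\trace{\majorana_a} = 0$). So the whole claim reduces to showing that the monomials with $|S| \geq 2$ contribute nothing, i.e.\ $\sum_{|S| \geq 2}\beta_S\, \trace{\majorana_a\, U_1 \Gamma_S U_1^\dagger} = 0$.

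This last reduction is the only delicate step, and it is exactly where (fermionic) Gaussianity enters. If $U_1$ is Gaussian, conjugation by $U_1$ sends a degree-$d$ Majorana monomial to a linear combination of degree-$d$ monomials in the rotated Majorana operators; hence for $|S| \geq 2$ the operator $U_1 \Gamma_S U_1^\dagger$ is a combination of monomials of degree $\geq 2$, each Hilbert--Schmidt-orthogonal to $\majorana_a$, so $\trace{\majorana_a\, U_1\Gamma_S U_1^\dagger} = 0$ term by term. Symmetrically, if $U_2$ is Gaussian then $U_2\majorana_b U_2^\dagger = \sum_c M^{(2)}_{cb}\majorana_c$ exactly, so no $|S| \geq 2$ terms arise at all. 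In either case the proposition follows. (It is worth flagging that the higher-degree contributions do \emph{not} cancel for generic non-Gaussian $U_1, U_2$: taking $U_1 = U_2 = e^{i\theta\,\majorana_1\majorana_2\majorana_3\majorana_4}$ on two modes gives $M_{ab} = \cos 4\theta\,\delta_{ab}$ whereas $(M^{(1)}M^{(2)})_{ab} = \cos^2 2\theta\,\delta_{ab}$. Thus the identity should be read with at least one of $U_1, U_2$ Gaussian, which is in any case the only regime in which it gets invoked — namely when one composes $U$ with the Gaussian operators $\leftgaussian, \rightgaussian$, so that the singular values of the correlation matrix are unchanged under such composition.)
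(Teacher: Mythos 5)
Your calculation is correct and at its core follows the same route as the paper: write $M_{ab} = 2^{-n}\trace{\majorana_a\, U_1 (U_2\majorana_b U_2^\dagger) U_1^\dagger}$, expand the inner conjugate in the Majorana monomial basis, and use Hilbert--Schmidt orthogonality to isolate the degree-one coefficients $M^{(2)}_{cb}$. But you have gone further and caught a genuine issue: as stated for \emph{arbitrary} unitaries, the proposition is false, and your counterexample $U_1 = U_2 = e^{i\theta\majorana_1\majorana_2\majorana_3\majorana_4}$ (which yields $\cos 4\theta\,\delta_{ab}$ for the product versus $\cos^2 2\theta\,\delta_{ab}$ for the matrix product) checks out. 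The paper's proof silently substitutes $U_2\majorana_b U_2^\dagger = \sum_c M^{(2)}_{cb}\majorana_c$ in passing from its second to its third displayed line, which is precisely the step that requires $U_2$ to be Gaussian; for general $U_2$ the monomials of degree $\geq 2$ contribute, exactly as your example demonstrates. Your repaired version --- the identity holds whenever at least one of $U_1, U_2$ is Gaussian, using that Gaussian conjugation maps a product of orthonormal rotated Majoranas to a combination of monomials of the same degree when the Gaussian factor is $U_1$ --- is the right statement, and it covers every use in the paper, since \cref{lem:gaussian-dimension-svd} only ever applies the proposition to products of the form $\leftgaussian^\dagger U \rightgaussian$ with $\leftgaussian, \rightgaussian$ Gaussian. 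The only thing I would tighten is the justification that a Gaussian conjugate of $\Gamma_S$ has no components of degree below $|S|$: the potential lower-degree contractions in $\prod_{c\in S}\newmajorana_c$ vanish because the columns of the correlation matrix of $U_1$ are orthonormal, so the pairwise anticommutators $\{\newmajorana_c,\newmajorana_{c'}\}$ for distinct $c,c'\in S$ are exactly zero; with that remark your argument is complete.
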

\begin{proof}
    Fix $a,b \in [2n]$. We have 
    \begin{align*}
    M_{ab} &= 2^{-n}\trace{\majorana_a \cdot  U \majorana_b U^\dagger } \\
    &= 2^{-n}\trace{\majorana_a \cdot U_1 U_2 \majorana_b U_2^\dagger U_1^\dagger } \\
    &= 2^{-n} \sum_{c=1}^{2n} M^{(2)}_{cb}\trace{\majorana_a \cdot U_1 \majorana_c U_1^\dagger } \\
    &= 2^{-n} \sum_{c=1}^{2n}\sum_{d=1}^{2n} M^{(1)}_{ac} M^{(2)}_{cb}\trace{\majorana_a \majorana_d} \\
    &= \sum_{c=1}^{2n} M^{(1)}_{ac} M^{(2)}_{cb} = (M^{(1)}M^{(2)})_{ab}. \qedhere
    \end{align*}
\end{proof}

Next, we characterize the behavior of a unitary on a rotated Majorana operator given the correlation matrix of the unitary and the vector representation of the operator.

\begin{proposition}\label{prop:conjugate-mat-vec}
    Let $U$ and $V$ be unitary operators, where $V = \sum_{a = 1}^{2n} \alpha_a \majorana_a$, and let $W = UVU^\dagger$. Furthermore, let $U$ have correlation matrix $M$. Define vectors $v,w \in \R^{2n}$ by $v_a = \alpha_a$ and $w_a = 2^{-n} \trace{\majorana_a\cdot W}$. Then $w = Mv$.
\end{proposition}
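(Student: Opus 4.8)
The plan is to prove this by a direct expansion, exactly parallel to the computation in \cref{prop:conj-mat-mult}. The statement is essentially the assertion that the correlation matrix $M$ acts on the vector of degree-$1$ Majorana coefficients of a rotated Majorana operator the same way conjugation by $U$ acts on the operator itself, so the proof should be a short unfolding of definitions.

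First I would fix $a \in [2n]$ and write out $w_a = 2^{-n}\trace{\majorana_a \cdot W} = 2^{-n}\trace{\majorana_a \cdot U V U^\dagger}$. Next I would substitute the hypothesized form $V = \sum_{b=1}^{2n}\alpha_b \majorana_b$ and pull the sum and the scalars $\alpha_b$ outside the trace by linearity, obtaining $w_a = \sum_{b=1}^{2n}\alpha_b \cdot 2^{-n}\trace{\majorana_a \cdot U \majorana_b U^\dagger}$. Recognizing the inner expression as precisely $M_{ab}$ by the definition of the correlation matrix, this becomes $w_a = \sum_{b=1}^{2n} M_{ab}\alpha_b = \sum_{b=1}^{2n} M_{ab} v_b = (Mv)_a$. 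Since $a$ was arbitrary, $w = Mv$.

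There is no real obstacle here; the only thing worth a sentence of care is that $W = UVU^\dagger$ need not itself be a linear combination of the $\majorana_a$ (indeed it generally is not), so $w$ should be understood as recording only the degree-$1$ coefficients of $W$ in the Majorana-monomial basis, which is all the statement claims and all that is needed downstream when the left/right singular vectors of $M$ are interpreted as rotated Majorana operators. I would keep the write-up to a single short \texttt{align*} block mirroring the one in \cref{prop:conj-mat-mult}.
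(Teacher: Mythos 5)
Your proposal is correct and follows essentially the same route as the paper: expand $V$ in the Majorana basis, use linearity of the trace, and identify $2^{-n}\trace{\majorana_a \cdot U\majorana_b U^\dagger}$ as $M_{ab}$. Your version is in fact slightly cleaner, since the paper first re-expands $U\majorana_b U^\dagger$ in the Majorana basis and then invokes orthogonality, whereas you invoke the definition of $M_{ab}$ directly; your remark that $W$ need not itself be a linear combination of the $\majorana_a$ is a worthwhile clarification that the paper leaves implicit.
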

\begin{proof}
    Fix any $a \in [2n]$. We have
    \begin{align*}
        w_a &= 2^{-n} \trace{UVU^\dagger \majorana_a} \\
         &= 2^{-n} \sum_{b=1}^{2n} v_{b} \trace{U\majorana_{b}U^\dagger \majorana_a} \\
         &= 2^{-n} \sum_{b=1}^{2n} \sum_{c=1}^{2n} M_{cb} v_{b} \trace{\majorana_{c} \majorana_a} \\
         &= \sum_{b=1}^{2n} M_{ab} v_{b} = Mv. \qedhere
    \end{align*}
\end{proof}

We are now ready to prove our main structural result about the correlation matrices of unitaries with Gaussian dimension $k$.

\begin{lemma}\label{lem:gaussian-dimension-svd} 
    Let $U$ be a unitary of Gaussian dimension $k$ with correlation matrix $M = V_A \Sigma V_B^\transpose$. Then $M$ has at least $k$ singular values equal to $1$ -- that is, $\sigma_1(M) = ... = \sigma_k(M)$. Furthermore, if $V_A,V_B \in \orthgrp(2n)$ correspond to Gaussian unitaries $\leftgaussian$ and $\rightgaussian$, respectively, then $\leftgaussian^\dagger U \rightgaussian = \identity_{\floor{k/2}} \otimes u$ for some unitary $u$ acting on $n-\floor{k/2}$ modes.
\end{lemma}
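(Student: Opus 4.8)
The statement has two halves: a spectral fact about $M$, and a structural fact about $\leftgaussian^\dagger U \rightgaussian$. I would prove them in that order, since the second uses the first.

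\emph{Top $k$ singular values of $M$.} First I would unpack the definition: by \cref{def:gaussian-dimension} there are rotated Majorana operators $\newmajorana_1,\dots,\newmajorana_k$ whose coefficient vectors $v_1,\dots,v_k \in \R^{2n}$ (with $(v_a)_b = M_{ba}$ in the notation of the definition) satisfy $\norm{v_a} = 1$ by condition (1), and, using $\trace{\majorana_c \majorana_d} = 2^n \delta_{cd}$ together with condition (2), $\langle v_a, v_b\rangle = 2^{-n}\trace{\newmajorana_a \newmajorana_b} = 0$ for $a\neq b$; so $\{v_1,\dots,v_k\}$ is orthonormal and spans a $k$-dimensional subspace $\calV \subseteq \R^{2n}$. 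By \cref{prop:conjugate-mat-vec} the coefficient vector of $U\newmajorana_a U^\dagger$ is $Mv_a$, and condition (3) forces $\norm{Mv_a} = 1$; moreover $\trace{(U\newmajorana_a U^\dagger)(U\newmajorana_b U^\dagger)} = \trace{\newmajorana_a \newmajorana_b} = 2^n\delta_{ab}$, so $\{Mv_1,\dots,Mv_k\}$ is orthonormal as well. Hence $M$ restricts to an isometry on the $k$-dimensional space $\calV$. Since every singular value of a correlation matrix lies in $[0,1]$ (established in \cref{sec:prelims}), $\norm{M}\le 1$, and Courant–Fischer gives $\sigma_k(M)^2 = \lambda_k(M^\transpose M) \ge \min_{0\neq v\in\calV}\norm{Mv}^2/\norm{v}^2 = 1$, so $\sigma_1(M) = \cdots = \sigma_k(M) = 1$.

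\emph{The compression.} For the second half, suppose $V_A, V_B \in \orthgrp(2n)$ are the correlation matrices of Gaussian unitaries $\leftgaussian, \rightgaussian$ and write $M = V_A\Sigma V_B^\transpose$; set $W = \leftgaussian^\dagger U \rightgaussian$. Applying \cref{prop:conj-mat-mult} to $\leftgaussian \leftgaussian^\dagger = \identity$ shows the correlation matrix of $\leftgaussian^\dagger$ is $V_A^\transpose$, and a second application gives that $W$ has correlation matrix $V_A^\transpose M V_B = \Sigma$. Now fix $a \le k$. By \cref{prop:conjugate-mat-vec} the degree-one part of $W\majorana_a W^\dagger$ is the vector $\Sigma e_a = \sigma_a e_a = e_a$, using the first half. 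But $W\majorana_a W^\dagger$ is Hermitian, squares to $\identity$, and when expanded in the orthogonal basis of Majorana monomials (as in the discussion in \cref{sec:prelims}) has coefficients whose squares sum to $1$; since the degree-one coefficients already carry all of that weight, the remaining coefficients vanish and $W\majorana_a W^\dagger = \majorana_a$. Therefore $W$ commutes with $\majorana_1,\dots,\majorana_k$, in particular with $\majorana_1,\dots,\majorana_{2\floor{k/2}}$, which generate the full matrix algebra on the first $\floor{k/2}$ modes; by the standard computation of the commutant of $\calM_{2^m}(\C)\otimes\identity$ inside $\calM_{2^n}(\C)$, I would conclude $W = \identity_{\floor{k/2}}\otimes u$ for some unitary $u$ on the remaining $n-\floor{k/2}$ modes, matching the form in \cref{lem:compression}. (If $k$ is odd the extra fixed operator $\majorana_k$ does not complete a mode, which is why only $\floor{k/2}$ modes are freed.)

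\emph{Where the difficulty lies.} The orthonormality bookkeeping and Courant–Fischer argument in the first half, and the commutant identification at the end, are routine. The one genuinely delicate point is the middle of the second half: deducing that $W$ \emph{exactly} fixes $\majorana_1,\dots,\majorana_k$ from the mere fact that the correlation matrix of $W$ equals $\Sigma$. This is not a formal consequence of the singular value decomposition — it relies on the normalization $\sum_S c_S^2 = 1$ for the Majorana-monomial expansion of a conjugated Majorana operator, which is precisely what rules out any higher-degree contribution once the degree-one weight is saturated. Keeping that accounting (and, secondarily, the odd-$k$ mismatch with the stated tensor factor) correct is the crux.
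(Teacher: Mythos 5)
Your proof is correct and follows essentially the same route as the paper's: orthonormality of the coefficient vectors $v^{(a)}$ and of their images $Mv^{(a)}$ forces $k$ unit singular values, and the correlation matrix of $\leftgaussian^\dagger U\rightgaussian$ being $\Sigma$ forces it to fix $\majorana_1,\dots,\majorana_k$. The only difference is that you spell out the step the paper leaves implicit — that $\Sigma_{aa}=1$ together with the unit-norm Majorana-monomial expansion rules out any higher-degree terms — which is a worthwhile clarification, not a deviation.
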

\begin{proof}
    By \Cref{lem:gates-gaussian-dim}, there exist rotated majorana operators $\newmajorana_{1},...,\newmajorana_{k}$ and $\newnewmajorana_{1 + 1},...,\newnewmajorana_{k}$ such that $U\newmajorana_{a}U^\dagger = \newnewmajorana_{a}$ for all $a \in [k]$. Let $v^{(1)},...,v^{(k)}$ and $w^{(1)},...,w^{(k)}$ be the coefficient vectors associated with the $\newmajorana_a$ and $\newnewmajorana_a$, respectively, as in \cref{prop:conjugate-mat-vec}. Then we have that $w^{(a)} = Mv^{(a)}$. Observe that $\langle v^{(a)}, v^{(b)} \rangle = 2^{-n} \trace{\eta_{a} \eta_{b}} = \delta_{ab}$ for all $a, b \in [k]$, and the analogous statement is easily seen to be true for the $w^{(a)}$ and $\newnewmajorana_a$. These facts together imply that at least $k$ singular values of $M$ are exactly equal to $1$, showing the first result.
    
    Now let $V_A$ correspond to $\rightgaussian$ and $V_B$ to $\leftgaussian$. We show that $\leftgaussian^\dagger \inputu  \rightgaussian = \identity_k \otimes \inneru$ for some $\inneru$ that acts only on the final $n - \floor{k/2}$ modes. Let us consider the correlation matrix $M^\prime$ of $U^\prime = \leftgaussian^\dagger \inputu  \rightgaussian$. By \cref{prop:conj-mat-mult}, $M^\prime = V_A^\transpose M V_B = \Sigma$. The first $k$ diagonal entries of $\Sigma$ are equal to $1$, which means that $U^\prime \majorana_a {U^\prime}^\dagger = \majorana_a$ for $a \in [k]$, implying that $U^\prime$ acts as the identity on the first $\floor{k/2}$ modes.
\end{proof}

\subsection{Robustness of Gaussian Dimension}

We show a pair of lemmas that together demonstrate that unitaries that are close to having high Gaussian dimension are close to unitaries that do have high Gaussian dimension. Essentially, if an operator has a correlation matrix whose singular values are sufficiently close to $1$, it can be \say{rounded} to a unitary whose correlation matrix corresponds to snapping the large singular values to $1$. By \cref{lem:gaussian-dimension-svd}, this new unitary has high Gaussian dimension.

First we show the result for Frobenius distance, which will be useful in showing our property testing result.

\begin{lemma} \label{lemma:rounded-frobenius-distance}
    Let $U$ be an $n$-mode unitary with correlation matrix $M = V_A \Sigma V_B^\transpose$, where $V_B$ and $V_A$ correspond to Gaussian unitaries $\leftgaussian$ and $\rightgaussian$, respectively. Furthermore, suppose $\sigma_1(M) = ... = \sigma_k(M) = 1$ and $\sigma_{k+ \ell}(M) \geq 1 - \tau$. Then there exists a unitary $U^\prime = \leftgaussian (\identity \otimes u^\prime) \rightgaussian^\dagger$ of Gaussian dimension $k + \ell$, where $u^\prime$ acts only on the last $n - \floor{(k + \ell)/2}$ modes, such that
    \[
    \frobeniusdistance{U, U^\prime} \leq 2\sqrt{\ell\tau}.
    \]
\end{lemma}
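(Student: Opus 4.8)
Looking at this lemma, I need to prove that a unitary whose correlation matrix has singular values close to 1 can be rounded to a unitary of higher Gaussian dimension that is close in Frobenius distance.

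=== PROOF PROPOSAL ===

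The plan is to explicitly construct $U'$ by snapping the first $k+\ell$ singular values of $M$ to $1$ and bounding the resulting distance via the fermionic Choi state correspondence of \cref{fact:choi-frobenius}. First I would set up the construction: working in the basis where $G_A^\dagger U G_B$ has correlation matrix $\Sigma$ (as in \cref{lem:gaussian-dimension-svd}), write $W = G_A^\dagger U G_B$ with correlation matrix $\Sigma = \mathrm{diag}(\sigma_1,\dots,\sigma_{2n})$ where $\sigma_1 = \cdots = \sigma_k = 1$ and $\sigma_{k+1},\dots,\sigma_{k+\ell} \in [1-\tau,1]$. Define $\Sigma'$ to be $\Sigma$ with the first $k+\ell$ entries replaced by $1$; then by \cref{lem:gaussian-dimension-svd} (or rather its converse, established in the paragraph after \cref{lem:compression}) a unitary with correlation matrix $\Sigma'$ — call it $W'$ — has Gaussian dimension at least $k+\ell$ and factors as $\identity_{\floor{(k+\ell)/2}} \otimes u'$. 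Setting $U' = G_A (\identity \otimes u') G_B^\dagger$ and using \cref{fact:diamond-distance-contraction}(b), it suffices to bound $\frobeniusdistance{W, W'}$, i.e., I have reduced to the case where both correlation matrices are diagonal and differ only in at most $\ell$ entries, each by at most $\tau$.

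Next I would compute $\frobeniusdistance{W,W'}$ using \cref{fact:choi-frobenius}: $\frobeniusdistance{W,W'} = \sqrt{1 - \trace{\sigma_W \sigma_{W'}}}$. The key is to lower bound $\trace{\sigma_W \sigma_{W'}}$ in terms of the correlation matrix entries. Expanding the fermionic Choi states $\sigma_W = 2^{-2n}\prod_a (\identity + i W\majorana_a W^\dagger \majorana_{2n+a})$ and $\sigma_{W'}$ similarly, and using the orthogonality of Majorana monomials under the Hilbert–Schmidt inner product, $\trace{\sigma_W \sigma_{W'}}$ should expand as a sum over subsets $S \subseteq [2n]$ of products $\prod_{a \in S}(\text{coefficient overlap})$. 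Since $W$ and $W'$ have diagonal correlation matrices, the degree-$1$ parts contribute $\prod$-type terms; the cleanest route is to observe that for diagonal correlation matrices the relevant overlap is $\prod_{a} \frac{1 + c_a c_a'}{2}$-like, where $c_a, c_a'$ are the diagonal entries — but this needs care because $W$ is not Gaussian, so there are higher-degree Majorana monomial contributions. A safer approach: use the general bound $\trace{\sigma_W \sigma_{W'}} \geq$ (contribution from the degree-$0$ and the matched structure) and note that $W$ and $W'$ agree on all but $\ell$ of the "rotated Majorana directions." Actually the slickest path is to write $W' = W \cdot R$ where $R$ has correlation matrix $\Sigma^{-1}\Sigma'$ restricted appropriately... but $W$ isn't Gaussian so this doesn't quite work either.

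The main obstacle, then, is handling the non-Gaussian part of $W$ when estimating the Choi-state overlap. I expect the resolution is as follows: since $\sigma_{a} = 1$ for $a \le k$ and we are only changing coordinates $k+1,\dots,k+\ell$, and since $W$ acts as $\identity \otimes u$ on the first $\floor{k/2}$ modes, I can further reduce to a unitary on $2n - k$ Majorana modes (the modes where $u$ lives) whose correlation matrix is diagonal with the $\ell$ entries $\sigma_{k+1},\dots,\sigma_{k+\ell}$ near $1$ and the rest arbitrary in $[0,1)$. On that block, I'd bound $\frobeniusdistance{\cdot,\cdot}^2 = 1 - \trace{\sigma_W\sigma_{W'}}$ by writing $\trace{\majorana_a W\majorana_a W^\dagger} = 2^n \sigma_a$ and expanding. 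The honest computation: $\trace{\sigma_W\sigma_{W'}} = 2^{-2n}\trace{\prod_a(\identity + iW\majorana_aW^\dagger\majorana_{2n+a})\prod_b(\identity+iW'\majorana_bW'^\dagger\majorana_{2n+b})}$, and since for $a \ne b$ these commute up to sign and the trace kills everything except matched terms, one gets $\trace{\sigma_W\sigma_{W'}} \geq 2^{-2n}\cdot 2^{2n}\prod_{a}\tfrac12(1 + \langle \text{row}_a(M), \text{row}_a(M')\rangle)$ — wait, this factorization holds because of the product structure. For $a > k+\ell$ the rows match so the factor is $1$; for $k < a \le k+\ell$ the factor is $\tfrac{1+\sigma_a}{2} \geq 1 - \tfrac{\tau}{2} \geq e^{-\tau}$ (roughly); multiplying $\ell$ such factors gives $\trace{\sigma_W\sigma_{W'}} \geq (1-\tau/2)^\ell \geq 1 - \ell\tau/2$ up to lower-order terms, hence $\frobeniusdistance{W,W'} \leq \sqrt{\ell\tau/2}$. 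Some slack in constants then yields the claimed bound $2\sqrt{\ell\tau}$. I would want to double-check the product factorization of the Choi overlap carefully, since that is where the non-Gaussianity could in principle intrude via cross terms; I believe it survives because each factor $(\identity + i\,\cdot\,\majorana_{2n+a})$ lives on a distinct auxiliary Majorana mode $\majorana_{2n+a}$, forcing the trace to factor, but this is the step that merits the most scrutiny.
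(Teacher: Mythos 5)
Your high-level instinct---bound the fermionic Choi-state overlap via \cref{fact:choi-frobenius}---is exactly the paper's starting point, and the reduction to $W=\leftgaussian^\dagger U \rightgaussian$ with diagonal correlation matrix $\Sigma$ is the same as in the paper's proof. Beyond that, however, your argument has two genuine gaps, both of which you partly flag yourself.

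First, the construction of $W'$ (equivalently $u'$) is underspecified. You say ``a unitary with correlation matrix $\Sigma'$,'' but a correlation matrix does not determine a non-Gaussian unitary. For any mode on which $\sigma_a<1$, there are infinitely many unitaries with the same correlation matrix that are far from one another in Frobenius distance (for example, for a one-mode $u$ with $\Sigma_u=\mathrm{diag}(1,\cos\theta)$, both $u$ and $u^\dagger$ share this correlation matrix yet $\frobeniusdistance{u,u^\dagger}=\sqrt{1-|\cos\theta|}$). So ``$W'$ with correlation matrix $\Sigma'$'' is not a valid definition of the target unitary; you need a particular $u'$ that is provably close to $u$. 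The paper makes this choice explicit: it decomposes the Choi state $\calC^{(k)}(\sigma_W)=\sum_{x\in\F_2^\ell}\alpha_x\,\rho_x\otimes\sigma_W^{(x)}$ over the $\ell$ ambiguous Majorana pairs, argues each $\sigma_W^{(x)}$ is the Choi state of a unitary on the remaining modes (because $\calC^{(k)}(\sigma_W)$ is pure and maximally entangled across the bipartition), and \emph{defines} $u'$ to be the unitary with Choi state $\sigma_W^{(0^\ell)}$. Then $\frobeniusdistance{U,U'}=\sqrt{1-\alpha_{0^\ell}}$ by \cref{fact:choi-frobenius}.

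Second, the product factorization $\trace{\sigma_W\sigma_{W'}}\geq\prod_a\tfrac12\bigl(1+\langle\text{row}_a(M),\text{row}_a(M')\rangle\bigr)$ that your bound rests on is not established and cannot hold as a statement about correlation matrices alone. The overlap $\trace{\sigma_W\sigma_{W'}}=\bigl|2^{-n}\trace{W^\dagger W'}\bigr|^2$ depends on the full operator structure of $W$ and $W'$, including all higher-degree Majorana monomial coefficients, whereas the rows of the correlation matrix capture only the degree-one coefficients. Concretely, taking $W=\identity\otimes u$ and $W'=\identity\otimes u^\dagger$ with $u$ as above, both have identical correlation matrices (so your product would read $\prod_a\tfrac12(1+\|\text{row}_a\|^2)=\tfrac12(1+\cos^2\theta)$), but the true overlap is $\cos^2\theta<\tfrac12(1+\cos^2\theta)$, so the inequality fails. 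You correctly suspected that distinctness of the auxiliary modes $\majorana_{2n+a}$ forces a matching-pairs structure in the trace; that matching kills cross terms in $a$, but the resulting factors are quantities like $2^{-n}\trace{W\majorana_S W^\dagger\,W'\majorana_S{W'}^\dagger}$ over subsets $S$, which for $|S|\ge2$ are not determined by the correlation matrix and need not be sign-definite. The paper sidesteps this entirely: instead of trying to factorize the overlap, it observes that each of the $\ell$ stabilizer projectors $\tfrac12(\identity+i\majorana_a\majorana_{2n+a})$ accepts $\sigma_W$ with probability $\sigma_a(M)\geq 1-\tau$, and then invokes the \emph{quantum union bound} (\cref{fact:quantum-union-bound}) to conclude $\alpha_{0^\ell}\geq1-4\ell\tau$, giving $\frobeniusdistance{U,U'}\leq2\sqrt{\ell\tau}$. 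The extra factor of $4$ in the paper's constant (versus your hoped-for $\sqrt{\ell\tau/2}$) is exactly the price of the union bound; your tighter constant is a symptom of the unproven factorization being too strong.

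In short: replace ``any unitary with correlation matrix $\Sigma'$'' with the explicit Choi-projection construction, and replace the product factorization with the quantum union bound applied to the $\ell$ stabilizer checks, and you recover the paper's proof.
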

\begin{proof}
Take $W = \leftgaussian^\dagger U \rightgaussian$ and $t = n - \floor{k/2}$. Applying \cref{lem:compression}, we know that $W = \identity_{\floor{k/2}} \otimes \inneru$ for some unitary $\inneru$ acting on the final $t$ modes.
Define the $2n$-mode to $2(n-\floor{k/2})$-mode channel $\calC^{(k)}$ which traces out the system spanned by the Majorana operators $\majorana_1,...,\majorana_k$ and $\majorana_{2n+1},...,\majorana_{2n+k}$ and consider $\calC^{(k)}(\sigma_W)$. The latter is a pure state, since $\sigma_W$ is a product state across the corresponding cut. For $x \in \F_2^{\ell}$, define
\[
\rho_x = \prod_{a=1}^{\ell} \left(\frac{\identity_{2n} + (-1)^{x} i \majorana_{k + a}\majorana_{2n + k + a}}{2}\right)
\]

We can write
\[
\calC^{(k)}(\sigma_W) = \sum_{x \in \F_2^{\ell}} \alpha_x \rho_x \otimes \sigma_W^{(x)}
\] 
where each $\sigma_W^{(x)}$ is a density matrix that can be represented using only the Majorana operators $\majorana_{k+\ell+1},...,\majorana_{2n}$ and $\majorana_{2n+k+\ell+1},...,\majorana_{4n}$.
Now, since $\calC^{(k)}(\sigma_W)$ is maximally entangled across the central cut, each $\sigma_W^{(x)}$ is as well, and thus corresponds to a Choi state of a unitary acting on the final $n - \floor{(k + \ell)/2}$ modes. Define this unitary to be $u^\prime$. We have that
\[
\frobeniusdistance{u, \identity_{\floor{(k+\ell)/2} - \floor{k/2}} \otimes u^\prime} = \sqrt{1 - \trace{\calC^{(k)}(\sigma_W) \cdot \rho_{0^\ell} \otimes \sigma_W^{(0^\ell)}}} = \sqrt{1 - \alpha_{0^\ell}}.
\]
If $k + \ell$ is odd, then we have further that $\innerroundedu \majorana_{k + \ell} = \majorana_{k + \ell}$. Thus, applying \cref{prop:gaussian-dim-invariance}, $U^\prime = G_A (\identity_{(k + \ell)/2} \otimes \innerroundedu)G_B^\dagger$ has Gaussian dimension at least $k+ \ell$. To show that $\frobeniusdistance{U, U^\prime} \leq 2 \sqrt{\ell \tau}$, it suffices to show that $\sqrt{1 - \alpha_{0^\ell}} \leq 2\sqrt{\ell \tau}$.
We will do just this, using the quantum union bound to show that $\alpha_{0^\ell} \geq 1 - 4 \ell \tau$, and completing the proof. Indeed, observe that for each $a \in [2n]$, we have
\[
\trace{\left(\frac{\identity_{2n} + i \majorana_a \majorana_{2n + a}}{2}\right) \sigma_W} = \sigma_a(M).
\]
Thus for $k+1 \leq a \leq k + \ell$, we have
\[
\trace{\left(\frac{\identity_{2n} + i \majorana_a \majorana_{2n + a}}{2}\right) \calC^{(k)}(\sigma_W)} \geq 1 - \tau
\]
by assumption. The quantity $\alpha_0$ is exactly the probability that the projective measurements corresponding to $\left(\frac{\identity_{2n} + i \majorana_a \majorana_{2n + a}}{2}\right)$ are simultaneously satisfied for all $a \in \{k+1,...,k+\ell\}$. By the quantum union bound (\cref{fact:quantum-union-bound}), this happens with probability at least $1 - 4 \ell \tau$. Thus $\alpha_{0^\ell} \geq 1 - 4 \ell \tau$ and the claim is shown.
\end{proof}

Next, we build upon this result to show when the singular values are even closer to $1$, the rounded unitary is close to a unitary of higher Gaussian dimension, even in diamond distance. This lemma will be central in showing that our learning algorithm recovers the input unitary to small diamond error.

\begin{lemma}\label{lem:rounded-diamond-distance}
    Let $U$ be an $n$-mode unitary with Gaussian dimension at least $2(n-t)$ and correlation matrix $M = V_A \Sigma V_B^\transpose$. Let $V_A$ and $V_B$ correspond to Gaussian unitaries $\leftgaussian$ and $\rightgaussian$, respectively. Furthermore, suppose 
    \[
    \sigma_{2n-2t^\prime}(M) \geq 1 - \frac{\eps^2}{16t \cdot 2^t}
    \]
    for some $0 \leq t^\prime \leq t$.
    Then there exists a unitary $U^\prime = \leftgaussian (\identity \otimes u^\prime) \rightgaussian^\dagger$, where $u^\prime$ acts only on the last $\pt$ modes, such that
    \[
    \diamonddistance{U}{U^\prime} \leq \eps.
    \]
\end{lemma}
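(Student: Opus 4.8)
The plan is to reduce to the Frobenius-distance rounding lemma (Lemma~\ref{lemma:rounded-frobenius-distance}) and then upgrade its conclusion to a diamond-distance bound by passing to the compressed picture, in which $U$ and $U'$ differ only on the last $t$ modes.

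\textbf{Step 1: invoke the Frobenius rounding lemma.} I would set $k := 2n-2t$, $\ell := 2t-2t' \ge 0$ (so $k+\ell = 2n-2t'$), and $\tau := \eps^2/(16t\cdot 2^t)$. Since $U$ has Gaussian dimension at least $2n-2t$, Lemma~\ref{lem:gaussian-dimension-svd} gives $\sigma_1(M) = \cdots = \sigma_k(M) = 1$, while the hypothesis gives $\sigma_{k+\ell}(M) = \sigma_{2n-2t'}(M) \ge 1-\tau$. Applying Lemma~\ref{lemma:rounded-frobenius-distance} then produces a unitary $U' = \leftgaussian(\identity\otimes u')\rightgaussian^\dagger$ with $u'$ supported on the last $n-\floor{(k+\ell)/2} = t'$ modes, diagonalized by the same Gaussian pair $\leftgaussian,\rightgaussian$, and satisfying
\[
\frobeniusdistance{U,U'} \;\le\; 2\sqrt{\ell\tau} \;\le\; 2\sqrt{2t\tau} \;=\; \frac{\eps}{\sqrt{2^{t+1}}}.
\]

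\textbf{Step 2: compress.} Next I would pass to the diagonalizing frame. Because the first $2n-2t$ diagonal entries of $\Sigma$ equal $1$, the correlation matrix of $W := \leftgaussian^\dagger U \rightgaussian$ equals $\Sigma$ (by \cref{prop:conj-mat-mult}), so exactly as in the proof of Lemma~\ref{lem:gaussian-dimension-svd} one gets $W = \identity_{n-t}\otimes\tilde u$ for some $2^t$-dimensional unitary $\tilde u$ on the last $t$ modes; similarly $W' := \leftgaussian^\dagger U' \rightgaussian = \identity_{n-t}\otimes\hat u$ with $\hat u := \identity_{t-t'}\otimes u'$ on the last $t$ modes. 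By invariance of both distances under pre- and post-composition with the unitary channels $\leftgaussian,\rightgaussian$ (\cref{fact:diamond-distance-contraction}), $\diamonddistance{U}{U'} = \diamonddistance{W}{W'}$ and $\frobeniusdistance{U,U'} = \frobeniusdistance{W,W'}$. I would then argue that tensoring with the idle factor $\identity_{n-t}$ changes neither: $\frobeniusdistance{W,W'} = \frobeniusdistance{\tilde u,\hat u}$ follows from $\trace{(\identity_m\otimes A)^\dagger(\identity_m\otimes B)} = 2^m\trace{A^\dagger B}$, and $\diamonddistance{W}{W'} = \diamonddistance{\tilde u}{\hat u}$ is the standard stability of the diamond norm under tensoring with an identity channel (a reference system for $\tilde u$ absorbs the extra tensor factor).

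\textbf{Step 3: upgrade Frobenius to diamond.} Finally I would apply the generic Frobenius-to-diamond inequality in dimension $d = 2^t$:
\[
\diamonddistance{U}{U'} = \diamonddistance{\tilde u}{\hat u} \le \sqrt{2\cdot 2^t}\,\frobeniusdistance{\tilde u,\hat u} = \sqrt{2^{t+1}}\,\frobeniusdistance{U,U'} \le \sqrt{2^{t+1}}\cdot\frac{\eps}{\sqrt{2^{t+1}}} = \eps.
\]
The conceptually important move is Step~2: applying Frobenius-to-diamond directly to the $2^n$-dimensional operators $U,U'$ would cost a hopeless factor $\sqrt{2^{n+1}}$, and it is only because $U$ and $U'$ share the same diagonalizing Gaussian pair --- hence differ only on the last $t$ modes --- that the conversion factor collapses to $\sqrt{2^{t+1}}$. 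The constant $16t\cdot 2^t$ in the hypothesis is engineered precisely so this factor cancels against the Frobenius bound of Step~1. The main obstacle is really Step~2's bookkeeping: carefully justifying the two ``tensor with an idle factor'' identities and tracking constants so that everything lands at exactly $\eps$.
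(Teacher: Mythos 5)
Your proposal is correct and follows essentially the same route as the paper's proof: invoke \cref{lemma:rounded-frobenius-distance} to get the Frobenius bound $\eps/(\sqrt{2}\cdot 2^{t/2})$, conjugate by $\leftgaussian,\rightgaussian$ to reduce to operators differing only on the last $t$ modes, strip the idle identity factor from both distances, and apply the Frobenius-to-diamond conversion in dimension $2^t$ so the $\sqrt{2^{t+1}}$ factor cancels. The constants and the key observation (that $U$ and $U'$ share the same diagonalizing Gaussian pair) match the paper exactly.
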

\begin{proof}
    Denote $k = n - t$ and $\pk = n - \pt$.
    Applying \cref{lemma:rounded-frobenius-distance}, we know that there exists a $U^\prime = \leftgaussian (\identity \otimes u^\prime) \rightgaussian^\dagger$ such that $u^\prime$ acts only on the final $t^\prime$ modes such that
    \[
    \frobeniusdistance{U, U^\prime} \leq 2\sqrt{\frac{2(t-t^\prime)\eps^2}{16 t\cdot 2^t}} \leq \frac{\eps}{\sqrt{2} \cdot 2^{t/2}}.
    \]
    We can write $U = \leftgaussian (\identity_{k} \otimes u) \rightgaussian^\dagger$ where $u$ acts only on the last $t$ modes. Now, by the unitary invariance of diamond norm, we have
    \[
    \diamonddistance{U}{U^\prime} = \diamonddistance{\leftgaussian^\dagger U \rightgaussian}{\leftgaussian^\dagger U^\prime \rightgaussian} = \diamonddistance{\identity_{k} \otimes u}{\identity_{k} \otimes \identity_{\pk - k} \otimes u^\prime}
    \]
    where in the term $\identity \otimes \identity \otimes u^\prime$, the first identity acts on the first $k$ modes and the second identity acts on the next $t - t^\prime$ modes. Since the action of $\identity \otimes u$ and $\identity \otimes \identity \otimes u^\prime$ are identical on the first $k$ modes, we have
    \begin{align*}
    \diamonddistance{\identity_k \otimes u}{\identity_k \otimes \identity_{\pk - k} \otimes u^\prime} &= \diamonddistance{u}{\identity_{\pk - k} \otimes u^\prime} \\
    &\leq \sqrt{2} \cdot 2^{t/2}\cdot \frobeniusdistance{u, \identity_{\pk - k} \otimes u^\prime} \\
    &= \sqrt{2} \cdot 2^{t/2}\cdot \frobeniusdistance{\identity_k \otimes u, \identity_k \otimes \identity_{\pk - k} \otimes u^\prime} \\
    &= \sqrt{2} \cdot 2^{t/2}\cdot \frobeniusdistance{\leftgaussian(\identity_k \otimes u)\rightgaussian^\dagger, \leftgaussian(\identity_k \otimes \identity_{\pk - k} \otimes u^\prime)\rightgaussian^\dagger} \\ 
    &= \sqrt{2} \cdot 2^{t/2}\cdot \frobeniusdistance{U,U^\prime} \leq \eps. \qedhere
    \end{align*}
\end{proof}

\Cref{lemma:rounded-frobenius-distance} (and thus \cref{lem:rounded-diamond-distance}) works even in the case that the singular value decomposition of $M$ is not properly sorted. In particular, it holds as long as $\sigma_{a}(M) \geq 1 - \tau$ for all $a \in [\pk]$ and for $k$ values of $a \in [\pk]$, $\sigma_{a}(M) = 1$. In this case, $u$ acts on the final $n - \floor{\pk/2}$ modes as well as a size $\floor{\pk/2} - \floor{k/2}$ subset of the first $\pk$ modes. That $\inneru$ simply acts on the final $n-\floor{k/2}$ modes was assumed to simplify the argument. This nuance will be relevant in the proof of the main result, \cref{thm:main-learning}.

\subsection{Gaussian Operators Induced by Correlation Matrices}
Suppose we take the singular value decomposition of two correlation matrices that are close: does this mean that the Gaussian operators induced by their left-singular and right-singular vectors are close? In this section, we will argue that this is indeed the case granted that some assumptions are met.

First we relate the diamond distance of two Gaussian operators to the angle distance of the columns of their correlation matrices.

\begin{proposition}\label{lem:angle-diamond-distance}
    Let $G$ and $\wh{G}$ be two $n$-mode Gaussian operators that correspond to matrices $M, \wh{M} \in \orthgrp(2n)$, respectively. 
    Then
    \[
    \diamonddistance{G}{\wh{G}} \leq 4n^{3/2}\cdot  \sin \theta (M, \wh{M}).
    \]
\end{proposition}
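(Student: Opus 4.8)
The plan is to reduce to the operator-norm estimate that we already have and then control $\norm{M - \widehat{M}}$ using the canonical-angle data. By \cref{fact:gaussian-diamond-distance} we know $\diamonddistance{G}{\widehat{G}} \le 2n\norm{M - \widehat{M}}$, so it suffices to prove $\norm{M - \widehat{M}} \le 2\sqrt{n}\,\sin\theta(M,\widehat{M})$. I would obtain this by passing to the Frobenius norm and working column by column.

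Write $m_a$ and $\widehat{m}_a$ for the $a$-th columns of $M$ and $\widehat{M}$; since $M, \widehat{M} \in \orthgrp(2n)$ these are unit vectors. Let $\theta_a$ be the angle between $m_a$ and $\widehat{m}_a$, and set $\theta = \theta(M,\widehat{M}) = \max_a \theta_a$ (the largest angle between corresponding columns, as in the preliminaries); we may assume $\theta \le \pi/2$, which is the only regime that occurs in our applications since $\theta$ always arises from the Davis--Kahan bound. Then $\langle m_a, \widehat{m}_a\rangle = \cos\theta_a \ge \cos\theta$, so
\[
\norm{m_a - \widehat{m}_a}^2 = \norm{m_a}^2 + \norm{\widehat{m}_a}^2 - 2\langle m_a, \widehat{m}_a\rangle = 2 - 2\cos\theta_a \le 2(1 - \cos\theta).
\]
Summing over the $2n$ columns gives $\frobeniusnorm{M - \widehat{M}}^2 \le 4n(1 - \cos\theta)$. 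Now I would convert $1 - \cos\theta$ into $\sin^2\theta$ via the identity $1 - \cos\theta = \sin^2\theta/(1 + \cos\theta) \le \sin^2\theta$, which is valid precisely because $\cos\theta \ge 0$ for $\theta \in [0,\pi/2]$. Hence $\norm{M - \widehat{M}} \le \frobeniusnorm{M - \widehat{M}} \le 2\sqrt{n}\,\sin\theta$, and substituting into the first inequality yields $\diamonddistance{G}{\widehat{G}} \le 4n^{3/2}\,\sin\theta(M,\widehat{M})$, as desired.

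There is no genuine obstacle here: the argument is a short chain of elementary estimates, and the constant $4$ in the statement is exactly what this computation produces. The only point that needs a moment of care is the step $1 - \cos\theta \le \sin^2\theta$, which requires $\theta \le \pi/2$; this is harmless in context, since $\sin\theta$ is always small when we invoke the proposition (equivalently, one may read $\sin\theta$ as the operator norm $\norm{\sin\Theta}$ of the diagonal matrix of canonical angles, which is automatically at most $1$).
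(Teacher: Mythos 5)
Your proposal is correct and follows essentially the same route as the paper: reduce to $\norm{M-\wh{M}}$ via \cref{fact:gaussian-diamond-distance}, pass to the Frobenius norm column by column, and convert $1-\cos\theta$ to $\sin^2\theta$ (the paper uses $1-\cos\theta\le 1-\cos^2\theta$, which needs the same assumption $\cos\theta\ge 0$ that you make explicit). No substantive difference.
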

\begin{proof}
    First, we show that $M$ and $\wh{M}$ have small operator distance. Indeed, taking $u_a$ and $\wh{u_a}$ to be the columns of $M$ and $\wh{M}$, respectively, we have
    \begin{align*}
        \norm{M - \wh{M}} &\leq \frobeniusnorm{M - \wh{M}} \\
        &= \sqrt{2 \sum_{a=1}^{2n} (1 - \langle u_a, \wh{u_a} \rangle)} \\
        &\leq \sqrt{2 \sum_{a=1}^{2n} (1 - \cos \theta (M, \wh{M}))} \\
        &\leq \sqrt{2 \sum_{a=1}^{2n} (1 - \cos^2 \theta (M, \wh{M}))} \\
        &= \sqrt{2 \sum_{a=1}^{2n} \sin^2 \theta (M, \wh{M})} =  2\sqrt{n} \cdot \sin \theta (M, \wh{M}). 
    \end{align*}
    Applying \cref{fact:gaussian-diamond-distance}, we have
    \[
    \diamonddistance{G}{\wh{G}} \leq 2n \norm{M - \wh{M}} \leq 4n^{3/2}\cdot \sin \theta (M, \wh{M}).  \qedhere.
    \]
\end{proof}

Now we show that if $M \in \R^{2n \times 2n}$ are correlation matrices that is well-approximated by $\wh{M} \in \R^{2n \times 2n}$ and the two matrices satisfy an eigengap condition, a la \cref{fact:davis-kahan}, then one can find Gaussian operators associated with their singular value decompositions that are correspondingly close in diamond distance.

\begin{lemma}\label{lem:svd-gaussian-error-bound}
    Let $M$ be the correlation matrix of a $n$-mode unitary operator and consider $\wh{M} \in \R^{2n \times 2n}$ such that:
    \begin{enumerate}[label=(\arabic*)]
        \item $\wh{M}$ has singular value decomposition $\wh{V_A}\wh{\Sigma}\wh{V_B}^\transpose$, where $\wh{V_A}$ and $\wh{V_B}$ are associated with Gaussian unitaries $\estleftgaussian$ and $\estrightgaussian$, respectively.
       
        \item For some $k \in [2n]$, $\sigma_k(M) - \sigma_{k+1}(\wh{M}) \geq \eigengap$ and $\sigma_{k}(M) \geq 2/3$ (where $\sigma_{2n+1}(\wh{M})$ is taken to be $-\infty)$.

         \item $\norm{\wh{M}} \leq K$ and $\norm{M - \wh{M}} \leq \esterror$.
    \end{enumerate}
    Then there exists a singular value decomposition of $M = V_A \Sigma V_B^\transpose$, where $V_A$ and $V_B$ are associated with respective Gaussian unitaries $\leftgaussian$ and $\rightgaussian$ such that 
    \[
    \diamonddistance{\leftgaussian}{\estleftgaussian} \leq \frac{12K \esterror \cdot  n^{3/2}}{\eigengap}, \quad \diamonddistance{\rightgaussian}{\estrightgaussian} \leq \frac{12K \esterror \cdot  n^{3/2}}{\eigengap}.
    \]
\end{lemma}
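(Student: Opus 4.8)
The plan is to (i) use the Davis--Kahan theorem to show that the top‑$k$ left‑ and right‑singular subspaces of $M$ and $\wh{M}$ are close, and (ii) hand‑build a singular value decomposition of $M$ whose columns match those of $\wh{V_A}$ and $\wh{V_B}$ block by block, then convert the resulting column angles into a diamond‑distance bound via \cref{lem:angle-diamond-distance}.

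First I would record the operator‑norm estimates. Since $M$ is the correlation matrix of a unitary, $\norm{M}\le 1$, so $\norm{MM^\transpose-\wh{M}\wh{M}^\transpose}\le\norm{(M-\wh{M})M^\transpose}+\norm{\wh{M}(M-\wh{M})^\transpose}\le(1+K)\esterror$, and likewise $\norm{M^\transpose M-\wh{M}^\transpose\wh{M}}\le(1+K)\esterror$. The eigenvalues of $MM^\transpose$ are exactly the $\sigma_i(M)^2$, so its top‑$k$ eigenspace is $\calE_L$, the span of the first $k$ left‑singular vectors of $M$, and the Davis--Kahan gap against $\wh{M}\wh{M}^\transpose$ at level $k$ is $\sigma_k(M)^2-\sigma_{k+1}(\wh{M})^2=(\sigma_k(M)-\sigma_{k+1}(\wh{M}))(\sigma_k(M)+\sigma_{k+1}(\wh{M}))\ge\zeta\,\sigma_k(M)\ge\tfrac23\zeta$, using $\sigma_k(M)\ge 2/3$ and $\sigma_{k+1}(\wh{M})\ge 0$ (when $k=2n$ the gap is infinite and the subspaces coincide). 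Applying \cref{fact:davis-kahan} to $MM^\transpose$ versus $\wh{M}\wh{M}^\transpose$, and separately to $M^\transpose M$ versus $\wh{M}^\transpose\wh{M}$, gives $\sin\theta(\calE_L,\wh{\calE}_L),\ \sin\theta(\calE_R,\wh{\calE}_R)\le\tfrac{3(1+K)\esterror}{2\zeta}$, where $\wh{\calE}_L,\wh{\calE}_R$ are the analogous subspaces of $\wh{M}$; since these subspaces are $k$‑dimensional, their orthogonal complements (spanned by the remaining $2n-k$ singular vectors) satisfy the same bounds.

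Next I would assemble the decomposition $M=V_A\Sigma V_B^\transpose$. Within each maximal isospectral block of $M$ among the top $k$ positions (in the main application the $\sigma_i(M)$ there are all close to $1$, so near‑ties among them are harmless), apply the consequence of the canonical‑angle definition recorded just before \cref{fact:davis-kahan}, taking as input the columns of $\wh{V_A}$ that index that block: there is an orthonormal basis $\{a_i\}$ of the corresponding subspace of $\calE_L$ with each $\lvert\langle a_i,\wh{a_i}\rangle\rvert$ at least the cosine of the Step‑1 angle, hence $\norm{a_i-\wh{a_i}}=O(\esterror/\zeta)$ after fixing signs. Take these as the matching columns of $V_A$; the matching columns of $V_B$ are then forced to $b_i=\sigma_i(M)^{-1}M^\transpose a_i$, which is an orthonormal basis of the corresponding subspace of $\calE_R$ making $(a_i,b_i)$ a genuine singular pair, and, using $\sigma_i(M)\ge 2/3$, $\norm{M^\transpose-\wh{M}^\transpose}\le\esterror$, and $\lvert\sigma_i(M)-\sigma_i(\wh{M})\rvert\le\esterror$ (\cref{prop:singular-value-closeness}), a triangle‑inequality estimate gives $\norm{b_i-\wh{b_i}}=O(\esterror/\zeta)$ against $\wh{b_i}=\sigma_i(\wh{M})^{-1}\wh{M}^\transpose\wh{a_i}$. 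The bottom $2n-k$ columns are handled analogously inside $\calE_L^\perp$ and $\calE_R^\perp$. This yields $V_A,V_B$ (associated to Gaussian unitaries $\leftgaussian,\rightgaussian$) with $\theta(V_A,\wh{V_A}),\theta(V_B,\wh{V_B})=O(K\esterror/\zeta)$, and feeding these into \cref{lem:angle-diamond-distance} and tracking the constants gives $\diamonddistance{\leftgaussian}{\estleftgaussian},\ \diamonddistance{\rightgaussian}{\estrightgaussian}\le\tfrac{12K\esterror\,n^{3/2}}{\zeta}$.

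The main obstacle is the non‑uniqueness of the SVD: operator‑norm closeness of $M$ and $\wh{M}$ by itself says nothing about closeness of $V_A$ to $\wh{V_A}$ or of $V_B$ to $\wh{V_B}$, because singular vectors can be rotated within (near‑)isospectral blocks. The eigengap hypothesis is exactly what lets Davis--Kahan pin down the relevant \emph{subspaces}; the delicate point is then choosing bases inside those subspaces so that $V_A\approx\wh{V_A}$ and $V_B\approx\wh{V_B}$ hold \emph{simultaneously} while $M=V_A\Sigma V_B^\transpose$ stays an honest decomposition — fixing the left columns forces the right ones, so one must check the forced right columns still land near $\wh{V_B}$, which is where $\sigma_k(M)\ge 2/3$ enters, and the bottom block (where $M$ may have small or clustered singular values) requires the analogous care.
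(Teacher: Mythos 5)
Your proof follows the same skeleton as the paper's: apply Davis--Kahan to $MM^\transpose$ vs.\ $\wh{M}\wh{M}^\transpose$ (and then to the transposes, and to the negations for the bottom block), pass the resulting subspace angle bound through the canonical-angle basis construction, and convert to diamond distance via \cref{lem:angle-diamond-distance}. Your extra observation — that constructing $V_A$ from $MM^\transpose$ and $V_B$ from $M^\transpose M$ by \emph{separate} Davis--Kahan applications does not by itself yield a pair with $M = V_A \Sigma V_B^\transpose$, because the left and right columns must be coordinated so that $Mv^B_i = \sigma_i v^A_i$ — is genuinely the right thing to worry about, and the paper's own proof (which builds $V_A$ and $V_B$ independently and never checks that they assemble into an SVD of $M$) does not address it. Forcing $b_i = \sigma_i(M)^{-1} M^\transpose a_i$ and estimating $\norm{b_i - \wh{b_i}}$ is the right repair on the top block, and your bound there, using $\sigma_i(M) \ge 2/3$, is correct.

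However, two steps in your argument do not go through as written. First, the ``within each maximal isospectral block of $M$'' reduction is not supported by Davis--Kahan: the theorem bounds the angle between the \emph{full} top-$k$ subspaces $\calE_L$ and $\wh{\calE}_L$, not between individual isospectral sub-blocks, and the isospectral blocks of $M$ need not align with any block of $\wh{M}$ (the $\sigma_i(\wh M)$ on the top $k$ positions can all be distinct even when the $\sigma_i(M)$ cluster). If the top $k$ singular values of $M$ are distinct but tightly clustered, the singular vectors of $M$ are pinned up to sign and there is no guarantee they can be matched column-by-column to $\wh{V_A}$; in this regime the lemma as stated is delicate (the paper's proof shares this issue), though the learning application only ever invokes it with $\sigma_1(M)=\dots=\sigma_k(M)=1$, where the top block is a single isospectral block and your argument is clean. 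Second, the bottom $2n-k$ columns cannot be handled ``analogously'': there $\sigma_i(M)$ may be $0$ or arbitrarily small, so $b_i=\sigma_i(M)^{-1}M^\transpose a_i$ is ill-defined or unstable, and your $\norm{b_i-\wh b_i}=O(\esterror/\eigengap)$ estimate (which carries a $\sigma_i(M)^{-1}$ factor) breaks down. The correct observation here is that the downstream use in \cref{thm:main-learning}, via \cref{lem:gaussian-dimension-svd}, only needs the first $k$ diagonal entries of $V_A^\transpose M V_B$ to equal $1$; coordination of the bottom block is unnecessary, so one should construct $V_A$ and $V_B$ independently there and accept a block decomposition rather than a literal SVD — or else restate the lemma accordingly.
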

\begin{proof}
    Consider the symmetric matrices $R = MM^\transpose$ and $\wh{R} = \wh{M}\wh{M}^\transpose$.
    Let $\calE$ be the subspace spanned by the first $k$ eigenvectors of $R$ and $\wh{R}$, respectively. Observe that $\wh{\calE}$ is also the subspace spanned by the first $k$ columns of $\wh{V_A}$. We have that 
    \begin{align*}
    \lambda_k(M) - \lambda_{k+1}(\wh{M}) &= \sigma_k(M)^2 - \sigma_{k+1}(\wh{M})^2 \\
    &= \left(\sigma_k(M) + \sigma_{k+1}(\wh{M})\right)\left(\sigma_k(M) - \sigma_{k+1}(\wh{M})\right) \\
    &\geq \frac{2}{3}\left(\sigma_k(M) - \sigma_{k+1}(\wh{M})\right) \geq \frac{2\eigengap}{3}
    \end{align*}
    Furthermore, we have that
    \begin{align*}
        \norm{R - \wh{R}} &= \norm{MM^\transpose - \wh{M}\wh{M}^\transpose} \\
        &= \norm{MM^\transpose - M\wh{M}^\transpose + M\wh{M}^\transpose - \wh{M}\wh{M}^\transpose} \\
        &\leq \norm{MM^\transpose - M\wh{M}^\transpose} + \norm{M\wh{M}^\transpose - \wh{M}\wh{M}^\transpose} \\
        &\leq \norm{M}\norm{M^\transpose - \wh{M}^\transpose} + \norm{\wh{M}^\transpose}\norm{M - \wh{M}} \leq 2K\esterror.
    \end{align*}
    Applying \cref{fact:davis-kahan}, we have that 
    \[
    \sin \theta (\calE, \wh{\calE}) \leq \frac{\norm{R - \wh{R}}}{\lambda_k(R) - \lambda_{k+1}(\wh{R})} \leq \frac{3K \esterror}{\eigengap},
    \]
    We can thus construct a basis for $\calE$ whose angles with corresponding columns of $\wh{V_A}$ is at most $\arcsin(3K\esterror/\eigengap)$. Take the first $k$ columns of $V_A$ to be this basis.
    
    To construct the final $n-k$ columns of $V_A$ we consider the matrices $-R$ and $-\wh{R}$. Indeed, just like as above,
    \[
    \lambda_{n-k-1}(-\wh{R}) - \lambda_{n-k}(-R) \geq \frac{2\eigengap}{3}
    \]
    and obviously $-R$ and $-\wh{R}$ have the same operator distance as $R$ and $\wh{R}$. 
    As such, taking $\calF$ and $\wh{\calF}$ to be the respective subspaces spanned by the first $2n-k$ principal eigenvectors we conclude similarly that
    \[
    \sin \theta (\calF, \wh{\calF}) \leq \frac{3K \esterror}{\eigengap}.
    \]
    The last $2n-k$ columns of $\wh{V_A}$ are exactly the first $2n-k$ principal eigenvectors of $-\wh{R}$. This means we can similarly construct a basis for $\calF$ whose angles with corresponding columns of $\wh{V_A}$ is at most $\arcsin(3K\esterror/\eigengap)$ and take the elements of this basis to be the last $2n-k$ columns of $V_A$. As such, we have $\sin \theta(V_A, \wh{V_A}) \leq \frac{3K\esterror}{\eigengap}$. By \cref{lem:angle-diamond-distance}, this means that the associated Gaussian unitaries $\leftgaussian$ and $\estleftgaussian$ satisfy
    \[
    \diamonddistance{\leftgaussian}{\estleftgaussian} \leq 4n^{3/2} \sin \theta(V_A, \wh{V_A}) \leq \frac{3K\esterror}{\eigengap} \leq \frac{12 K \esterror \cdot n^{3/2}}{\eigengap}.
    \]
    
    Thus the statement holds for the $\leftgaussian$ and $\estleftgaussian$.
    The analogous statement for $\rightgaussian$ and $\estrightgaussian$ follows simply by considering $M^\transpose M$ and $\wh{M}^\transpose \wh{M}$ instead of $MM^\transpose$ and $\wh{M}\wh{M}^\transpose$ and repeating the argument above.
\end{proof}

We comment on a slight nuance: the singular value decomposition produced for $M$ may not put the singular values in decreasing order. However, they satisfy the condition that the first $k$ singular values are all larger than the next $2n-k$, which is the detail that matters in applying \cref{lemma:rounded-frobenius-distance,lem:rounded-diamond-distance}.

\section{Property Testing Unitary Gaussian Dimension}\label{sec:property-testing}

Our first algorithmic result is a property tester for unitaries of Gaussian dimension at least $k$. 
The algorithm functions by estimating the $k$-th largest singular value of the correlation matrix. By \cref{lem:gaussian-dimension-svd}, this value is exactly $1$ for unitaries of Gaussian dimension at least $k$. On the other hand, whenever this value is sufficiently large, the input unitary can be rounded to a Gaussian dimension $k$ unitary without much loss in Frobenius distance via \cref{lemma:rounded-frobenius-distance}. As such, estimating this singular value to sufficiently small distance will allow us to accurately distinguish the two cases.

\begin{algorithm}[ht]
    \caption{Property Testing Unitaries of Gaussian Dimension $k$} \label{alg:property-testing}
    \KwInput{Oracle access to a fermionic unitary $U$ on $n$ modes, along with an integer $k \in [2n]$.}
    \KwPromise{$U$ is either a unitary of Gaussian dimension at least $k$ or $\eps$-far in Frobenius distance from all such unitaries.}
    \KwOutput{\texttt{YES} if $U$ is a unitary of Gaussian dimension at least $k$ or \texttt{NO} otherwise.}

    Set $\mtest = 243\cdot nk^2\log(2n/\delta)/\eps^4$.

    Initialize $\wh{M} = 0$.

    \For{$a \in [2n]$}{
        \For{$b \in [2n]$} {
            \RepTimes{$\mtest$} {
                Prepare the state $\sigma_U = (U \otimes I) \left(\prod_{a=1}^{2n} \frac{I + i \majorana_{a} \majorana{a+2n}}{2} \right)(U^\dagger \otimes I)$.

                Let $X$ be the result of measuring $\sigma_U$ in the eigenbasis of the operator $-i \majorana_{b} \majorana_{a+2n}$.

                Increment $\wh{M}_{ab}$ by $X/\mest$.
            }
        }
    }

    Compute the top $k$ singular values of $\wh{M}$: $\sigma_1(\wh{M}) \geq ... \geq \sigma_k(\wh{M})$.

    \textbf{return} \texttt{YES} if $\sigma_k(\wh{M}) \geq 1 - \frac{\eps^2}{8k}$ and \texttt{NO} otherwise.
    
\end{algorithm}

\begin{theorem}
    Let $U$ be a $n$-mode unitary that is either:
    \begin{enumerate}[label=(\arabic*)]
        \item Of unitary Gaussian dimension at least $k$.
        \item $\eps$-far in Frobenius distance from all such unitaries.
    \end{enumerate}
    With probability at least $1-\delta$, \cref{alg:property-testing} successfully decides which is the case. Furthermore, it runs in time $\poly(n, k, 1/\eps, \log(1/\delta)).$
\end{theorem}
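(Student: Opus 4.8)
The plan is to show that the empirical matrix $\wh M$ built by \cref{alg:property-testing} is, with probability at least $1-\delta$, close to the true correlation matrix $M$ of $U$ in operator norm, and then to decide the instance from $\sigma_k(\wh M)$. Completeness comes from \cref{lem:gaussian-dimension-svd}: a unitary of Gaussian dimension at least $k$ has $\sigma_k(M)=1$ exactly. Soundness comes from \cref{lemma:rounded-frobenius-distance}, used in the unsorted-SVD form noted after \cref{lem:rounded-diamond-distance}, with its parameter ``$k$'' set to $0$ and ``$\ell$'' set to the target dimension $k$: if $\sigma_k(M)$ is sufficiently close to $1$ then $U$ is within Frobenius distance strictly less than $\eps$ of a genuine Gaussian-dimension-$k$ unitary, so $U$ cannot be $\eps$-far.

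First I would verify the estimator. The observable $-i\majorana_b\majorana_{a+2n}$ is Hermitian and squares to $\identity$, so measuring $\sigma_U$ in its eigenbasis returns $X\in\{\pm1\}$, and by the fermionic-Choi identity recalled in \cref{sec:prelims}, $\Ex[X]$ is an entry of $M$; thus $\wh M$ is an unbiased estimator of $M$ (up to a transpose that does not affect singular values). Because the $(2n)^2$ entries are sampled from separate, fresh state preparations, the $\mtest$ rounds furnish i.i.d.\ mean-zero error matrices $E^{(1)},\dots,E^{(\mtest)}$ with $\wh M - M = \tfrac{1}{\mtest}\sum_i E^{(i)}$, each $E^{(i)}$ having entries in $[-2,2]$ and entrywise variance at most $1$. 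Rescaling each $E^{(i)}$ by its deterministic operator-norm bound and invoking the Matrix Bernstein inequality (\cref{fact:matrix-bernstein}), I would conclude that the choice $\mtest = 243\,nk^2\log(2n/\delta)/\eps^4$ makes $\norm{\wh M - M}\le \eps^2/(16k)$ except with probability at most $\delta$. By \cref{prop:singular-value-closeness}, on this event $\abs{\sigma_k(\wh M)-\sigma_k(M)}\le \eps^2/(16k)$.

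Conditioning on that event, the dichotomy is short. In case (1), $\sigma_k(M)=1$ by \cref{lem:gaussian-dimension-svd}, so $\sigma_k(\wh M)\ge 1-\eps^2/(16k) > 1-\eps^2/(8k)$ and the algorithm answers \texttt{YES}. Conversely, if the algorithm answers \texttt{YES} then $\sigma_k(\wh M)\ge 1-\eps^2/(8k)$, hence $\sigma_k(M)\ge 1-3\eps^2/(16k)$, and since the singular values are sorted this gives $\sigma_a(M)\ge 1-3\eps^2/(16k)$ for every $a\in[k]$; \cref{lemma:rounded-frobenius-distance} with $\tau=3\eps^2/(16k)$ then produces a unitary $U^\prime$ of Gaussian dimension at least $k$ with $\frobeniusdistance{U,U^\prime}\le 2\sqrt{k\tau}=\tfrac{\sqrt3}{2}\,\eps<\eps$, which rules out case (2). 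This establishes correctness with probability at least $1-\delta$. For the runtime, each of the $(2n)^2\mtest = \poly(n,k,1/\eps,\log(1/\delta))$ repetitions prepares the Gaussian fermionic EPR state, makes one query to $U$, and measures a quadratic (matchgate) observable in $\poly(n)$ time, and computing the top $k$ singular values of a $2n\times 2n$ matrix is likewise $\poly(n)$.

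The step I expect to be the main obstacle is the concentration bound. A single-round error matrix $E^{(i)}$ can have operator norm as large as $\Theta(n)$, so using \cref{fact:matrix-bernstein} with its norm parameter chosen naively would inflate $\mtest$ by an extra factor of $n$; recovering the stated $\mtest=O(nk^2\log(n/\delta)/\eps^4)$ requires the rescaling bookkeeping above — normalizing $E^{(i)}$ by its deterministic norm bound while noting that the variance parameter then drops by the square of the same factor — or else decomposing the total error into independent single-entry contributions. Everything after that is routine manipulation of the structural lemmas of \cref{sec:analysis} together with \cref{prop:singular-value-closeness}.
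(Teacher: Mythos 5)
Your proposal follows essentially the same route as the paper: estimate $M$ entrywise via Choi-state measurements, control $\norm{\wh M - M}$ with \cref{fact:matrix-bernstein}, transfer to singular values via \cref{prop:singular-value-closeness}, and then use \cref{lem:gaussian-dimension-svd} for completeness and \cref{lemma:rounded-frobenius-distance} (with $\ell$ equal to the target dimension) for soundness, differing only in the choice of constants ($\eps^2/(16k)$ versus the paper's $\eps^2/(9k)$) and in being somewhat more explicit about the Matrix Bernstein bookkeeping. The argument is correct as written.
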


\begin{proof}
    Let $M$ be the correlation matrix of $U$. Applying \cref{fact:matrix-bernstein}, the probability that $\norm{\wh{M} - M} > \frac{\eps}{9k}$ is at most
    \[
    2n\exp\left(-\frac{\mtest\eps^4}{243nk^2}\right) \leq \delta
    \]
    So, with probability at least $1-\delta$, $\norm{\wh{M} - M} \leq \frac{\eps^2}{9k}$. We will argue that whenever this holds, the algorithm is correct, and the statement will follow.
    
    In the first case, $M$ is indeed a unitary of Gaussian dimension at least $k$. By \cref{lem:gaussian-dimension-svd}, this means that $\sigma_k(M) = 1$. Furthermore, by \cref{prop:singular-value-closeness}, $\abs{\sigma_k(M) - \sigma_k(\wh{M})} \leq \frac{\eps^2}{9k}$, so 
    \[
    \sigma_k(\wh{M}) \geq 1 - \frac{\eps^2}{9k} > 1 - \frac{\eps^2}{8k}.
    \]
    Therefore, in this case, the algorithm always accepts.

    Now assume that the algorithm accepts with probability at least $1-\delta$. This happens whenever $\sigma_k(\hat{M}) \geq 1- \frac{\eps^2}{8k}$. By our assumption on the operator distance between $M$ and $\hat{M}$, we have
    \[
    \sigma_k(M) \geq 1 - \frac{\eps^2}{8k} - \frac{\eps^2}{9k} \geq 1 - \frac{\eps^2}{4k}.
    \]
    
    By \cref{lemma:rounded-frobenius-distance}, this implies that $U$ has Frobenius distance at most $\eps$ to some unitary $U^\prime$ of Gaussian dimension at least $k$. 
\end{proof}

\section{Learning Algorithm} \label{sec:efficient-learning}

In this section, we present our main learning algorithm and prove its correctness. Given a $n$-mode unitary $U$, promised to be of Gaussian dimension at least $k$, the algorithm outputs a circuit that implements a unitary $\wt U$ with Gaussian dimension at least $2 \floor{k/2}$ such that $\diamonddistance{U}{\wt U} \leq \eps$ with high probability.

\begin{algorithm}[H]
    \caption{Learning Unitaries of High Gaussian Dimension}\label{alg:learning}
    \KwInput{A fermionic unitary $U$ on $n$ modes, along with an integer $k \in \{0,...,2n\}$.}
    \KwPromise{$U$ has Gaussian dimension at least $k$.}
    \KwOutput{A circuit implementing a unitary $\wt{U}$ such that $\diamonddistance{\wt{U}}{U} \leq \eps$.}

    Define $t = 2(n - \floor{k/2})$.
    
    Let $\alpha = \frac{\eps^{3}}{C \cdot n^{3/2}\cdot t(t+1) \cdot 2^{t/2}}$ for some suitably large constant $C$ and take $\mest = 3n\log(4n/\delta)/\alpha^2$.

    Initialize a $2n \times 2n$ matrix $\wh{M} = 0$.

    \For{$a \in [2n]$}{
        \For{$b \in [2n]$} {
            \RepTimes{$\mest$} {
                Prepare the state $\sigma_U = (U \otimes I) \left(\prod_{a=1}^{2n} \frac{I + i \majorana_{a} \majorana_{a+2n}}{2} \right)(U^\dagger \otimes I)$.

                Let $X$ be the result of measuring $\sigma_U$ in the eigenbasis of the operator $-i \majorana_{b} \majorana_{a+2n}$.

                Increment $\wh{M}_{ab}$ by $X/\mest$.
            }
        }
    }

    Take the singular value decomposition of $\wh{M} = \wh{V_{A}} \wh{\Sigma} \wh{V_{B}}^\transpose$. \label{line:svd}

    Construct Gaussian unitaries $\estleftgaussian$ and $\estrightgaussian$ corresponding to $\wh{V_A}$ and $\wh{V_B}$, respectively. \label{line:gaussian-construct}

    Define $\pk$ to be the largest $\ell \in \{k,...,2n\}$ such that $\sigma_\ell(\wh{M}) \geq 1 - \frac{(\ell - k + 1)\eps^2}{700\cdot t(t + 1)\cdot 2^{t/2}}$. \label{line:sv-partitioning}

    Perform $(\eps/9, \delta/2)$-tomography on the last $n - \floor{\pk/2}$ modes of $\estleftgaussian^\dagger U \estrightgaussian$ to obtain a channel $\innertomographyu$. \label{line:tomography}

    Round $\innertomographyu$ to a unitary operator $\innerpolaru$ using the polar decomposition. \label{line:unitary-rounding}

    Construct a circuit that implements a unitary $\innercircuitu$ that approximates $\innerpolaru$ to error $\eps/9$. \label{line:circuit synthesis}

    \textbf{return:} A circuit for $\estrightgaussian^\dagger$, followed by a circuit for $\identity_{\floor{\pk/2}} \otimes \innercircuitu$, and finally a circuit for $\estleftgaussian$.
\end{algorithm}
\vspace{1em}

The steps in \cref{alg:learning} follow closely from the overview provided in \cref{sec:technical-overview}. 
We discuss the singular value partitioning step that occurs in \cref{line:sv-partitioning} in more depth. If the input unitary $U$ is promised to be of Gaussian dimension at least $k$, then at least $k$ singular values of $M$ are promised to be $1$. So, we can assume the first $k$ singular values of $\wh{M}$ are effectively $1$. We now must consider the remaining $2n-k$ singular values, indexed by $\ell \in \{k+1,...,2n\}$. The threshold in \cref{line:sv-partitioning} decreases for every value of $\ell$, and this allows us to induce an eigengap. A diagram is provided in \Cref{fig:partitioning}

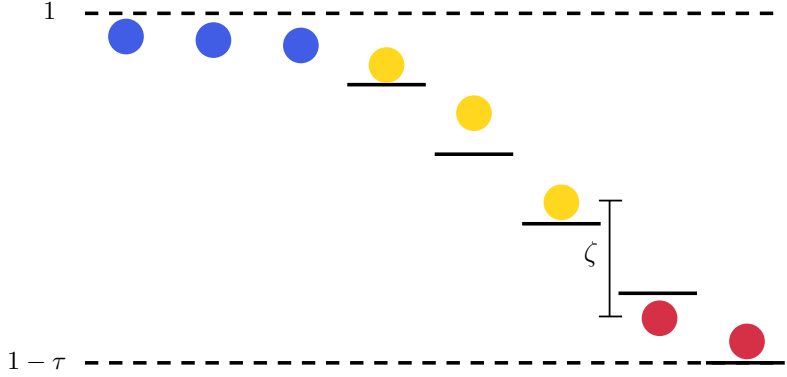
\begin{figure}
    \centering
    \scalebox{0.9}{\tikzset{every picture/.style={line width=0.75pt}} %

\begin{tikzpicture}[x=0.75pt,y=0.75pt,yscale=-1,xscale=1]

\draw [line width=1.5]    (298,115) -- (342,115) ;
\draw [line width=1.5]    (347,154) -- (391,154) ;
\draw [line width=1.5]    (396,193) -- (440,193) ;
\draw [line width=1.5]    (450,232) -- (494,232) ;
\draw [line width=1.5]    (499,271) -- (543,271) ;
\draw  [draw opacity=0][fill={rgb, 255:red, 66; green, 93; blue, 229 }  ,fill opacity=1 ] (164,88) .. controls (164,82.48) and (168.48,78) .. (174,78) .. controls (179.52,78) and (184,82.48) .. (184,88) .. controls (184,93.52) and (179.52,98) .. (174,98) .. controls (168.48,98) and (164,93.52) .. (164,88) -- cycle ;
\draw  [draw opacity=0][fill={rgb, 255:red, 66; green, 93; blue, 229 }  ,fill opacity=1 ] (213,90) .. controls (213,84.48) and (217.48,80) .. (223,80) .. controls (228.52,80) and (233,84.48) .. (233,90) .. controls (233,95.52) and (228.52,100) .. (223,100) .. controls (217.48,100) and (213,95.52) .. (213,90) -- cycle ;
\draw  [draw opacity=0][fill={rgb, 255:red, 66; green, 93; blue, 229 }  ,fill opacity=1 ] (262,93) .. controls (262,87.48) and (266.48,83) .. (272,83) .. controls (277.52,83) and (282,87.48) .. (282,93) .. controls (282,98.52) and (277.52,103) .. (272,103) .. controls (266.48,103) and (262,98.52) .. (262,93) -- cycle ;
\draw  [draw opacity=0][fill={rgb, 255:red, 255; green, 215; blue, 30 }  ,fill opacity=1 ] (310,104) .. controls (310,98.48) and (314.48,94) .. (320,94) .. controls (325.52,94) and (330,98.48) .. (330,104) .. controls (330,109.52) and (325.52,114) .. (320,114) .. controls (314.48,114) and (310,109.52) .. (310,104) -- cycle ;
\draw  [draw opacity=0][fill={rgb, 255:red, 255; green, 215; blue, 30 }  ,fill opacity=1 ] (359,131) .. controls (359,125.48) and (363.48,121) .. (369,121) .. controls (374.52,121) and (379,125.48) .. (379,131) .. controls (379,136.52) and (374.52,141) .. (369,141) .. controls (363.48,141) and (359,136.52) .. (359,131) -- cycle ;
\draw  [draw opacity=0][fill={rgb, 255:red, 213; green, 48; blue, 70 }  ,fill opacity=1 ] (463,246) .. controls (463,240.48) and (467.48,236) .. (473,236) .. controls (478.52,236) and (483,240.48) .. (483,246) .. controls (483,251.52) and (478.52,256) .. (473,256) .. controls (467.48,256) and (463,251.52) .. (463,246) -- cycle ;
\draw  [draw opacity=0][fill={rgb, 255:red, 255; green, 215; blue, 30 }  ,fill opacity=1 ] (408,181) .. controls (408,175.48) and (412.48,171) .. (418,171) .. controls (423.52,171) and (428,175.48) .. (428,181) .. controls (428,186.52) and (423.52,191) .. (418,191) .. controls (412.48,191) and (408,186.52) .. (408,181) -- cycle ;
\draw  [draw opacity=0][fill={rgb, 255:red, 213; green, 48; blue, 70 }  ,fill opacity=1 ] (512,259) .. controls (512,253.48) and (516.48,249) .. (522,249) .. controls (527.52,249) and (532,253.48) .. (532,259) .. controls (532,264.52) and (527.52,269) .. (522,269) .. controls (516.48,269) and (512,264.52) .. (512,259) -- cycle ;
\draw    (445,245) -- (445,180) ;
\draw [line width=1.5]  [dash pattern={on 5.63pt off 4.5pt}]  (151,75) -- (542,75) ;
\draw [line width=1.5]  [dash pattern={on 5.63pt off 4.5pt}]  (151,271) -- (543,271) ;
\draw    (439,180) -- (452,180) ;
\draw    (439,245) -- (452,245) ;

\draw (429,201.4) node [anchor=north west][inner sep=0.75pt]  [font=\large]  {$\eigengap $};
\draw (126,67.4) node [anchor=north west][inner sep=0.75pt]    {$1$};
\draw (106,263.4) node [anchor=north west][inner sep=0.75pt]    {$1-\tau $};

\end{tikzpicture}}
    \caption{A depiction of the singular value partitioning step in \cref{line:sv-partitioning} of \cref{alg:learning} for $n = 4$ and $k = 3$. The circles from left to right represent $\sigma_1(\wh{M}),...,\sigma_8(\wh{M})$. $U$ is promised to be of Gaussian dimension at least $3$, so the first $3$ singular values are safe and marked \textbf{\color{figBlue}blue}. Then, for the next $5$ singular values, we apply successively lower thresholds (represented by \textbf{solid} bars) until a failure occurs. Singular values that pass the check are considered ``good'' and marked \textbf{\color{figYellowText}yellow}, and singular values that fail are considered ``bad'' and marked \textbf{\color{figRed}red}.
    The $7$th singular value is bad, so every following singular value is bad as well. The successively lower thresholds guarantee that the gap between the last good and first bad singular values is bounded below by some $\eigengap$. 
    The lowest threshold $1 - \tau$ is chosen so that if all singular values are good, $U$ can be rounded to a fully Gaussian unitary without much loss in diamond distance.
    }
    \label{fig:partitioning}
\end{figure}

We are now ready to prove the correctness of \cref{alg:learning}.

\begin{theorem}\label{thm:main-learning}
    Let $U$ be an $n$-mode fermionic unitary operator with Gaussian dimension at least $k$ and define $t = 2(n - \floor{k/2})$.
    With probability at least $1-\delta$, \cref{alg:learning} returns a circuit that implements a unitary $\wt{U}$ such that $\diamonddistance{U}{\wt{U}} \leq \eps$. Furthermore, it runs in time $\poly(n, 2^t, 1/\eps, \log(1/\delta))$.
\end{theorem}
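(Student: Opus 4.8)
The plan is to fix a single high-probability event -- that the estimate $\wh{M}$ of the correlation matrix is close to the true correlation matrix $M$ -- and show that, conditioned on it, every subsequent (deterministic) step of \cref{alg:learning} behaves as intended, with the tomography step contributing the only other randomness.

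\emph{Estimation, partitioning, and the eigengap.} First I would observe that the measurement loop builds, entrywise, an unbiased estimator of $M$ (using the identity $\trace{-i\majorana_b\majorana_{a+2n}\,\sigma_U}=M_{ba}$ from the preliminaries), so $\wh{M}-M$ is an average of $\mest$ i.i.d.\ zero-mean matrices with bounded entries; the (simplified) matrix Bernstein bound \cref{fact:matrix-bernstein} with $\mest=3n\log(4n/\delta)/\alpha^2$ gives $\norm{\wh{M}-M}\le\alpha$ except with probability $\delta/2$. Condition on this. Since $U$ has Gaussian dimension at least $k$, \cref{lem:gaussian-dimension-svd} gives $\sigma_k(M)=1$, hence $\sigma_k(\wh{M})\ge 1-\alpha$, so the index $\pk$ of \cref{line:sv-partitioning} is well defined and $\pk\ge k$. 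The key point is that the \emph{strictly decreasing} thresholds force $\sigma_{\pk}(\wh{M})-\sigma_{\pk+1}(\wh{M})\ge \frac{\eps^2}{700\,t(t+1)2^{t/2}}$ (with $\sigma_{2n+1}(\wh{M})=-\infty$ handling $\pk=2n$), and therefore by \cref{prop:singular-value-closeness} there is an honest eigengap $\eigengap:=\sigma_{\pk}(M)-\sigma_{\pk+1}(\wh{M})=\Omega\!\left(\eps^2/(t^2 2^{t/2})\right)$, while $\sigma_{\pk}(M)\ge 1-o(1)\ge 2/3$.

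\emph{Aligning the Gaussian factors and rounding the core, using one SVD.} The subtlety here is that I need a single singular value decomposition $M=V_A\Sigma V_B^\transpose$ that is simultaneously good for two purposes. I would apply \cref{lem:svd-gaussian-error-bound} at index $\pk$ with $K=\norm{\wh{M}}\le 1+\alpha$, $\esterror=\alpha$, and gap $\eigengap$: this yields an SVD of $M$ whose first $\pk$ columns span the top-$\pk$ singular subspaces and whose induced Gaussian unitaries satisfy $\diamonddistance{\leftgaussian}{\estleftgaussian},\diamonddistance{\rightgaussian}{\estrightgaussian}\le \frac{12K\alpha n^{3/2}}{\eigengap}=O(\eps/C)$. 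Because the top $\pk$ singular values of $M$ are each at least $1-\frac{(\pk-k+1)\eps^2}{700\,t(t+1)2^{t/2}}-\alpha$ and at least $k$ of them equal $1$, the ``unsorted'' forms of \cref{lemma:rounded-frobenius-distance} and \cref{lem:rounded-diamond-distance} apply to \emph{this} SVD: using $\pk-k\le t$ and $n-\floor{\pk/2}\le t/2$, the $2^{t/2}$ in the denominator of the threshold cancels the $2^{t/2}$ in the Frobenius-to-diamond conversion, producing $U'=\leftgaussian(\identity_{\floor{\pk/2}}\otimes u')\rightgaussian^\dagger$ of Gaussian dimension at least $\pk$ with $\diamonddistance{U}{U'}\le \eps/9$. (The constant $700$ in \cref{line:sv-partitioning} is chosen precisely to beat the $\sqrt{2}\cdot 8$-type losses here, with slack for $\alpha$.)

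\emph{Tomography, rounding to a unitary, and synthesis.} By unitary invariance of the diamond distance and the triangle inequality, $\estleftgaussian^\dagger U\estrightgaussian$ is within $\eta:=\eps/9+\diamonddistance{\leftgaussian}{\estleftgaussian}+\diamonddistance{\rightgaussian}{\estrightgaussian}=\eps/9+O(\eps/C)$ of $\leftgaussian^\dagger U'\rightgaussian=\identity_{\floor{\pk/2}}\otimes u'$; hence feeding a fixed state into the first $\floor{\pk/2}$ modes and running $(\eps/9,\delta/2)$-tomography on the remaining $n-\floor{\pk/2}\le t/2$ modes (dimension $\le 2^{t/2}$) returns $\innertomographyu$ with $\diamonddistance{\innertomographyu}{u'}\le \eta+\eps/9$, since appending a fixed state and tracing out are channels and contract the diamond distance (\cref{fact:diamond-distance-contraction}). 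As $\innertomographyu$ is close to the unitary channel $u'$, its polar rounding $\innerpolaru$ is $O(\eta+\eps/9)$-close to $u'$ (the standard fact that a channel $\eps'$-close to a unitary has polar part $O(\eps')$-close), and Solovay--Kitaev \cite{dawson2005solovay} produces a circuit $\innercircuitu$ with $\diamonddistance{\innercircuitu}{\innerpolaru}\le \eps/9$; these steps all run in $\poly(2^{t/2})$ time. Assembling by a hybrid argument and repeatedly using invariance of the diamond distance under composition with a unitary and under tensoring with $\identity$,
\begin{align*}
\diamonddistance{U}{\wt{U}} &\le \diamonddistance{U}{U'}+\diamonddistance{\leftgaussian}{\estleftgaussian}+\diamonddistance{\rightgaussian}{\estrightgaussian}+\diamonddistance{\innerpolaru}{u'}+\diamonddistance{\innercircuitu}{\innerpolaru}\\
&\le \tfrac{\eps}{9}+O\!\left(\tfrac{\eps}{C}\right)+O\!\left(\tfrac{\eps}{C}\right)+O\!\left(\tfrac{\eps}{9}\right)+\tfrac{\eps}{9}\;\le\;\eps
\end{align*}
for $C$ a suitably large constant, where $\wt{U}=\estleftgaussian(\identity_{\floor{\pk/2}}\otimes\innercircuitu)\estrightgaussian^\dagger$ is what the returned circuit implements. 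The runtime is dominated by the $O(n^2\mest)$ state preparations (with $1/\alpha^2=\poly(n,2^t,1/\eps,\log(1/\delta))$), an $O(n)$-dimensional SVD, and tomography/polar decomposition/synthesis on at most $t/2$ modes, all $\poly(n,2^t,1/\eps,\log(1/\delta))$; the failure probability is at most $\delta$ by a union bound over the estimation and tomography steps.

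I expect the main obstacle to be the interlocking calibration in the second step: the same perturbation $\wh{M}$ must be simultaneously accurate enough for Davis--Kahan (via \cref{lem:svd-gaussian-error-bound}) to align the \emph{Gaussian} factors \emph{and} accurate enough for the rounding lemmas (in their unsorted form) to snap the \emph{non-Gaussian core} up to Gaussian dimension $\pk$, and one must verify that the $\Theta(1/(t^2 2^{t/2}))$ threshold schedule, the eigengap it induces, and the choice of $\alpha$ conspire so that every term of the hybrid is $O(\eps)$ -- in particular that the $2^{t/2}$-type blowups in the Frobenius-to-diamond conversions are exactly absorbed rather than surviving as $2^{\Omega(t)}$ factors.
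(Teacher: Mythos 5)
Your proposal is correct and follows essentially the same route as the paper: matrix Bernstein for the estimate $\wh{M}$, the partitioning step to manufacture an eigengap, \cref{lem:svd-gaussian-error-bound} (Davis--Kahan) to align $\estleftgaussian,\estrightgaussian$ with a diagonalizing pair $\leftgaussian,\rightgaussian$, the unsorted form of \cref{lemma:rounded-frobenius-distance,lem:rounded-diamond-distance} to produce $U'$ of Gaussian dimension $\pk$, tomography/polar rounding/synthesis on the $\lceil\pt/2\rceil$-mode core, and the final hybrid triangle inequality. The only differences are cosmetic -- you keep constants as unspecified $O(\cdot)$ while the paper tunes them (e.g.\ $C=180000$ gives exactly $\eps/18$ per Gaussian factor) and you frame the trace-out as appending a fixed state rather than applying the paper's $\calC^{(\pk)}$, both of which are harmless.
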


\begin{proof}
    Throughout the proof, we will assume that the algorithm succeeds in estimating $M$ to a sufficient operator distance. In particular, 
    defining
    \[
    \alpha = \frac{\eps^{3}}{180000 \cdot n^{3/2}\cdot t(t+1) \cdot 2^{t/2}}, 
    \]
    we assume that $\norm{M - \wh{M}} \leq \alpha$, which happens except with probability at most
    \[
    2n \exp\left(- \frac{\mest \cdot \alpha^2}{3n}\right) \leq \delta/2
    \]
    according to \cref{fact:matrix-bernstein}. By \cref{prop:singular-value-closeness}, a consequence of this fact is that 
    \[
    \abs{\sigma_a(M) - \sigma_a(\wh{M})} \leq \alpha
    \]
    for all $1 \leq a \leq 2m$. 
    
    Define $\pt = 2(n - \floor{\pk/2})$ and note that $\pk \geq k$ and $\pt \leq t$.
    Since for $n$ sufficiently large we have
    \[
     \sigma_{\pk}(M) \geq 1 - \frac{(2n - k + 1)\eps^2}{700\cdot t(t+1) \cdot 2^{t/2}} - \alpha \geq 1 - \frac{\eps^2}{648\cdot t \cdot 2^{t/2}},
    \]
    $U$ is $\eps/9$-close in diamond distance to a unitary $U^\prime$ of Gaussian dimension $k^\prime$ by \cref{lem:rounded-diamond-distance}. Furthermore, for any Gaussian unitaries $\leftgaussian,\rightgaussian$ such that $\leftgaussian^\dagger U \rightgaussian = \identity_{\floor{k/2}} \otimes \inneru$ (where $\inneru$ acts only on the final $\ceil{t/2}$ modes), there exists a unitary $\innerroundedu$ acting on the final $\ceil{t^\prime/2}$ modes such that $U^\prime = \leftgaussian^\dagger (\identity_{\floor{\pk/2}} \otimes \innerroundedu) \rightgaussian$. 
    Such $\leftgaussian$ and $\rightgaussian$ are guaranteed to exist by \cref{lem:compression,lem:gaussian-dimension-svd}.
    We will argue that the unitary $\wt{U}$ constructed by the algorithm is $8\eps/9$-close to such a $U^\prime$, which yields our main result by a simple application of the triangle inequality.

    The next step of the algorithm is to perform the singular value decomposition on $\wh{M} = \wh{V_A} \wh{\Sigma} \wh{V_B}^\transpose$. 
    We can associate $\wh{V_A}$ and $\wh{V_B}$ with the Gaussian operators $\estrightgaussian$ and $\estleftgaussian$ respectively, a la \cref{lem:gaussian-dimension-svd}.
    We will argue that $\estleftgaussian$ and $\estrightgaussian$ are respectively close to some $\leftgaussian$ and $\rightgaussian$ that diagonalize $\inputu$ (and therefore $\roundedu$). 

    Indeed, the rounding step in \cref{line:sv-partitioning}
    ensures that
    \[
    \sigma_{\pk}(M) - \sigma_{\pk + 1}(\wh{M}) \geq \frac{\eps^2}{700\cdot t(t+1) \cdot 2^{t/2}} - \alpha \geq \frac{\eps^2}{750\cdot t(t+1) \cdot 2^{t/2}} = \eigengap,
    \]
    where the second inequality holds when $n$ is sufficiently large.
    Furthermore, we have $\norm{\wh{M}} \leq \norm{M} + \alpha$: we will bound this by $10/9$ for convenience. 
    Applying \cref{lem:svd-gaussian-error-bound}, there exists a singular value decomposition $M = V_A \Sigma V_B^\transpose$ where $V_B$ and $V_A$ are respectively associated with Gaussian operators $\leftgaussian$ and $\rightgaussian$ such that
    \[
    \diamonddistance{\leftgaussian}{\estleftgaussian} \leq \frac{12 K \alpha \cdot n^{3/2}}{\eigengap} \leq 
    \frac{\eps}{18}, \quad \diamonddistance{\rightgaussian}{\estrightgaussian} \leq \frac{12 K \alpha \cdot n^{3/2}}{\eigengap} \leq 
    \frac{\eps}{18}
    \]

    The final phase of the algorithm begins by performing full unitary tomography on $\estleftgaussian^\dagger U \estrightgaussian$ after tracing out the first $\floor{\pk/2}$ modes.
    Let $\calC^{(k)}$ be the channel that corresponds to tracing out the first $\floor{k/2}$ modes. In other words, $\calC^{(k)}(\identity_{\floor{k/2}} \otimes u) = u$. Since $\calC^{(k)}$ is a completely-positive trace-preserving map, by \cref{fact:diamond-distance-contraction} we have that for any quantum channels $\calU, \calV$:
    \[
    \diamonddistance{\calC^{(k)}\calU}{  \calC^{(k)}\calV} \leq \diamonddistance{\calU}{\calV}.
    \]
     So, \cref{line:tomography} of the algorithm performs full process tomography on $\calC^{(\pk)}(\estleftgaussian^{\dagger}U\estrightgaussian)$ to obtain a channel $\innertomographyu$ on $\ceil{\pt/2}$ modes. This tomography step has diamond error $\eps/9$ and failure probability $\delta/2$. 
    Now, since $\diamonddistance{\identity_{\floor{\pk/2}} \otimes u^\prime}{\estleftgaussian^{\dagger}U\estrightgaussian} \leq 2\eps/9$, we know that 
    \[
    \diamonddistance{\innerroundedu}{\innertomographyu} \leq \diamonddistance{\innerroundedu}{\calC^{(\pk)}(\estleftgaussian^{\dagger}U\estrightgaussian)} + \diamonddistance{\calC^{(\pk)}(\estleftgaussian^{\dagger}U\estrightgaussian)}{\innertomographyu} \leq \frac{2\eps}{9} + \frac{\eps}{9} = \frac{\eps}{3}.
    \]
    Thus, $\innertomographyu$ is $\eps/3$-far in diamond distance from some unitary operator. Performing the unitary rounding step in \cref{line:unitary-rounding} results in a unitary operator $\innerpolaru$ that is at most $\eps/3$-far in diamond distance from $\innertomographyu$, since the polar decomposition finds the closest unitary operator in operator distance. Then the circuit synthesis step in \cref{line:circuit synthesis} produces a circuit for a unitary $\innercircuitu$ that has diamond distance at most $\eps/9$ from $\innerpolaru$. All-in-all, \cref{line:tomography} to \cref{line:circuit synthesis} of the algorithm produce a unitary $\innercircuitu$ such that $\diamonddistance{\innercircuitu}{\innerroundedu} \leq 7\eps/9$.

    The circuit returned by our algorithm implements the unitary $\outputu = \estleftgaussian (\identity_{\floor{\pk/2}} \otimes \innercircuitu) \estrightgaussian$. Recall that $\diamonddistance{\estleftgaussian}{\leftgaussian} \leq \eps/18$ and $\diamonddistance{\estrightgaussian}{\rightgaussian} \leq \eps/18$. Putting everything together, we have 
    \begin{align*}
    \diamonddistance{\outputu}{\inputu} &\leq \diamonddistance{\outputu}{\roundedu} + \diamonddistance{\roundedu}{\inputu}
    \\
    &\leq \diamonddistance{\estleftgaussian (\identity_{\floor{\pk/2}} \otimes \innercircuitu) \estrightgaussian}{\leftgaussian (\identity_{\floor{\pk/2}} \otimes \innerroundedu) \rightgaussian} + \frac{\eps}{9} \\
    &\leq \diamonddistance{\leftgaussian (\identity_{\floor{\pk/2}} \otimes \innercircuitu) \rightgaussian}{\leftgaussian (\identity_{\floor{\pk/2}} \otimes \innerroundedu) \rightgaussian} + \frac{\eps}{9} + \frac{\eps}{9} \\
    &= \diamonddistance{\innercircuitu}{\innerroundedu} + \frac{\eps}{9} + \frac{\eps}{9} \leq \eps.
    \end{align*}

    Thus, under the assumptions that the estimate $\hat{M}$ of $M$ is sufficiently accurate and that the tomography of $\innertomographyu$ succeeds, our output circuit has diamond distance at most $\eps$ from the input $U$. By the union bound, this happens except with probability at most $\delta/2 + \delta/2 = \delta$.

    We now analyze the runtime of the algorithm. Estimating $\wh{M}$ takes $4n^2 \mest$ time and samples, which is clearly $\poly(n,2^t,1/\eps)$. Performing the singular value decomposition and finding the cutoff singular value takes time $\poly(n)$. The full unitary tomography step takes $\poly(2^{\pt}, 1/\eps)$ time and samples, and since $\pt \leq t$, this is at most $\poly(2^{t}, 1/\eps)$. Finally, the rounding and unitary synthesis steps take $\poly(2^{\pt}, 1/\eps)$ time. Overall, the time and sample complexity of the algorithm are dominated by the estimation of $\wh{M}$, and clearly has time complexity $\poly(n,2^t,1/\eps)$. \qedhere
\end{proof}

We realize that the hybrid argument in the above proof, which bounds the distance between $\inputu$ and $\outputu$ based on the estimates $\estleftgaussian,\estrightgaussian,$ and $\innercircuitu$, might be hard to follow. As such, we provide an outline in \Cref{fig:hybrid-argument}.

We conclude by discussing potential improvements to \cref{alg:learning}. First, we address the question of whether it is possible to \say{decouple} the runtime from $\poly(n) \cdot \poly(2^t)$ to $\poly(n) + \poly(2^t)$. If the goal is to approximate $U$ to small Frobenius distance, we claim that this runtime is achievable. Indeed, we can design our singular value partitioning step around \cref{lemma:rounded-frobenius-distance} rather than \cref{lem:rounded-diamond-distance}, and this would eliminate the $\exp(t)$ dependence in $\alpha$. Thus the estimation step would take $\poly(n)$ steps and the tomography subroutine in \cref{line:tomography} would be the only step requiring $\exp(t)$ time and samples.

Regarding the constant factors of \cref{alg:learning}, we remark that individual distances between hybrids (e.g. $\diamonddistance{\inputu}{\roundedu} \leq \eps/9$, or $\diamonddistance{\leftgaussian}{\estleftgaussian} \leq \eps/18$) were chosen for ease of presentation.
Choosing smaller errors for, say, the unitary tomography and circuit synthesis steps and relaxing the error requirement between $\estleftgaussian$ and $\leftgaussian$ would yield significantly better constant factors.

Finally, we assert that the exponential dependence on $t$ is necessary. Indeed, consider the class of unitaries of the form $\identity_{n-t} \otimes u$, where $u$ is an arbitrary unitary acting on $t$ modes. This unitary has Gaussian dimension $2(n-t)$. Any algorithm that learns this class in time that is subexponential in $t$ automatically gives a subexponential-time tomography algorithm for unitaries on $t$ modes (or qubits, by the Jordan-Wigner transform). This violates the lower bounds of \cite{baldwin2014quantum,haah2023query,zhao2024learning}. Thus, while it would be unsurprising if the polynomial and constant factors in the runtime of \cref{alg:learning} could be improved, the $\poly(n, 2^t)$ scaling is necessary.

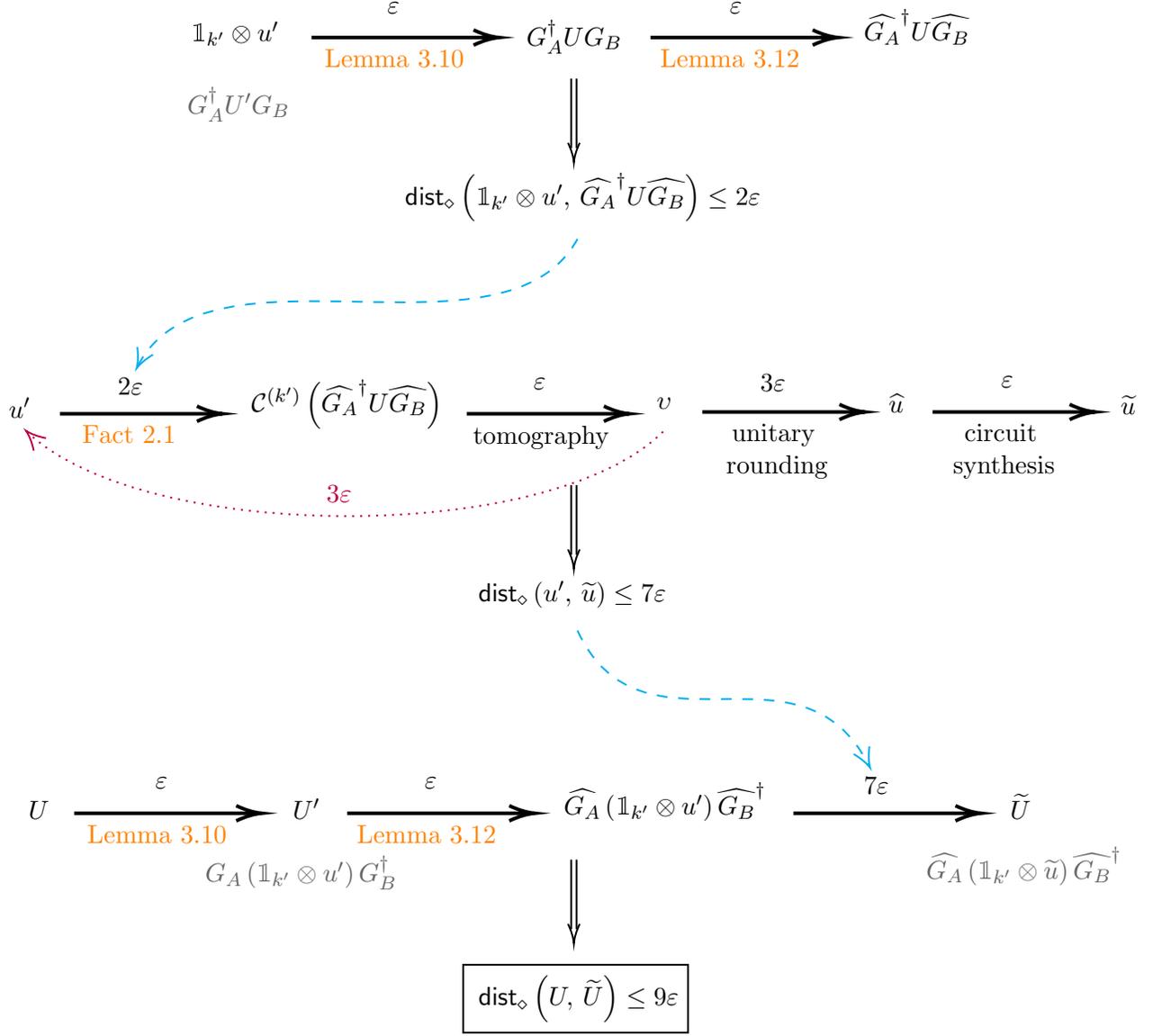
\begin{figure}[H]
    \centering
    \tikzset{every picture/.style={line width=0.75pt}} %

\begin{tikzpicture}[x=0.75pt,y=0.75pt,yscale=-1,xscale=1]

\draw [line width=1.5]    (51,462) -- (153,462) ;
\draw [shift={(156,462)}, rotate = 180] [color={rgb, 255:red, 0; green, 0; blue, 0 }  ][line width=1.5]    (14.21,-4.28) .. controls (9.04,-1.82) and (4.3,-0.39) .. (0,0) .. controls (4.3,0.39) and (9.04,1.82) .. (14.21,4.28)   ;
\draw [line width=1.5]    (204,462) -- (254,462) -- (306,462) ;
\draw [shift={(309,462)}, rotate = 180] [color={rgb, 255:red, 0; green, 0; blue, 0 }  ][line width=1.5]    (14.21,-4.28) .. controls (9.04,-1.82) and (4.3,-0.39) .. (0,0) .. controls (4.3,0.39) and (9.04,1.82) .. (14.21,4.28)   ;
\draw [line width=1.5]    (454,462) -- (556,462) ;
\draw [shift={(559,462)}, rotate = 180] [color={rgb, 255:red, 0; green, 0; blue, 0 }  ][line width=1.5]    (14.21,-4.28) .. controls (9.04,-1.82) and (4.3,-0.39) .. (0,0) .. controls (4.3,0.39) and (9.04,1.82) .. (14.21,4.28)   ;
\draw [line width=1.5]    (43,238) -- (131,238) ;
\draw [shift={(134,238)}, rotate = 180] [color={rgb, 255:red, 0; green, 0; blue, 0 }  ][line width=1.5]    (14.21,-4.28) .. controls (9.04,-1.82) and (4.3,-0.39) .. (0,0) .. controls (4.3,0.39) and (9.04,1.82) .. (14.21,4.28)   ;
\draw [line width=1.5]    (271,238) -- (359,238) ;
\draw [shift={(362,238)}, rotate = 180] [color={rgb, 255:red, 0; green, 0; blue, 0 }  ][line width=1.5]    (14.21,-4.28) .. controls (9.04,-1.82) and (4.3,-0.39) .. (0,0) .. controls (4.3,0.39) and (9.04,1.82) .. (14.21,4.28)   ;
\draw [line width=1.5]    (403,237) -- (491,237) ;
\draw [shift={(494,237)}, rotate = 180] [color={rgb, 255:red, 0; green, 0; blue, 0 }  ][line width=1.5]    (14.21,-4.28) .. controls (9.04,-1.82) and (4.3,-0.39) .. (0,0) .. controls (4.3,0.39) and (9.04,1.82) .. (14.21,4.28)   ;
\draw [line width=1.5]    (532,237) -- (589,237) -- (620,237) ;
\draw [shift={(623,237)}, rotate = 180] [color={rgb, 255:red, 0; green, 0; blue, 0 }  ][line width=1.5]    (14.21,-4.28) .. controls (9.04,-1.82) and (4.3,-0.39) .. (0,0) .. controls (4.3,0.39) and (9.04,1.82) .. (14.21,4.28)   ;
\draw [line width=1.5]    (184,27) -- (286,27) ;
\draw [shift={(289,27)}, rotate = 180] [color={rgb, 255:red, 0; green, 0; blue, 0 }  ][line width=1.5]    (14.21,-4.28) .. controls (9.04,-1.82) and (4.3,-0.39) .. (0,0) .. controls (4.3,0.39) and (9.04,1.82) .. (14.21,4.28)   ;
\draw [line width=1.5]    (374,27) -- (476,27) ;
\draw [shift={(479,27)}, rotate = 180] [color={rgb, 255:red, 0; green, 0; blue, 0 }  ][line width=1.5]    (14.21,-4.28) .. controls (9.04,-1.82) and (4.3,-0.39) .. (0,0) .. controls (4.3,0.39) and (9.04,1.82) .. (14.21,4.28)   ;

\draw  [dash pattern={on 0.84pt off 2.51pt}, color=purple]  (381,248) .. controls (317.32,311.68) and (90.29,310.51) .. (24.97,248.44) ;
\draw [shift={(24,247.5)}, rotate = 44.55] [color=purple][line width=0.75]    (10.93,-4.9) .. controls (6.95,-2.3) and (3.31,-0.67) .. (0,0) .. controls (3.31,0.67) and (6.95,2.3) .. (10.93,4.9);

\draw [line width=0.75]    (332.5,50) -- (332.5,89)(329.5,50) -- (329.5,89) ;
\draw [shift={(331,97)}, rotate = 270] [color={rgb, 255:red, 0; green, 0; blue, 0 }  ][line width=0.75]    (10.93,-3.29) .. controls (6.95,-1.4) and (3.31,-0.3) .. (0,0) .. controls (3.31,0.3) and (6.95,1.4) .. (10.93,3.29)   ;
\draw  [dash pattern={on 4.5pt off 4.5pt},color=cyan]  (333,140) .. controls (292.21,214.63) and (125.68,135.3) .. (86.58,211.34) ;
\draw [shift={(86,212.5)}, rotate = 295.97] [color=cyan][line width=0.75]    (10.93,-4.9) .. controls (6.95,-2.3) and (3.31,-0.67) .. (0,0) .. controls (3.31,0.67) and (6.95,2.3) .. (10.93,4.9)   ;
\draw  [dash pattern={on 4.5pt off 4.5pt},color=cyan]  (333,359.5) .. controls (368.82,437.11) and (458.1,359.29) .. (496.43,435.34) ;
\draw [shift={(497,436.5)}, rotate = 244.03] [color=cyan ][line width=0.75]    (10.93,-4.9) .. controls (6.95,-2.3) and (3.31,-0.67) .. (0,0) .. controls (3.31,0.67) and (6.95,2.3) .. (10.93,4.9)   ;
\draw [line width=0.75]    (332.5,278) -- (332.5,317)(329.5,278) -- (329.5,317) ;
\draw [shift={(331,325)}, rotate = 270] [color={rgb, 255:red, 0; green, 0; blue, 0 }  ][line width=0.75]    (10.93,-3.29) .. controls (6.95,-1.4) and (3.31,-0.3) .. (0,0) .. controls (3.31,0.3) and (6.95,1.4) .. (10.93,3.29)   ;
\draw [line width=0.75]    (332.5,488) -- (332.5,527)(329.5,488) -- (329.5,527) ;
\draw [shift={(331,535)}, rotate = 270] [color={rgb, 255:red, 0; green, 0; blue, 0 }  ][line width=0.75]    (10.93,-3.29) .. controls (6.95,-1.4) and (3.31,-0.3) .. (0,0) .. controls (3.31,0.3) and (6.95,1.4) .. (10.93,3.29)   ;

\draw (113,53.4) node [anchor=north west][inner sep=0.75pt]  [font=\normalsize,color={rgb, 255:red, 99; green, 99; blue, 99 }  ,opacity=1 ]  {$G_{A}^{\dagger } \roundedu G_{B}$};
\draw (303,15.4) node [anchor=north west][inner sep=0.75pt]  [font=\normalsize]  {$G_{A}^{\dagger } UG_{B}$};
\draw (115,16.4) node [anchor=north west][inner sep=0.75pt]  [font=\normalsize]  {$\identity_{k^{\prime }} \otimes u^{\prime }$};
\draw (492,9.4) node [anchor=north west][inner sep=0.75pt]  [font=\normalsize]  {$\estleftgaussian^{\dagger } U\estrightgaussian$};
\draw (13,227.4) node [anchor=north west][inner sep=0.75pt]  [font=\normalsize]  {$u^{\prime }$};
\draw (149,217.4) node [anchor=north west][inner sep=0.75pt]  [font=\normalsize]  {$\mathcal{C}^{\left( k^{\prime }\right)}\left(\estleftgaussian^{\dagger } U\estrightgaussian\right)$};
\draw (376,227.4) node [anchor=north west][inner sep=0.75pt]  [font=\normalsize]  {$\innertomographyu $};
\draw (636,226.4) node [anchor=north west][inner sep=0.75pt]  [font=\normalsize]  {$\innercircuitu$};
\draw (506,225.4) node [anchor=north west][inner sep=0.75pt]  [font=\normalsize]  {$\innerpolaru$};
\draw (574,449.4) node [anchor=north west][inner sep=0.75pt]  [font=\normalsize]  {$\outputu$};
\draw (527,479.4) node [anchor=north west][inner sep=0.75pt]  [font=\normalsize,color={rgb, 255:red, 99; green, 99; blue, 99 }  ,opacity=1 ]  {$\estleftgaussian\left(\identity_{k^{\prime }} \otimes \innercircuitu\right)\estrightgaussian^{\dagger }$};
\draw (324,443.4) node [anchor=north west][inner sep=0.75pt]  [font=\normalsize]  {$\estleftgaussian\left(\identity_{k^{\prime }} \otimes u^{\prime }\right)\estrightgaussian^{\dagger }$};
\draw (24,453.4) node [anchor=north west][inner sep=0.75pt]  [font=\normalsize]  {$U$};
\draw (172,452.4) node [anchor=north west][inner sep=0.75pt]  [font=\normalsize]  {$\roundedu$};
\draw (123,484.4) node [anchor=north west][inner sep=0.75pt]  [font=\normalsize,color={rgb, 255:red, 99; green, 99; blue, 99 }  ,opacity=1 ]  {$G_{A}\left(\identity_{k^{\prime }} \otimes u^{\prime }\right) G_{B}^{\dagger }$};
\draw (225,5.4) node [anchor=north west][inner sep=0.75pt]  [font=\normalsize]  {$\eps $};
\draw (417,5.4) node [anchor=north west][inner sep=0.75pt]  [font=\normalsize]  {$\eps $};
\draw (74,216.4) node [anchor=north west][inner sep=0.75pt]  [font=\normalsize]  {$2\eps $};
\draw (307,216.4) node [anchor=north west][inner sep=0.75pt]  [font=\normalsize]  {$\eps $};
\draw (434,214.4) node [anchor=north west][inner sep=0.75pt]  [font=\normalsize]  {$3\eps $};
\draw (569,215.4) node [anchor=north west][inner sep=0.75pt]  [font=\normalsize]  {$\eps $};
\draw (95,440.4) node [anchor=north west][inner sep=0.75pt]  [font=\normalsize]  {$\eps $};
\draw (246,440.4) node [anchor=north west][inner sep=0.75pt]  [font=\normalsize]  {$\eps $};
\draw (492,440.4) node [anchor=north west][inner sep=0.75pt]  [font=\normalsize]  {$7\eps $};
\draw (191,276.4) node [anchor=north west][inner sep=0.75pt]  [font=\normalsize,color=purple]  {$3\eps $};
\draw (273,243) node [anchor=north west][inner sep=0.75pt]   [align=left] {{tomography}};
\draw (415,242) node [anchor=north west][inner sep=0.75pt]   [align=left] {\begin{minipage}[lt]{39.55pt}\setlength\topsep{0pt}
\begin{center}
{unitary}\\{rounding}
\end{center}

\end{minipage}};
\draw (542,242) node [anchor=north west][inner sep=0.75pt]   [align=left] {\begin{minipage}[lt]{40.13pt}\setlength\topsep{0pt}
\begin{center}
{circuit}\\{synthesis}
\end{center}

\end{minipage}};
\draw (54,243) node [anchor=north west][inner sep=0.75pt]   [align=left] {\Cref{fact:diamond-distance-contraction}};
\draw (190,32) node [anchor=north west][inner sep=0.75pt]   [align=left] {\Cref{lem:rounded-diamond-distance}};
\draw (378,32) node [anchor=north west][inner sep=0.75pt]   [align=left] {\Cref{lem:svd-gaussian-error-bound}};
\draw (57,467) node [anchor=north west][inner sep=0.75pt]   [align=left] {\Cref{lem:rounded-diamond-distance}};
\draw (208,467) node [anchor=north west][inner sep=0.75pt]   [align=left] {\Cref{lem:svd-gaussian-error-bound}};
\draw (235,101.4) node [anchor=north west][inner sep=0.75pt]    {$\diamonddistance{\identity_{k^{\prime }} \otimes u^{\prime }}{\estleftgaussian^{\dagger } U\estrightgaussian} \leq 2\eps $};
\draw (276,330.4) node [anchor=north west][inner sep=0.75pt]    {$\diamonddistance{\innerroundedu }{\innercircuitu} \leq 7\eps $};
\draw    (269,547) -- (395,547) -- (395,585) -- (269,585) -- cycle  ;
\draw (276,551.4) node [anchor=north west][inner sep=0.75pt]    {$\diamonddistance{\inputu}{\outputu} \leq 9\eps $};

\end{tikzpicture}
    \caption{A flow chart illustrating the hybrid argument in the proof of \cref{thm:main-learning}. 
    Hybrids are drawn with arrows indicating their diamond distance, and relevant lemmas are given below these arrows. Some hybrids have equivalent expressions written below them for convenience.
    Errors have scaled up by a factor of $9$ to aid readability.
    The proof can  be broken into 3 ``layers'': \\
    \textbf{First Layer:} we use the fact that $\inputu$ and $\roundedu$ are close to show that $\identity_{\pk} \otimes \roundedu$, which is $\leftgaussian^\dagger \roundedu \rightgaussian$, is close to $\leftgaussian^\dagger \inputu \rightgaussian$. Then, since $\estleftgaussian$ and $\estrightgaussian$ are close to $\leftgaussian$ and $\rightgaussian$, $\leftgaussian^\dagger \roundedu \rightgaussian$ is close to $\estleftgaussian^\dagger \roundedu \estrightgaussian$. \\ 
    \textbf{Second Layer:} from the previous layer, $\identity_{\pk} \otimes \innerroundedu$ and $\estleftgaussian^\dagger \roundedu \estrightgaussian$ are close, so they remain close after applying $\calC^{(\pk)}$. Then the tomography step \cref{line:tomography} introduces an error $\eps$. This new channel is at most $3\eps$ away from some unitary, as indicated by the \textbf{\color{purple}purple} arrow. Thus the unitary rounding step in \cref{line:unitary-rounding} takes it to a unitary $\innerroundedu$ at most $3\eps$ away. Finally, the circuit synthesis step in \cref{line:circuit synthesis} introduces an error of $\eps$ and we arrive at $\innercircuitu$. \\
    \textbf{Third Layer:} $\roundedu$ is at most $\eps$-far from $\roundedu$, and replacing the $\leftgaussian$ and $\rightgaussian$ in $\roundedu$ with $\estleftgaussian$ and $\estrightgaussian$ introduces an error of $\eps$. Finally, from the second layer, $\innerroundedu$ is at most $7\eps$-far from $\innercircuitu$, concluding the proof.}
    \label{fig:hybrid-argument}
\end{figure}

\section*{Acknowledgements}
The author thanks Scott Aaronson, Jeffrey Champion, Gautam Chandrasekaran, Siddhartha Jain, Harris Junseo Lee, Daniel Liang, Geoffrey Mon, Ojas Parekh, Kevin Thompson, and Brian-Tinh Duc Vu for helpful conversations. Special thanks to Antonio Anna Mele for suggesting the problem to the author, for helpful conversations, and for providing useful feedback. The author is funded by an NSF Graduate Research Fellowship. This work was done in part while the author was visiting the Simons Institute for the Theory of Computing.

\bibliographystyle{alphaurl}
\bibliography{refs}

\appendix

\end{document}